\title{Law of large numbers and central limit theorem for ergodic quantum processes}
\author{Lubashan Pathirana}
\email{pathiran@msu.edu}
\author{Jeffrey Schenker}
\email{schenke6@msu.edu}
\address{Department of Mathematics, Michigan State University, 619 Red Cedar Road, East Lansing MI, 48823}
\begin{document}

\pagenumbering{arabic}
\lhead{\thepage}
\maketitle
\begin{abstract}
    A discrete quantum process is represented by a sequence of quantum operations, which are completely positive maps that are not necessarily trace preserving. We consider quantum processes that are obtained by repeated iterations of a quantum operation with noise. Such ergodic quantum processes generalize independent quantum processes. An ergodic theorem describing convergence to equilibrium for a general class of such processes was recently obtained by Movassagh and Schenker in \cite{movassagh2022ergodic, movassagh2021theory}. Under irreducibility and mixing conditions we obtain a central limit type theorem describing fluctuations around the ergodic limit.
\end{abstract}


\section{Introduction and main results}
A \emph{quantum channel} (QC) is a linear, completely positive, and trace preserving map on the trace class operators, where the state of the system is represented by a non-negative operator of trace one \textemdash \ a \emph{density matrix}.  Such maps can describe the evolution of an open quantum system over a discrete unit of time, including averaged effects of measurements and environmental noise.  More generally, one introduces \emph{quantum operations}(QOs) \textemdash \ completely positive and trace non-increasing maps \textemdash \ to describe processes with loss or which happen only with a certain probability. A \emph{quantum process} is a sequence of QOs describing the evolution of the system over a consecutive sequence of time intervals.  Quantum processes represent the most general description of the average evolution of an open quantum system neglecting memory effects in the environment.

In a pair of recent papers \cite{movassagh2022ergodic,movassagh2021theory}, Movassagh and the second author formulated the notion of an \emph{ergodic quantum process} in which the individual QOs are obtained by sampling a QO valued function along a trajectory of an ergodic dynamical system.  For processes on a   finite dimensional Hilbert space and satisfying a physically natural decoherence condition, they proved convergence of the density matrix to a stationary, ergodic sequence of density matrices as time goes to infinity.  This theorem of \cite{movassagh2021theory} generalizes a result of Hennion \cite{hennion1997limit} on products of non-negative random matrices and is closely related Oseledec's multiplicative ergodic theorem \cite{oseledec1968multiplicative}. 

The results of \cite{movassagh2021theory} require essentially only decoherence and ergodicity.  In the present paper, we examine processes that satisfy stronger integrability and mixing conditions. We prove a law of large numbers and a central limit theorem for the expectation values of observables in states evolving under such a processes.  Although our main interest is in the application of these results to quantum processes, the results themselves do not require the maps to be trace non-increasing and require only \emph{positivity} (not complete positivity).  

This paper is organized as follows:
\begin{enumerate}
    \item In \S\ref{sec:intro} we state our main results after formulating certain background notions.
    \item In \S\ref{sec:background}, review some definitions and arguments from \cite{movassagh2022ergodic} that are fundamental to the proofs of our main results.
    \item In \S\ref{sec:lln}, we prove Theorem \ref{lln}.
    \item In \S\ref{sec:proofCLT}, we prove Theorem \ref{CLT}.
    \item In \S\ref{sec:mixing}, we prove Theorem \ref{customthm3}, which gives sufficient conditions for the main hypothesis of Theorem \ref{CLT} to hold.
\end{enumerate}

\subsection*{Acknowledgements} This material is based upon work supported by the National Science Foundation under Grants No. 1900015 and 2153946.

\section{Formal statement of the main results}\label{sec:intro}

\subsection{Positive Linear Maps}
 Let $\mathbb{M}_D = \mathbb{C}^{D\times D}$ denote the space of $D\times D$ matrices.    We consider the space $\mathbb{M}_D$ with its standard topology as a finite-dimensional vector space.  For definiteness, we take this to be the norm topology generated by the \emph{trace norm}, $\norm{A} := \text{Tr}\sqrt{(A^*A)}$ for any $A \in \mathbb{M}_D$, but of course the topology is independent of the norm (since $\mathbb{M}_D$ is finite dimensional). For any matrix $A \in \mathbb{M}_D$ we denote by $A^*$ the adjoint matrix (conjugate transpose).
 
 The space of linear operators on $\mathbb{M}^D$ will be denoted by $\mapspace$. We equip the space $\mapspace$ with the operator norm induced by the trace norm on $\mathbb{M}_D$. That is, for $\phi \in \mapspace$:
     \begin{equation}\label{normForSuperOperators}
        \norm{\phi} = \sup \{\norm{\phi(A)} : A \in \mathbb{M}_D, ||A|| = 1\}	\ .
     \end{equation}
For any $\phi \in \mapspace$ the adjoint of $\phi$ is the unique map $\phi^* \in \mapspace$ determined by the identity:
     \begin{equation}
        \inner{A}{\phi(B)} = \inner{\phi^*(A)}{B}\text{ for all } A,B \in \mathbb{M}_D \ ,
    \end{equation}
where $\inner{A}{B}$ denotes the Hilbert-Schmidt inner product,
\begin{equation}
    \inner{A}{B} = \tr A^* B \ .
\end{equation}
	 
 We recall that a map $\phi\in \mapspace$ is \emph{positive}, if it maps the set of positive semi-definite matrices to itself.  It is convenient to introduce notation for certain subsets of positive semi-definite matrices as follows:
 \begin{enumerate}
    \item $\pos$ is the set of all positive semi-definite $D\times D$ matrices,
    \item $\pos^0$ is the set of all positive definite $D\times D$ matrices,
    \item $\mathbb{S}_D$ is the set of positive semi-definite $D\times D$ matrices with trace one, and
    \item $\mathbb{S}_D^o$ is the set of positive definite $D\times D$ matrices with trace one. 
 \end{enumerate}
 The subset $\mathbb{S}_D$, being bounded and closed, is compact by the Heine-Borel theorem. Note that $\phi$ is positive if and only if $\phi(\mathbb{S}_D)\subset \pos$. We call $\phi$ \emph{strictly positive} if $\phi(\mathbb{S}_D)\subset \pos^0$.
	 	 
 Positive maps satisfy a generalization of the Perron-Frobenius Theorem (see \cite{kreinLinearOperatorsLeaving1950,evans1978spectral}): every such map $\phi$ has an eigenmatrix $R\in \mathbb{S}_D$ with eigenvalue equal to the spectral-radius $r(\phi)$.   The map $\phi$ is called \emph{irreducible} if $(\mb{1}+\phi)^{n}$ is strictly positive for some $n$.\footnote{Equivalently, no \emph{hereditary sub-algebra},  $P\bb{M}_DP$ with $P$ an orthogonal projection, is invariant under $\phi$.  See \cite{evans1978spectral}.} By \cite[Theorems 2.3 \& 2.4]{evans1978spectral} we have the following
 \begin{prop} If $\phi$ is an irreducible positive map, then there is a unique $R \in \mathbb{S}_D$ such that $\phi(R)= \Lambda R$ for some $\Lambda \in \mathbb{C}$.  Furthermore, the eigen-matrix $R$ is non-singular ($R\in \mathbb{S}_D^\circ$) and the eigenvalue $\Lambda=r(\phi) >0$ is the spectral radius of $\phi$.
 \end{prop}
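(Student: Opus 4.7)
The plan is to establish four properties of the Perron eigenpair in sequence: existence of some $R \in \mathbb{S}_D$ with $\phi(R) = r(\phi)\,R$, non-singularity of any such $R$, strict positivity of the eigenvalue, and uniqueness of both the eigenvalue and the eigenmatrix among all complex solutions of $\phi(R) = \Lambda R$ with $R \in \mathbb{S}_D$.  For existence I would simply invoke the cone Perron–Frobenius theorem for positive maps cited just above the proposition from \cite{kreinLinearOperatorsLeaving1950,evans1978spectral}, which produces some $R_0 \in \mathbb{S}_D$ with $\phi(R_0) = r(\phi)\,R_0$ and $r(\phi) \geq 0$ from positivity alone, without yet using irreducibility.

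Next I would bring in irreducibility.  Choosing $n$ such that $(\mb{1}+\phi)^n$ is strictly positive and applying it to $R_0$ yields
\begin{equation*}
(1 + r(\phi))^n R_0 \;=\; (\mb{1}+\phi)^n R_0 \;\in\; \pos^0,
\end{equation*}
so $R_0 = (1+r(\phi))^{-n}(\mb{1}+\phi)^n R_0 \in \pos^0$, i.e., $R_0 \in \mathbb{S}_D^\circ$.  To rule out $r(\phi) = 0$, I would note that in that case $\phi(R_0)=0$ combined with $R_0 \geq c\,\mb{1}$ for some $c > 0$ forces $\phi(\mb{1}) = 0$, and then the domination $A \leq \lambda_{\max}(A)\,\mb{1}$ for all $A \in \pos$ gives $\phi \equiv 0$; but then $(\mb{1}+\phi)^n = \mb{1}$ is not strictly positive because it fixes every rank-deficient density matrix, contradicting irreducibility.

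For the uniqueness statements I would pass to the adjoint $\phi^*$ with respect to the Hilbert–Schmidt pairing, which is again positive and, after verifying that strict positivity is preserved under the adjoint via the identity $\langle \psi,\, T^*(A)\psi\rangle = \inner{T(|\psi\rangle\langle\psi|)}{A}$ together with $\tr(XY) > 0$ for $X \in \pos^0$ and $Y \in \pos\setminus\{0\}$, also irreducible with the same spectral radius $r(\phi^*) = r(\phi)$.  Applying the previous steps to $\phi^*$ produces $L \in \mathbb{S}_D^\circ$ with $\phi^*(L) = r(\phi)\,L$.  For any eigenpair $\phi(R) = \Lambda R$ with $R \in \mathbb{S}_D$ and $\Lambda \in \mathbb{C}$, positivity of $\phi$ forces $\Lambda \in [0,\infty)$, and computing
\begin{equation*}
\Lambda\, \inner{L}{R} \;=\; \inner{L}{\phi(R)} \;=\; \inner{\phi^*(L)}{R} \;=\; r(\phi)\, \inner{L}{R}
\end{equation*}
together with $\inner{L}{R} = \tr(LR) > 0$ (since $L \in \pos^0$ and $R \in \pos\setminus\{0\}$) forces $\Lambda = r(\phi)$.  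Finally, to obtain uniqueness of $R$ itself, I would set $t_\ast := \sup\{t \geq 0 : R_0 - tR \in \pos\}$, which is finite and positive since both matrices lie in $\mathbb{S}_D^\circ$; the residual $R_0 - t_\ast R$ is a positive semi-definite eigenmatrix of $\phi$ at $r(\phi)$ that is necessarily singular, so if it were nonzero the non-singularity step applied to it would give a contradiction, yielding $R_0 = t_\ast R$ and then $R_0 = R$ upon matching traces.

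The main obstacle is the adjoint step: it is essential that strict positivity is preserved under the Hilbert–Schmidt adjoint, which is a genuine use of the inner-product structure rather than a purely cone-theoretic fact, and it is what lets one rule out eigenvalues other than $r(\phi)$ by a pairing argument rather than by a direct spectral computation.  Everything else is a fairly direct interplay between the cone Perron–Frobenius theorem and the regularizing effect of the strictly positive iterate $(\mb{1}+\phi)^n$.
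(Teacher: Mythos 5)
The paper does not actually prove this proposition; it delegates entirely to the cited result \cite[Theorems 2.3 \& 2.4]{evans1978spectral}, so there is no in-paper argument to compare against. That said, your blind proof is a correct, self-contained reconstruction of the standard Perron--Frobenius argument for irreducible positive maps, modulo the cone Perron--Frobenius existence theorem which you (reasonably) also cite. The structure — obtain a nonnegative eigenpair at $r(\phi)$ from the cone theorem, bootstrap non-singularity via the strictly positive power of $\mb{1}+\phi$, rule out $r(\phi)=0$ by forcing $\phi\equiv 0$, build the left eigenmatrix by verifying that the Hilbert--Schmidt adjoint preserves strict positivity, pin down the eigenvalue by pairing against $L$, and pin down the eigenmatrix by the extremal-scalar (``shooting'') argument — is exactly the chain of lemmas that goes into the cited theorems, and each step is justified soundly. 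Two small remarks: first, the step where you rule out $r(\phi)=0$ silently uses $D\ge 2$ (for $D=1$ the identity is strictly positive in the paper's sense, and the proposition as stated would actually fail for the zero map on $\mathbb{M}_1$), but this is a degenerate case the paper does not contemplate. Second, your observation that preservation of strict positivity under the adjoint is a genuine use of the inner-product structure is well taken, and it is precisely the content one needs to conclude $\phi^*$ is irreducible from $\phi$ irreducible; it is worth noting this is equivalent to the more cone-theoretic fact that the dual cone of $\pos$ under the Hilbert--Schmidt pairing is again $\pos$, so the argument is not as special to the matrix setting as it might first appear.
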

We call the unique eigenmatrix $R\in \mathbb{S}_D$ of an irreducible map $\phi$ the \emph{right Perron-Frobenius eigenmatrix of $\phi$}.  The map $\phi$ also has a \emph{left Perron-Frobenius eigenmatrix}, which is the Perron-Frobenius eigenmatrix of $\phi^*$.  (Note that $\phi$ is irreducible if and only if $\phi^*$ is.)  

The Perron-Frobenius eigenmatrix $R$ of an irreducible map $\phi$ may be interpreted as a fixed point of the \emph{projective action of $\phi$ }:
\begin{equation}
    \phi\cdot X = \dfrac{\phi(X)}{\tr \phi(X)}  \ . 
\end{equation}
For a general map, the projective action is defined for $X\in\mathbb{S}_D \setminus \ker \phi$.  However, if $\ker \phi \cap \mathbb{S}_D=\emptyset$ then the projective action is defined on all of $\mathbb{S}_D$.  As this condition will play a key role in our analysis, we make the following
\begin{definition} 
    A positive linear map $\phi\in \mathcal{L}(\mathbb{M}_D)$ is \emph{non-destructive} if $\ker \phi \cap \mathbb{S}_D=\emptyset$.  If $\phi^*$ is non-destructive, we say that $\phi$ is \emph{non-transient}.
\end{definition}

The terminology \emph{non-transient} stems from the fact that if $\rho\in \ker\phi^*\cap \mathbb{S}_D$ and $P$ is the projection onto $\ran \rho$, then $\phi^*(P)=0$ and $\ran \phi$ is contained in the hereditary sub-algebra $P^\perp \mathbb{M}_DP^\perp$ where $P^\perp=I-P$.  Thus the subspace corresponding to $\ran P$ is a ``transient subspace'' for $\phi$. 

A sufficient condition for $\phi$ to be non-destructive \emph{and} non-transient is that $\phi^n$ be strictly positive for some $n>0$.  This condition is, in turn, equivalent to $\phi$ being \emph{irreducible and aperiodic}, i.e., irreducible and having no eigenvalues on the circle $\{|z|=r(\phi)\}$ except for the Perron-Frobenius eigenvalue.

\subsection{Limiting results for eigenmatrices of ergodic quantum processes}
As in \cite{movassagh2022ergodic}, we are interested in sequences $\Phi^{(n)}$ such that
\begin{equation}
     \Phi^{(n)} = \phi_n \circ\ldots \phi_1 \quad \text{ with } \quad \phi_n=\phi_{0;\theta^n\omega} \ ,
\end{equation}
where $\omega \mapsto \phi_{0;\omega}$ is a positive map valued random variable defined on a probability space $(\Omega,\mathcal{F},\prob)$ and $\theta:\Omega \to \Omega$ is an ergodic map. We recall that a measurable map $\theta:\Omega\to \Omega$  is
\begin{enumerate} 
    \item \textit{measure preserving}  if $\prob(\theta^{-1}(A)) = \prob(A)$ for all $A \in \mathcal{F}$, and
    \item \emph{ergodic} if it  is measure preserving and $\prob(A)=0$ or $1$ whenever $\theta^{-1}(A) = A$.
\end{enumerate}
We further recall that either of the following two conditions is sufficient for a measure preserving map $\theta$ to be ergodic:
\begin{enumerate}
    \item \emph{essentially $\theta$-invariant sets have measure $0$ or $1$}, i.e., $\prob(A)=0$ or $1$ whenever $A\in \mathcal {F}$ with $\prob(A\Delta\theta^{-1}(A))=0$.
    \item \emph{essentially $\theta$-invariant functions are almost surely constant}, i.e., if $f\circ \theta =f$ almost surely, then there is $c\in \mathbb{R}$ such that $f=c$ almost surely.    
\end{enumerate}
See \cite{walters2000introduction} for proofs of these facts and further discussion of ergodic maps.	
	
Now fix a probability space $(\Omega,\mathcal{F},\prob)$ and an ergodic map $\theta:\Omega\to \Omega$.  For a random variable $X:\Omega\to \mathcal{S}$, with $\mathcal{S}$ some measurable space, we denote the value of $X$ at $\omega\in \Omega$ by $X_\omega$, and will often omit $\omega$ from the notation for simplicity. This subscript notation is convenient as we consider map valued random variables which take a matrix as an argument. Let $\varphi_0 :\Omega \to \mapspace$ be a positive map valued random variable, where we take the Borel $\sigma$-algebra on $\mapspace$. For each $n \in \mathbb{N}$, define $\varphi_{n;\omega}  = \varphi_{0;\theta^n(\omega)}$.  
Let 
\begin{equation}\label{Phi}
    \Phi^{(n)}_\omega = \varphi_{n;\omega} \circ \varphi_{n-1;\omega} \circ \cdots \varphi_{1;\omega}.
\end{equation}
For $k \geq 0$, we have 
\begin{equation}
    \Phi_{\theta^k(\omega)}^{(n)}  = \varphi_{n;\theta^k(\omega)} \circ \ldots \varphi_{1,\theta^k(\omega)} = \varphi_{n+k;\omega}\circ \cdots \varphi_{1+k;\omega} \ ;
\end{equation}
as above we may omit $\omega$ from the notation and simply write this as $\Phi_{\theta^k}^{(n)} = \varphi_{n+k}\circ \cdots \varphi_{1+k} $. 

In the present work, we study sequences $\Phi^{(n)}$ with the property that $\Phi^{(n)}$ is eventually strictly positive.  We denote by $\tau_\omega$ the time at which $\Phi^{(n)}_\omega$ becomes strictly positive  and stays strictly positive thereafter:
\begin{equation}\label{tau}
    \tau_\omega  = \inf \{n \geq 1 : \Phi_\omega ^{(n+k)} \text{ is strictly positive } \forall k \geq 0\} \ .
\end{equation} 
Our first assumption is that $\tau<\infty$ almost surely:
\begin{assumption}{1}\label{1} 
    We have $\prob\{\tau < \infty\} = 1$, i.e.,  the sequence $\Phi^{(n)}$ is \emph{almost surely eventually strictly positive}.
\end{assumption}

Assumption \ref{1} was also the main assumption of \cite{movassagh2022ergodic}, where it was shown to be equivalent to the following two conditions provided that  $\theta$ is invertible (see \cite[Lemma 2.1]{movassagh2022ergodic}):
\begin{enumerate}
    \item there exists $N_0\in\mathbb{N}$ such that $\prob(\Phi^{(N_0)} \text{ is strictly positive)} >0$, and
    \item $\prob \{ \varphi_0 \text{ is non-destructive and non-transient } \} = 1$. 
\end{enumerate}
One consequence of this equivalence is that, if $\theta$ is invertible and Assumption \ref{1} holds, then $\tau$ can be expressed as
\begin{equation} \label{eq:vartau}
    \tau = \inf \{n \ge 1 : \Phi^{(n)} \text{ is strictly positive } \} \ .
\end{equation}
In particular, $\tau$ is then a \emph{stopping time} with respect to the filtration $(\mc{F}_n)_{n=0}^\infty$ where $\mc{F}_n$ denotes the $\sigma$-algebra generated by $\phi_0,\ldots,\phi_n$.

Since any strictly positive map is irreducible, Assumption \ref{1} guarantees that the left and right Perron-Frobenius eigenmatrices, $R_n$ and $L_n$, exist for for sufficiently large $n$:
    \begin{equation}\label{LAndR}
        \Phi^{(n)}(R_n) = \Lambda_n R_n \quad \text{ and } \quad \Phi^{(n)*}(L_n) = \Lambda_n L_n \ .
    \end{equation}
Here $\Lambda_n=\Lambda_{n;\omega}$ denotes the spectral radius of $\Phi^{(n)}$ and $L_n$, $R_n$ are $\mathbb{S}_D^\circ$ valued random variables, i.e., they are $D\times D$ positive definite matrix valued random variables with $\tr R_n=\tr L_n=1$. We have the following 
	
\begin{lemma}[{\cite[Theorem 1]{movassagh2022ergodic}}]\label{existanceofz}
Let $(\varphi_n)_{n\geq1}$ and $\Phi^{(n)}$ be as in eq.\ \eqref{Phi} and let $L_n$ be as in eq.\ \eqref{LAndR}. If Assumption \ref{1} holds, then there is an $\mathbb{S}_D^\circ$ valued random variable $Z_1'$ such that
\begin{equation} 
    Z_1 \ \overset{a.s.}= \ \lim_{n\to \infty} L_n \ 
\end{equation}
and, with $Z_k:= Z_1\circ \theta^{k-1}$, we have  for every $k\in \mathbb{N}$, $\varphi_k^*\cdot Z_{k+1}=Z_k$ a.s..
\end{lemma}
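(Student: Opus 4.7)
The plan is to establish convergence of $L_n$ via a Birkhoff (Hilbert projective) metric contraction argument on the cone $\pos^0$ of positive definite matrices. Let $d_H$ denote the Birkhoff metric on $\pos^0$; by Birkhoff's theorem, every strictly positive $\phi\in\mapspace$ is a strict contraction of $(\pos^0,d_H)$, with contraction ratio $\tanh(\Delta_\phi/4)$, where $\Delta_\phi:=\mathrm{diam}_{d_H}(\phi(\pos^0))$ is the projective diameter of its image. A preliminary observation: $\phi$ is strictly positive iff $\phi^*$ is, since for $P\in\pos\setminus\{0\}$ and a unit vector $v$, $v^*\phi^*(P)v=\tr(P\phi(vv^*))\geq c\,\tr(P)>0$ whenever $\phi(vv^*)\geq cI>0$. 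Thus, under Assumption~\ref{1}, $\Phi^{(n)*}$ is eventually strictly positive almost surely, and $L_n\in\mathbb{S}_D^\circ$ is the unique fixed point of the projective action of $\Phi^{(n)*}$.

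The central reduction is the semigroup identity $\Phi^{(n)*} = \Phi^{(k)*}\circ(\varphi_{k+1}^*\circ\cdots\circ\varphi_n^*)$ valid for all $n\geq k\geq\tau$. From it, $L_n=\Phi^{(k)*}\cdot Y_{k,n}$ with $Y_{k,n}:=(\varphi_{k+1}^*\circ\cdots\circ\varphi_n^*)\cdot L_n\in\mathbb{S}_D$ (the intermediate projective actions are well-defined because strict positivity of $\Phi^{(n)*}$ forces every partial composition applied to $L_n$ to be nonzero). Hence $L_n$ lies in the normalized image $\Phi^{(k)*}\cdot\mathbb{S}_D\subset\mathbb{S}_D^\circ$ for every $n\geq k\geq\tau$, and
\[
    d_H(L_n,L_m)\;\leq\; \Delta_k \;:=\; \mathrm{diam}_{d_H}\bigl(\Phi^{(k)*}(\pos^0)\bigr)\qquad(n,m\geq k).
\]
It therefore suffices to show $\Delta_k\to 0$ almost surely. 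I would deduce this by combining Birkhoff's multiplicative inequality $\Delta(\phi\circ\psi)\leq \tanh(\Delta_\phi/4)\,\Delta_\psi$ with the Birkhoff ergodic theorem: Assumption~\ref{1} implies $\prob(\Phi^{(N_0)}\text{ strictly positive})>0$ for some $N_0$, so ergodicity forces strictly positive blocks $\varphi_{k+1}^*\circ\cdots\circ\varphi_{k+N_0}^*$ to appear with positive asymptotic density. Each such block supplies a contraction factor almost surely bounded away from one, whose cumulative product drives $\Delta_k\to 0$. Once $(L_n)$ is $d_H$-Cauchy, convergence transfers to trace-norm convergence to some $Z_1\in\mathbb{S}_D^\circ$ by compactness of bounded $d_H$-balls inside $\mathbb{S}_D^\circ$.

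Setting $Z_k:=Z_1\circ\theta^{k-1}$, the identity $\varphi_k^*\cdot Z_{k+1}=Z_k$ follows by passing to the limit in the Perron--Frobenius equation. Writing $\Phi^{(n+1)*}_{\theta^{k-1}\omega}=\varphi_{k,\omega}^*\circ\Phi^{(n)*}_{\theta^k\omega}$, the eigenvalue relation becomes
\[
    L_{n+1,\theta^{k-1}\omega} \;=\; \varphi_{k,\omega}^*\cdot\bigl(\Phi^{(n)*}_{\theta^k\omega}\cdot L_{n+1,\theta^{k-1}\omega}\bigr).
\]
As $n\to\infty$, the left-hand side converges to $Z_{k,\omega}$; on the right, the same diameter bound applied at the shifted sample point $\theta^k\omega$ forces $\Phi^{(n)*}_{\theta^k\omega}\cdot X\to Z_{k+1,\omega}$ for every $X\in\mathbb{S}_D^\circ$, and continuity of $\varphi_{k,\omega}^*\cdot$ at $Z_{k+1,\omega}$ yields the cocycle identity almost surely.

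The main obstacle is the step $\Delta_k\to 0$. Since the individual adjoints $\varphi_j^*$ need not be strictly positive, Birkhoff's per-factor contraction estimate is vacuous term by term; one has to group factors into strictly positive blocks, whose existence is guaranteed only through the ergodic structure of the process, and show that the per-block contraction factors are almost surely uniformly bounded away from one so that their infinite product vanishes. This ergodic-theoretic control is the delicate input to the argument; once it is in hand, the remainder of the proof follows from Birkhoff's theorem, Perron--Frobenius, and continuity.
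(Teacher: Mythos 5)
Your proposal tracks the paper's projective-contraction strategy in outline, but the route you sketch to close the key step $\Delta_k\to 0$ contains a genuine gap. The paper (following \cite{movassagh2022ergodic}) works with the bounded metric of eq.~\eqref{distance} (which is $\tanh(d_H/2)$ in terms of your Hilbert metric) and establishes the decay of the contraction coefficient $c(\Phi^{(n)})$ in one stroke via Kingman's \emph{subadditive} ergodic theorem (Lemma~\ref{kappaIsConstant}), obtaining the sharper statement $\frac{1}{n}\ln c(\Phi^{(n)})\to\ln\kappa<0$ almost surely. That exponential rate is more than this lemma needs, but the paper uses it later (Lemma~\ref{shiftedSubaditiveErgodic}, Lemma~\ref{sqrtEn}), so Kingman is the natural tool.

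Your alternative via Birkhoff's ergodic theorem and block decomposition is legitimate in spirit, but what you propose to verify --- that ``the per-block contraction factors are almost surely uniformly bounded away from one'' --- is false under Assumption~\ref{1} alone. The projective diameter of the image of a strictly positive block $\varphi_{k+1}^*\circ\cdots\circ\varphi_{k+N_0}^*$ is a finite but typically \emph{unbounded} random variable, so the factors $\tanh(\Delta/4)$ will cluster arbitrarily near $1$ along the sequence of blocks, and an infinite product of factors approaching $1$ need not vanish (compare $\prod_n(1-n^{-2})>0$). What one actually needs, and what Birkhoff's theorem does deliver, is a \emph{positive asymptotic density} of blocks whose contraction factor is below some fixed $1-\epsilon$: choose $C<\infty$ with $\prob(\Delta_{\mathrm{block}}\le C)>0$ (possible because $\prob(\Phi^{(N_0)}\text{ strictly positive})>0$), use Birkhoff to get a linearly growing count $l(k)$ of such ``good'' blocks among the first $\lfloor k/N_0\rfloor$, and iterate the multiplicative inequality only over those, discarding the rest --- which is permissible since positive maps do not increase projective diameter --- to get $\Delta_k\le\tanh(C/4)^{\,l(k)-1}\,C\to 0$. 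With this repair your argument closes; the remaining steps of your proposal --- the semigroup identity nesting $L_n$ in the images $\Phi^{(k)*}\cdot\mathbb{S}_D$, completeness, and the limit of the Perron--Frobenius relation giving $\varphi_k^*\cdot Z_{k+1}=Z_k$ --- are correct and parallel the paper's machinery in Lemma~\ref{lem:quantz}.
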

\begin{remark} This is half of \cite[Theorem 1]{movassagh2022ergodic}.  The other half involves the convergence of the right eigenvectors and requires invertibility of the ergodic map $\theta$.  A close reading of the proof (see \cite[Lemma 3.12]{movassagh2022ergodic}) shows that invertibility of $\theta$ is not necessary for the portion stated here.
\end{remark}

 \subsection{Law of Large Numbers}
Our first main result is concerned with expectations of the form $\inner{Y}{\Phi^{(n)}(X)}$ with $X,Y\in \mathbb{S}_D$.  The main idea here is that for large $n$,  the Perron-Frobenius eigenvalue $\Lambda_n$ of $\Phi^{(n)}$ typically exhibits exponential growth or decay and dominates the expression, so that we expect 
\begin{equation} \inner{Y}{\Phi_n(X)} \ \approx \ \Lambda_n \frac{\inner{Y}{R_n}\inner{L_n}{X}}{\inner{L_n}{R_n}} \ + \ \text{lower order terms},\label{eq:heuristic1} \end{equation}
where $L_n$ and $R_n$ are the left and right Perron-Frobenius eigenmatrices, respectively, normalized so that $\tr L_n=\tr R_n=1$.  
Under Assumption \ref{1}, $L_n$ and $R_n$ are positive definite, so $\inner{Y}{R_n}\inner{L_n}{X}\neq 0$ and eq.\ \eqref{eq:heuristic1} suggests that
$$ \ln \inner{Y}{\Phi^{(n)}(X)} \ \approx  \ \ln\Lambda_n  \ + \ O(1). $$
Thus we expect a Law of Large Numbers, $\frac{1}{n}\ln \inner{Y}{\Phi^{(n)}(X)} \to l$, where $l=\lim_n\frac{1}{n}\ln \Lambda_n$.

To obtain this Law of Large Numbers, we require an integrability assumption for $\ln \|\varphi_0^*\|$ and for $\ln v(\varphi_0^*)$, where for $\phi \in \mapspace$ we define
\begin{equation}\label{eq:v}
    v(\phi) \ := \ \inf \{ \| \phi(X)\| : X \in \mathbb{S}_D\} \ . 
\end{equation}
\begin{assumption}{2} \label{2}
    We have $\avg[|  \ln \|\varphi_0^*\|  |] < \infty$ and $\avg[|\ln v(\varphi_0^*)  |] < \infty$\ . 
\end{assumption}
\begin{remark} 
    We note that any non-destructive map $\phi$ (in particular, any strictly positive map) must have $v(\phi) > 0$ because $\mathbb{S}_D$ is a compact set and the map $A \mapsto \norm{\phi(A)}$ is continuous.
\end{remark}

With Assumptions \ref{1} and \ref{2} we have the following
\begin{customthm}{1 - Law of Large Numbers}\label{lln}
    Let $\Phi^{(n)}$ be a random sequence of positive maps as in eq.\ \eqref{Phi}.  If Assumptions \ref{1} and \ref{2} hold then
        \begin{equation}\label{eq:firstpart}
            \lim_{n \to \infty} \ \sup_{X,Y \in \mathbb{S}_D} \Big|\frac{1}{n}\ln \inner{Y}{\Phi^{(n)}(X)} - l\Big| \ = \ 0 \quad \text{a.s.} ,
        \end{equation}	
    where $l = \avg[\ln \|\varphi_0^*(Z_1)\|]$ with $Z_1 =\lim_n L_n$. Furthermore
        \begin{equation}\label{eq:secondpart}
            \lim_{n \to \infty} \frac{1}{n}\ln \norm{\Phi^{(n)}} \ = \ \lim_{n\to\infty} \frac{1}{n} \ln \Lambda_n = l  \quad  \text{a.s.} \ , 
        \end{equation}
    with $\Lambda_n$ the Perron-Frobenius eigenvalue of $\Phi^{(n)}$. 
\end{customthm}
\begin{remark} 
    We take $\ln \inner{Y}{\Phi^{(n)}(X)}=-\infty$ if $\inner{Y}{\Phi^{(n)}(X)}=0$; by Assumption \ref{1} this happens for at most finitely many $n$.
    By Assumption \ref{2}, $l = \avg[\ln \|\varphi_0^*(Z_1)\|] $ is finite.
\end{remark} 

Theorem \ref{lln} is closely related in spirit to the Furstenberg-Kesten theorem \cite{furstenberg1960} and Oseledet's Theorem \cite{oseledec1968multiplicative} (see also \cite{goldsheid1989}). By the Furstenberg-Kesten Theorem, the following limit exsists
$$\lim_{n\to \infty} \frac{1}{n} \ln  \norm{\Phi^{(n)}} \quad \text{a.s.} \ = \ \lambda \quad{a.s.} , $$
where $\lambda$ is a deterministic quantity called the \emph{top Lyapunov exponent}  of the cocycle  $(X,n) \mapsto \Phi^{(n)}(X)$.  By Oseledet's Theorem, there is a (random) proper subspace $L \subset \mathbb{M}_D $ such that for $X\in \mathbb{M}_D \setminus L_{j+1}$ we have
$$\lim_{n\to \infty} \frac{1}{n} \ln \norm{\Phi^{(n)}(X)} \ = \ \lambda \ . $$

The identity eq.\ \eqref{eq:firstpart} is the key result in Theorem \ref{lln}.  Indeed, since $\Lambda_n = \inner{L_n}{\Phi^{(n)}(\mathbb{I})}$ it follows directly from eq.\ \eqref{eq:firstpart} that $l= \lim_n \frac{1}{n}\ln \Lambda_n$ almost surely. Furthermore, as the proof of eq.\ \eqref{eq:firstpart} will make clear (see Lemma \ref{En&Dn}), we also have $\lim_n \tfrac{1}{n}\ln \norm{\Phi^{(n)*}(Y)} \ = \ l$ a.s. for any $Y\in \mathbb{S}_D$.  Since $\operatorname{span} \mathbb{S}_D =\mathbb{M}_D$, it follows from Oseledet's Theorem that $l=\lambda$, the top Lyapunov exponent, and thus that $l= \lim_n \frac{1}{n} \ln \norm{\Phi^{(n)}}$.  Therefore eq.\ \eqref{eq:secondpart} is a consequence of eq.\ \eqref{eq:firstpart}. Thus to prove Theorem \ref{lln} it suffices to prove eq.\ \eqref{eq:firstpart}.  This is accomplished in \S\ref{sec:lln} below.

\subsection{Central Limit Theorem}\label{sec:CLT}
Our second main result is a central limit theorem for the fluctuations of $\ln \inner{Y}{\Phi^{(n)}(X)}$ around its asymptotic value $nl$.  For this result we require additional integrability for $\ln \|\varphi_{0;\omega}^*\|$ and $\ln v(\phi_{0;\omega}^*)$:
\begin{assumption}{2$_p$} \label{2'}
    For $p> 1$, the random variables $ \ln \|\varphi_{0;\omega}^*\|$ and $\ln v(\varphi_{0;\omega}^*)$ are in $L^p$ . 
\end{assumption}
		
To obtain a central limit theorem, we require the ergodic map $\theta$ to be invertible, and extend the definition of $\varphi_k$ to $k< 0$ by $\varphi_{k;\omega}=\varphi_{0;\theta^k\omega}$, just as for $k\ge 0$. Similarly we define $Z_{k;\omega}=Z_{1;\theta^{k-1}\omega}$ for $k\le 0$.   
The key quantities that describe the fluctuations are the deviations of $\ln \norm{\varphi_k^*(Z_{k+1})}$ from its mean:
\begin{equation}\label{eq:xi}
    \xi_k  \ := \ \ln \norm{\varphi_k^*(Z_{k+1})} - l \ \ , 
\end{equation}
where $l$ is as in Theorem \ref{lln}.  We also introduce the following reverse filtration $(\mathcal{F}^n)_{n\in \mathbb{Z}}$ on the probability space:  
\begin{equation}\label{eq:F^n} 
    \mathcal{F}^n \ :=  \ \text{ sigma algebra generated by } (\varphi_k)_{k\geq n} \ .
\end{equation} 

With these preliminaries, we have the following
 \begin{customthm}{2 - Central Limit Theorem}\label{CLT}
    Let $\Phi^{(n)}$ be a random sequence of positive maps as in eq.\ \eqref{Phi}. Suppose that the ergodic map $\theta$ is invertible, that Assumption \ref{1} holds, and that Assumption \ref{2'} holds for some $p\geq 2$.  
    If 
    \begin{equation}\label{qNormSummable}
        \sum_{n=1}^\infty \norm{\cond{\xi_0}{n}}_q < \infty	
    \end{equation}
    with $1/p + 1/q =1$, then for any sequences $(X_n)_{n\geq1}$ and $(Y_n)_{n\geq1}$ in $\mathbb{S}_n$, the random sequence
    \begin{equation}\label{SequenceInCLT}
        \bigg( \frac{1}{\sqrt{n}}\big( \ln \inner{Y_n}{\Phi^{(n)}(X_n)} - nl\big)\bigg)_{n\geq 1}
    \end{equation}
    converges in distribution to a centered normal random variable with variance
    \begin{equation}\label{sigma}
         \sigma^2 \ := \  \avg\bigg[ \bigg(\sum_{k\geq0} \big( \cond{\xi_{-k}}{0} -\cond{\xi_{-k}}{1}\big)\bigg)^2\bigg] \ \ge \  0 \ .
    \end{equation}
\end{customthm}
\begin{remark}
    The proof will show that $\sigma < \infty$, but we have allowed the possibility that $\sigma =0$.  If $\sigma=0$, the sequence in \ref{SequenceInCLT} converges to $0$ in distribution (and hence in probability). Else, the sequence in \ref{SequenceInCLT} converges to a centered normal law with variance $\sigma^2 >0$.
\end{remark}

We prove Theorem \ref{CLT} in \S\ref{sec:proofCLT} below.

The hypothesis eq.\ \eqref{qNormSummable} of Theorem \ref{CLT} may not be easy to verify directly. We close this section by introducing \emph{mixing conditions} that are sufficient for eq.\ \eqref{qNormSummable} to hold.  Let
\begin{equation}
    \mathcal{F}_n \ := \  \text{ sigma algebra generated by } (\varphi_k)_{k\leq n} \ .
\end{equation}
Note that $(\mathcal{F}_n)_{n\in \mathbb{Z}}$ is a filtration, i.e., $\mathcal{F}_n \subset \mathcal{F}_{n+1}$, while $(\mathcal{F}^n)_{n\in \mathbb{Z}}$ (defined above in eq.\ \eqref{eq:F^n}) is a reverse filtration, i.e., $\mathcal{F}^n \supset \mathcal{F}^{n+1}$.  We introduce the following \emph{mixing coefficients}:		
\begin{align}
    \alpha_n \ &:= \ \sup_{k\geq 0} \ \sup \bigg\{ \big| \prob(A\cap B) - \prob(A)\prob(B) \big| \ : \ A\in\mathcal{F}_k, \  B\in\mathcal{F}^{n+k} \bigg\} \label{alphan}\\
    \rho_n \ &:= \  \sup_{k\geq 0} \  \sup \bigg\{ \bigg| \frac{\avg[(Y - \avg[Y])(X - \avg[X])]}{\sigma(Y)\sigma(X)} \bigg| \ : \ Y\in L^2(\mathcal{F}_k), \  X\in L^2(\mathcal{F}^{n+k}), \ X, Y\neq0 \bigg\} \label{rhon}
\end{align}
We have the following:
\begin{customthm}{3}\label{customthm3}
    If Assumption \ref{2'} holds with $p>2$ and $\sum_{n\geq 1}\alpha_n^{(p-2)/p} < \infty$, then 	
    $$\sum_{n=1}^\infty \norm{\cond{\xi_0}{n}}_{q} < \infty \ , $$
    with $q$ the conjugate exponent to $p$.  If Assumption \ref{2'} holds with $p=2$ and $\sum_{n\geq 1}\rho_n < \infty$, then 
    $$\sum_{n=1}^\infty \norm{\cond{\xi_0}{n}}_2 < \infty$$
\end{customthm}

Theorem \ref{customthm3} is proved in \S\ref{sec:mixing} below.

\section{Background results: geometry of \texorpdfstring{$\mathbb{S}_D$}{lg}, contraction for positive maps, and ergodic arguments}\label{sec:background}
In this section we review some definitions and arguments from \cite{movassagh2022ergodic} that are fundamental to the proofs below.

\subsection{A metric on \texorpdfstring{$\mathbb{S}_D$}{lg}}
Following \cite{movassagh2022ergodic}, we define the following metric on $\mathbb{S}_D$:
\begin{equation}\label{distance}
				\dis{A}{B} \ := \ \frac{1 - m(A,B)m(B,A)}{1+m(A,B)m(B,A)} \ ,
			\end{equation}
where
\begin{equation}\label{eq:m(A,B)}
    m(A,B) \ = \ \sup \{\lambda : \lambda B\leq A\}]
\end{equation}
for $A,B\in \mathbb{S}_D$. The following lemma lists key properties of this metric (see \cite[Lemma 3.3, 3.8, 3.9 ]{movassagh2022ergodic} for further details and proofs):
\begin{lemma}\label{equvilanceOfMatrics} The function $\mathrm{d}$ defined in eq.\ \eqref{distance} is a metric on $\mathbb{S}_D$ satisfying:
\begin{enumerate}
    \item $\frac{1}{2}\norm{A-B}\le \dis{A}{B} \le 1$ for $A,B\in \mathbb{S}_D$.
    \item $\dis{A}{B}<1$ for $A,B \in \mathbb{S}_D^\circ$.
    \item If $A\in \mathbb{S}_d^o$, then $\dis{A}{B}=1$ if and only if $B\in \mathbb{S}_D\setminus \mathbb{S}_D^\circ$.
    \item The set $\mathbb{S}_D^\circ$ is open in the metric topology generated by $\text{d}$ and $(\mathbb{S}_D^\circ,\text{d})$ is homeomorphic to $\mathbb{S}_D^\circ$ in the standard topology (generated by $\mathrm{d}_1(A,B)=\norm{A-B})$.
\end{enumerate}
\end{lemma}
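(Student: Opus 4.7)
The plan is to deduce everything from elementary properties of the functional $m(A,B) = \sup\{\lambda \ge 0 : \lambda B \le A\}$. First I would record three facts. (a) $m(A,A) = 1$ and $m(A,B)\,m(B,A) \le 1$ for $A, B \in \mathbb{S}_D$; the product bound comes from chaining $m(A,B) B \le A$ with $m(B,A) A \le B$ to obtain $m(A,B)\,m(B,A)\, B \le B$ and then taking traces, using $\tr B = 1$. (b) $m(A,B) > 0$ if and only if $\ker A \subseteq \ker B$, so in particular $m(A,B) > 0$ whenever $A \in \mathbb{S}_D^\circ$. (c) On $\mathbb{S}_D^\circ \times \mathbb{S}_D^\circ$ one has the explicit formula $m(A,B) = \lambda_{\min}(B^{-1/2} A B^{-1/2})$, jointly continuous in the standard topology. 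Also useful is the scale invariance $\mathrm{d}(cA,dB) = \mathrm{d}(A,B)$ for $c,d>0$, which follows from $m(cA,dB) = (c/d)\,m(A,B)$.

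The main obstacle is the triangle inequality. I would handle it by introducing the Hilbert projective pseudometric
\[
    d_H(A,B) \ := \ -\log\bigl( m(A,B)\, m(B,A) \bigr) \ \in \ [0,\infty]
\]
together with the multiplicativity $m(A,C) \ge m(A,B)\, m(B,C)$, which is immediate from chaining the two defining inequalities. This yields $d_H(A,C) \le d_H(A,B) + d_H(B,C)$. A direct algebraic check shows $\mathrm{d}(A,B) = \tanh\bigl(d_H(A,B)/2\bigr)$, and since $x \mapsto \tanh(x/2)$ is concave, increasing, and vanishes at $0$ it is subadditive on $[0,\infty)$; this transfers the triangle inequality from $d_H$ to $\mathrm{d}$ whenever all three pairwise $m$-products are positive. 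For triples with some pair having $m$-product zero, either $\mathrm{d}(A,B) + \mathrm{d}(B,C) \ge 1$ and the inequality is trivial (since $\mathrm{d}(A,C) \le 1$), or both $\mathrm{d}(A,B), \mathrm{d}(B,C) < 1$ and (b) forces $\ker A = \ker B = \ker C$, reducing to the positive-definite case on the face of the PSD cone supported on $(\ker A)^\perp$ (which preserves $\mathrm{d}$ by scale invariance). The remaining metric axioms (non-negativity, symmetry, $\mathrm{d}(A,A) = 0$, and the implication $\mathrm{d}(A,B) = 0 \Rightarrow A = B$) are immediate from (a) and its equality case.

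For the numbered properties: property 2 is immediate from (b). Property 3 combines (b) with the observation that $m(B,A) = 0$ whenever $B \in \mathbb{S}_D \setminus \mathbb{S}_D^\circ$ and $A \in \mathbb{S}_D^\circ$ (any $\lambda > 0$ with $\lambda A \le B$ would force $B$ to inherit the full rank of $A$) for the backward implication, with property 2 providing the converse. The upper bound $\mathrm{d} \le 1$ in property 1 is immediate from (a). For the norm lower bound, set $a = m(A,B)$ and $b = m(B,A)$; since $A - aB \ge 0$ has trace norm $1 - a$, we get $\|A - B\| \le \|A - aB\| + (1-a)\|B\| = 2(1-a)$, and by symmetry $\|A - B\| \le 2\bigl(1 - \max(a,b)\bigr)$. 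The chain $\max(a,b) \ge \sqrt{ab} \ge 2ab/(1+ab)$ (the last step is AM-GM applied to the denominator $1+ab \ge 2\sqrt{ab}$) then yields $\|A - B\| \le 2(1-ab)/(1+ab) = 2\mathrm{d}(A,B)$. Finally for property 4, property 3 identifies $\mathbb{S}_D^\circ$ with the open $\mathrm{d}$-ball of radius $1$ about any of its points, so $\mathbb{S}_D^\circ$ is $\mathrm{d}$-open; the bound $\tfrac{1}{2}\|A-B\| \le \mathrm{d}(A,B)$ gives continuity of the identity map $(\mathbb{S}_D^\circ, \mathrm{d}) \to (\mathbb{S}_D^\circ, \mathrm{d}_1)$, and the joint continuity of $m$ from (c) gives continuity of the inverse.
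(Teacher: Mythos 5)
Your proof is correct. Note that the paper itself does not prove this lemma --- it defers to \cite[Lemmas 3.3, 3.8, 3.9]{movassagh2022ergodic} --- so there is no in-paper argument to compare against; your route (recognizing $\mathrm{d}=\tanh(d_H/2)$ for the Hilbert projective pseudometric $d_H=-\log(m(A,B)m(B,A))$, transferring the triangle inequality by subadditivity of $\tanh(x/2)$, and using the kernel characterization of $m(A,B)>0$ together with $\lambda_{\min}(B^{-1/2}AB^{-1/2})$ for continuity) is essentially the standard one underlying the cited reference. The only step a reader might pause on, the degenerate case of the triangle inequality where some $m$-product vanishes, is handled correctly by your dichotomy (sum $\ge 1$, or all kernels coincide so all three $d_H$ are finite), and your derivation of the quantitative bound $\|A-B\|\le 2(1-\max(a,b))\le 2\,\mathrm{d}(A,B)$ via $1+ab\ge 2\sqrt{ab}$ is clean and complete.
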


In the proofs below, the following simple consequence of the lower bound $\frac{1}{2}\norm{A-B}\le \dis{A}{B}$ will be useful.
\begin{lemma}\label{meanValueIneqaulity}
    Let $\phi\in\mapspace$ be a positive map with the property that $\ker{\phi}\cap\mathbb{S}_D = \emptyset$. Then for all $X,Y \in \mathbb{S}_D$; 	
    \begin{equation}\label{eq:mvi}
        \big| \ln\norm{\phi(X)} - \ln\norm{\phi(Y)}\big| \leq 2\frac{\norm{\phi}}{v(\phi)} \dis{X}{Y} \ ,
    \end{equation}
    with $v(\phi)$ as in eq.\ \eqref{eq:v}.
\end{lemma}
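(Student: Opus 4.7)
The plan is to combine three simple ingredients: a mean-value bound for the logarithm, the operator norm bound $\|\phi(X-Y)\|\le \|\phi\|\,\|X-Y\|$, and the lower bound $\|\phi(A)\|\ge v(\phi)$ for $A\in \mathbb{S}_D$, then convert the trace-norm distance $\|X-Y\|$ into the projective distance $\mathrm{d}(X,Y)$ via part (1) of Lemma \ref{equvilanceOfMatrics}.

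First I would note that the hypothesis $\ker\phi\cap\mathbb{S}_D=\emptyset$ (together with compactness of $\mathbb{S}_D$ and continuity of $A\mapsto\|\phi(A)\|$) gives $v(\phi)>0$, so $\|\phi(X)\|,\|\phi(Y)\|\ge v(\phi)>0$ and all logarithms are finite. Next, by the elementary inequality $|\ln a-\ln b|\le |a-b|/\min(a,b)$ valid for $a,b>0$, one has
\begin{equation*}
    \bigl|\ln\|\phi(X)\|-\ln\|\phi(Y)\|\bigr| \ \le\ \frac{\bigl|\|\phi(X)\|-\|\phi(Y)\|\bigr|}{v(\phi)}.
\end{equation*}
Now the reverse triangle inequality for the trace norm together with linearity of $\phi$ gives $\bigl|\|\phi(X)\|-\|\phi(Y)\|\bigr|\le \|\phi(X)-\phi(Y)\|=\|\phi(X-Y)\|\le \|\phi\|\,\|X-Y\|$. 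Finally, part (1) of Lemma \ref{equvilanceOfMatrics} yields $\|X-Y\|\le 2\,\mathrm{d}(X,Y)$. Chaining these bounds produces exactly \eqref{eq:mvi}.

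There is essentially no hard step here; the only thing one must be careful about is ensuring that $v(\phi)>0$ so that the mean-value bound for $\ln$ is applicable, which is precisely what the non-destructiveness hypothesis provides. Everything else is a one-line estimate.
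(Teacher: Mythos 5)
Your proof is correct and follows essentially the same route as the paper's: establish $v(\phi)>0$ from non-destructiveness and compactness of $\mathbb{S}_D$, apply the mean-value bound for $\ln$, bound the norm difference by $\|\phi\|\,\|X-Y\|$, and finish with the inequality $\|X-Y\|\le 2\,\mathrm{d}(X,Y)$ from Lemma \ref{equvilanceOfMatrics}. The only difference is that you spell out the logarithm estimate as $|\ln a-\ln b|\le |a-b|/\min(a,b)$, which is a mild elaboration of the same step.
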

\begin{remark}\label{MVIForAdjoint} 
    For $\phi=\varphi_n^*$, we have $\ker{\phi}\cap\mathbb{S}_D = \emptyset$ with probability one under the Assumption \ref{1}, see \cite[Lemma 2.1]{movassagh2022ergodic}.  Under Assumption \ref{2}, $v(\phi)$ is non-zero with probability $1$ and the right-hand-side of eq.\ \eqref{eq:mvi} is finite almost surely. 
\end{remark}
\begin{proof}
    Let $g:(\mathbb{S}_D,\norm{\cdot}) \to \mathbb{R}$ be defined as $g(X) = \norm{\phi(X)}$. Since $\phi$ is positive with no matrix in $\mathbb{S}_D$ in its kernel we must have that $g(X) > 0$ for all $X \in \mathbb{S}_D$. Since $\mathbb{S}_D$ is compact in the standard topology, we have that 
        \begin{equation}
            v(\phi) = \min\{\norm{\phi(Z)} : Z \in \mathbb{S}_D\}  >  0 \ .
        \end{equation}
    It follows from the mean value inequality, applied to $\ln$, that 
        \begin{equation}
            \big|\ln\norm{\phi(X)} - \ln\norm{\phi(Y)}\big| \leq \frac{\big|\norm{\phi(X)} -\norm{\phi(Y)}\big|}{v(\phi)} \leq \frac{\norm{\phi} \norm{X-Y}}{v(\phi)}
        \end{equation} 
    The results follows from lemma \ref{equvilanceOfMatrics} as $\norm{X-Y} \leq 2\dis{X}{Y}$. 
\end{proof}
	
\subsection{Contraction Coefficient for \texorpdfstring{$\phi$}{}}
For any non-destructive positive map $\phi \in \mapspace$ we define the \textit{contraction coefficient} of $\phi$, denoted $c(\phi)$, as follows:
    \begin{equation}
        \cnum{\phi} = \sup \{\dis{\phi\cdot A}{\phi\cdot B} : A,B \in \mathbb{S}_D \} \ .
    \end{equation}
We have the following properties of the contraction coefficient: 
\begin{lemma}[{\cite[Lemma 3.14]{movassagh2022ergodic}}] \label{propertiesOfC}
    If $\phi \in \mapspace$ be a non-destructive positive map, then
    \begin{enumerate} 
            \item $\dis{\phi\cdot X}{\phi\cdot Y} \leq \cnum{\phi} \dis{X}{Y}$  for all $X,Y \in \mathbb{S}_D$ .
        \item  $\cnum{\phi} \leq 1$ and if $\phi$ is strictly positive then $c(\phi) < 1$. 
        \item If there exist $X,Y$ such that $\phi\cdot X \in \mathbb{S}_D^\circ$ and $\phi\cdot Y \in \mathbb{S}_D \setminus\mathbb{S}_D^o$, 
                then $c(\phi) = 1$.
        \item For any non-destructive positive map $\psi$, we have $\cnum{\phi\circ\psi} \leq \cnum{\phi}\cnum{\psi}$.
        \item If $\phi$ is also non-transient, then  $\cnum{\phi} = \cnum{\phi^*}$.
    \end{enumerate}
\end{lemma}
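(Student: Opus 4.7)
The plan is to dispatch properties (3), (4), and the easy half of (2) via the fundamental positivity monotonicity, and handle (1) and (5) as the substantive items. The key input is that positivity gives $m(\phi(A), \phi(B)) \ge m(A,B)$ directly from $\lambda B \le A \Rightarrow \lambda \phi(B) \le \phi(A)$. Normalizing (the trace factors $\tr\phi(X), \tr\phi(Y)$ cancel in the product) yields
\begin{equation*}
    m(\phi\cdot X,\, \phi\cdot Y)\, m(\phi\cdot Y,\, \phi\cdot X) \ \ge \ m(X,Y)\, m(Y,X) \ ,
\end{equation*}
and since $s\mapsto (1-s)/(1+s)$ is decreasing, $\dis{\phi\cdot X}{\phi\cdot Y}\le \dis{X}{Y}\le 1$, so $c(\phi)\le 1$. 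For the strict bound in (2), strict positivity maps $\mathbb{S}_D$ into $\mathbb{S}_D^\circ$, so by Lemma \ref{equvilanceOfMatrics}(2) we have $\dis{\phi\cdot X}{\phi\cdot Y}<1$ pointwise; compactness of $\mathbb{S}_D\times \mathbb{S}_D$ combined with continuity (Lemma \ref{equvilanceOfMatrics}(4)) upgrades this to $c(\phi)<1$. Property (3) is immediate from Lemma \ref{equvilanceOfMatrics}(3), and (4) follows from (1) applied to $\phi$ and then to $\psi$, using the identity $(\phi\circ\psi)\cdot X = \phi\cdot(\psi\cdot X)$, which is valid since the relevant traces do not vanish under non-destructiveness.

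The Lipschitz-with-factor-$c(\phi)$ bound (1) is the technical crux. Setting $s := m(\phi\cdot X,\phi\cdot Y)\, m(\phi\cdot Y,\phi\cdot X)$, $s_0 := m(X,Y)\, m(Y,X)$, and $s_\infty := \inf_{X',Y'} m(\phi\cdot X',\phi\cdot Y')\, m(\phi\cdot Y',\phi\cdot X')$ (so $c(\phi) = (1-s_\infty)/(1+s_\infty)$), a short algebraic manipulation reduces $\dis{\phi\cdot X}{\phi\cdot Y}\le c(\phi)\dis{X}{Y}$ to the hyperbolic/velocity-addition inequality
\begin{equation*}
    s \ \ge \ \frac{s_0 + s_\infty}{1 + s_0 s_\infty} \ .
\end{equation*}
To establish this I would decompose $X=\lambda Y + X_\perp$ and $Y=\mu X + Y_\perp$ with $\lambda=m(X,Y), \mu=m(Y,X)$, noting that $X_\perp, Y_\perp$ are extremal perturbations lying on the boundary of $\pos$ (otherwise $\lambda, \mu$ could be enlarged). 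Apply $\phi$, which preserves positivity, and use the definition of $s_\infty$ to compare the relative positions of $\phi(X_\perp)$ and $\phi(Y_\perp)$ to the extremal ratios inside the image cone. A M\"obius-type manipulation of the resulting estimates, exploiting the form of $s\mapsto (1-s)/(1+s)$, yields the bound. This decomposition-plus-M\"obius argument is the main obstacle in the lemma.

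For (5), I would use the self-duality of $\pos$ under $\inner{\cdot}{\cdot}$, which gives
\begin{equation*}
    m(A, B) \ = \ \inf\bigg\{ \frac{\inner{P}{A}}{\inner{P}{B}} \ : \  P \in \pos \setminus \{0\} \bigg\} \ ,
\end{equation*}
valid whenever $\inner{P}{B}>0$. Substituting into the expression for $\inf_{X,Y} m(\phi\cdot X,\phi\cdot Y)\, m(\phi\cdot Y,\phi\cdot X)$ and transferring $\phi$ across inner products via $\inner{P}{\phi(X)}=\inner{\phi^*(P)}{X}$ produces a four-variable infimum whose integrand is manifestly symmetric in the primal pair $(X,Y)$ and the dual pair $(P,Q)$. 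Swapping the roles of these two pairs gives the same expression with $\phi$ replaced by $\phi^*$, yielding $c(\phi) = c(\phi^*)$. The non-destructive and non-transient hypotheses ensure the inner-product denominators stay strictly positive on $\mathbb{S}_D$ in both directions, keeping all infima finite and over nonempty sets.
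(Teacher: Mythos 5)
The paper does not actually prove this lemma; it imports it from \cite[Lemma 3.14]{movassagh2022ergodic} with only a remark that the stated version follows from a close reading of that proof. So your write-up is necessarily a ``different route'' from the paper's, and the right comparison is with the argument in the cited reference, which your sketch essentially reconstructs (it is the Hennion-style proof). Your dispatch of (2)--(4) is correct, and your reduction of (1) to $s \ge (s_0+s_\infty)/(1+s_0 s_\infty)$ is an exact algebraic equivalence, since $s\mapsto(1-s)/(1+s)$ is a decreasing involution turning products into this ``velocity addition.'' The step you flag as the main obstacle and leave unexecuted does close, and it is worth recording how: with $\lambda=m(X,Y)$, $\mu=m(Y,X)$, $\lambda\mu<1$, set $U=(X-\lambda Y)/(1-\lambda\mu)$ and $V=(Y-\mu X)/(1-\lambda\mu)$, so that $X=U+\lambda V$, $Y=\mu U+V$ with $U,V\in\pos$; writing $u=\tr\phi(U)$, $v=\tr\phi(V)$, $a=m(\phi\cdot U,\phi\cdot V)$, $b=m(\phi\cdot V,\phi\cdot U)$, one checks directly that $t=(au+\lambda v)/(a\mu u+v)$ satisfies $\lambda \le t\le 1/\mu$ and $\phi(X)-t\,\phi(Y)\ge 0$, giving $m(\phi(X),\phi(Y))\ge(au+\lambda v)/(a\mu u+v)$ and, by symmetry, $m(\phi(Y),\phi(X))\ge(bv+\mu u)/(b\lambda v+u)$; cross-multiplying, the difference between the product of these bounds and $(\lambda\mu+ab)/(1+\lambda\mu ab)$ factors as a positive multiple of $(1-\lambda\mu)(1-ab)(a\mu u^2+b\lambda v^2)\ge0$, and since $ab\ge s_\infty$ and $t\mapsto(s_0+t)/(1+s_0t)$ is increasing, the claim follows. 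You should dispose explicitly of the degenerate cases ($\lambda\mu=1$, or $U=0$ or $V=0$, each forcing $X=Y$ since traces are one), and in (2) the compactness argument needs the observation that $\phi\cdot\mathbb{S}_D$ is a norm-compact subset of $\mathbb{S}_D^\circ$, on which $\mathrm{d}$ is continuous by Lemma \ref{equvilanceOfMatrics}. Your duality proof of (5) is correct and clean: after the swap the two four-variable infima have identical, degree-zero homogeneous integrands, and non-destructiveness together with non-transience is exactly what keeps the denominators $\inner{\phi^*(P)}{Y}=\inner{P}{\phi(Y)}$ from vanishing identically in either direction.
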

	
\begin{remark}
    We note that the lemma above is stated slightly differently than \cite[Lemma 3.14]{movassagh2022ergodic}. However a close reading of the proof in \cite{movassagh2022ergodic} shows that the above version holds. 
\end{remark}

Under Assumption \ref{1}, the maps $\Phi^{n}$ defined as in eq.\ \eqref{Phi} become strictly positive in finite time.  As a consequence the following result was proved in \cite{movassagh2022ergodic} using Kingman's sub additive ergodic theorem \cite{kingman1976subadditive,kingman1973subadditive,liggett1985improved}: 
\begin{lemma}[{\cite[Lemma 3.11]{movassagh2022ergodic}}]\label{kappaIsConstant}
    Let $(\varphi_n)_{n\geq1}$ and $\Phi^{n}$ be as in eq.\ \eqref{Phi}. If Assumption \ref{1} holds, then there exists a deterministic constant $\kappa\in[0,1)$ such that almost surely
    \[\ln\kappa = \lim_{n\to\infty} \dfrac{1}{n} \ln\cnum{\Phi^{(n)}}  \]
    and 
    \[\ln\kappa = \lim_{n\to\infty} \dfrac{1}{n} \avg{\ln\cnum{\Phi^{(n)}}} = \inf_{n\in\mathbb{N}}\frac{1}{n} \avg{\ln\cnum{\Phi^{(n)}}} \ .   \]
\end{lemma}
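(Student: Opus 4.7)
The plan is to apply Kingman's subadditive ergodic theorem to the sequence $a_n := \ln c(\Phi^{(n)})$, which takes values in $[-\infty, 0]$.

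First I would establish a subadditive cocycle relation. From $\Phi^{(n+m)}_\omega = \Phi^{(m)}_{\theta^n\omega}\circ\Phi^{(n)}_\omega$ and Lemma \ref{propertiesOfC}(4), whenever both factors are non-destructive one obtains $c(\Phi^{(n+m)}_\omega)\le c(\Phi^{(m)}_{\theta^n\omega})\,c(\Phi^{(n)}_\omega)$, hence
\[
a_{n+m}(\omega) \ \le \ a_n(\omega) + a_m(\theta^n\omega).
\]
Under Assumption \ref{1}, $\Phi^{(n)}$ is strictly positive for $n\ge\tau$, so this holds almost surely once $n$ and $m$ are in the tail, and can be promoted to every $n,m\ge 1$ by a convention such as $c(\phi) := 1$ for $\phi$ with $\ker\phi\cap\mathbb{S}_D\ne\emptyset$. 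Since $a_n\le 0$, the positive-part integrability hypothesis $\avg[a_1^+] = 0 < \infty$ is trivially satisfied, and Kingman's subadditive ergodic theorem \cite{kingman1973subadditive,liggett1985improved} yields a deterministic $\gamma\in[-\infty,0]$ with
\[
\gamma \ = \ \lim_{n\to\infty}\frac{1}{n}a_n \ \ \text{a.s.}, \qquad \gamma \ = \ \lim_{n\to\infty}\frac{1}{n}\avg[a_n] \ = \ \inf_{n\ge 1}\frac{1}{n}\avg[a_n].
\]
Setting $\kappa := e^\gamma$ (with $e^{-\infty} := 0$) gives $\kappa\in[0,1]$ together with both displayed identities of the lemma; the convention above does not affect the value of the limit since $\Phi^{(n)}$ is non-destructive for all $n\ge\tau$ almost surely.

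To establish the strict inequality $\kappa<1$, i.e.\ $\gamma<0$, I would use Assumption \ref{1} to pick $N_0$ with $\prob(\tau\le N_0)>0$. On the event $\{\tau\le N_0\}$ the map $\Phi^{(N_0)}$ is strictly positive, so Lemma \ref{propertiesOfC}(2) gives $c(\Phi^{(N_0)})<1$ and thus $a_{N_0}<0$ on a set of positive probability. Combined with $a_{N_0}\le 0$ almost surely, this yields $\avg[a_{N_0}]\in[-\infty, 0)$, and the infimum formula from Kingman's theorem then forces
\[
\gamma \ \le \ \frac{1}{N_0}\avg[a_{N_0}] \ < \ 0,
\]
so $\kappa = e^\gamma < 1$.

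The main obstacle is not substantive but notational: Lemma \ref{propertiesOfC}(4) requires non-destructive factors, while Kingman's theorem in its standard form requires subadditivity for all $n, m$. The convention extending $c$ to all positive maps (or, equivalently, passing to a tail subsequence and invoking a variant of Kingman's theorem with eventual subadditivity, as in \cite{movassagh2022ergodic}) resolves this cleanly. Beyond that bookkeeping, the argument is structurally parallel to the Furstenberg--Kesten proof, with $c$ playing the role of the matrix norm and the strict contraction $c(\Phi^{(N_0)})<1$ on a set of positive probability delivering the strict bound $\kappa<1$.
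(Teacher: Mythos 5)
Your proposal is correct and follows exactly the route the paper indicates: the paper does not reprove this lemma but cites \cite[Lemma 3.11]{movassagh2022ergodic}, noting it is obtained by applying Kingman's subadditive ergodic theorem to $\ln c(\Phi^{(n)})$, which is precisely your argument (subadditivity from Lemma \ref{propertiesOfC}(4), trivial positive-part integrability since $c\le 1$, and strictness $\kappa<1$ from $c(\Phi^{(N_0)})<1$ on a positive-probability event). Your handling of the non-destructive requirement and of possible $-\infty$ values is sound, since Assumption \ref{1} forces every $\Phi^{(n)}$ to be non-destructive almost surely.
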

\begin{remark}
    In \cite{movassagh2022ergodic} the ergodic map $\theta$ is assumed to be invertible.  However, a close reading of the proof of \cite[Lemma 3.11]{movassagh2022ergodic} shows that invertibility of $\theta$ is not required.
\end{remark}	
\noindent Lemma \ref{kappaIsConstant} directly yields the following corollary: 	
\begin{cor}\label{cConvergeTo0}
    $\lim_{n\to\infty} c(\Phi^{(n)}) = 0$ almost surely.
\end{cor}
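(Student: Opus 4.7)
The statement is a direct corollary of Lemma \ref{kappaIsConstant}, so the proof is very short and the approach is essentially to unpack the meaning of the almost-sure limit.

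The plan is the following. By Lemma \ref{kappaIsConstant}, there is a deterministic $\kappa \in [0,1)$ with
\[
\frac{1}{n} \ln c(\Phi^{(n)}) \; \xrightarrow[n\to\infty]{\text{a.s.}} \; \ln \kappa
\]
(interpreted as $-\infty$ in the case $\kappa = 0$). Since $\kappa < 1$, we have $\ln\kappa < 0$ (strictly, and possibly $-\infty$). Choose any constant $\beta$ with $\ln \kappa < \beta < 0$; this is possible regardless of whether $\kappa = 0$ or $\kappa \in (0,1)$. On the full-measure event where the above convergence holds, there exists $N = N(\omega)$ such that $\frac{1}{n} \ln c(\Phi^{(n)}) < \beta$ for all $n \ge N$, equivalently $c(\Phi^{(n)}) < e^{\beta n}$. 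Since $\beta < 0$, the right-hand side tends to $0$, giving $c(\Phi^{(n)}) \to 0$ almost surely.

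There is no real obstacle here: the only subtlety is keeping track of the case $\kappa = 0$, where $\ln\kappa$ must be read as $-\infty$; the choice of any $\beta < 0$ handles both cases uniformly. Item (2) of Lemma \ref{propertiesOfC} guarantees $c(\Phi^{(n)}) \in [0,1]$ so that the logarithms in Lemma \ref{kappaIsConstant} are well-defined once $\Phi^{(n)}$ becomes strictly positive, which by Assumption \ref{1} occurs almost surely in finite time.
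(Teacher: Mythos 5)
Your argument is correct and is exactly the standard deduction the paper leaves implicit when it says Lemma \ref{kappaIsConstant} ``directly yields'' the corollary: since $\frac{1}{n}\ln c(\Phi^{(n)}) \to \ln\kappa < 0$ almost surely, one fixes $\beta \in (\ln\kappa, 0)$ and concludes $c(\Phi^{(n)}) < e^{\beta n} \to 0$. Your remark about handling $\kappa = 0$ uniformly by choosing any negative $\beta$ is a reasonable bit of care, and invoking Lemma \ref{propertiesOfC} and Assumption \ref{1} to justify that $c(\Phi^{(n)})$ is eventually well-defined and $\le 1$ is consistent with how the paper treats these quantities.
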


The contraction provided by Lemma \ref{kappaIsConstant} is the driving force behind the convergence $L_n \to Z_1$ state in Lemma \ref{existanceofz}.  In fact this convergence can be made more quantitative:
\begin{lemma}[{\cite[Lemma 3.12]{movassagh2022ergodic}}]\label{lem:quantz} 
    Let $(\varphi_n)_{n\geq1}$ and $\Phi^{(n)}$ be as in eq.\ \eqref{Phi} and suppose that Assumption \ref{1} holds.  Let $L_n$ be as in eq.\ \eqref{LAndR} and let $Z_1= \lim_n L_n$ and $Z_k=Z_1\circ \theta^{k-1}$ be as in Lemma \ref{existanceofz}.  Then, for each $Y\in \mathbb{S}_D$ and $k\in \mathbb{N}$, 
    $$ \dis{(\varphi^*_k\circ\ldots\circ\varphi^*_n)\cdot Y}{Z_k} \ \le \ c(\varphi^*_k\circ\ldots\circ\varphi^*_n) $$
    for all sufficiently large $n$.  In particular, we have $\lim_n(\varphi^*_k\circ\ldots\circ\varphi^*_n)\cdot Y = Z_k $ with probability one.
\end{lemma}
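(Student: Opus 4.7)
The plan is to identify $Z_k$ as the projective image of $Z_{n+1}$ under the adjoint cocycle $\varphi^*_k\circ\cdots\circ\varphi^*_n$, and then apply the contraction property of Lemma \ref{propertiesOfC}(1) to compare this image with that of an arbitrary $Y\in \mathbb{S}_D$. The displayed inequality then reduces to an almost trivial one-line estimate, and the convergence statement follows from Corollary \ref{cConvergeTo0}.

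First I would fix a full-measure set on which the cocycle identity of Lemma \ref{existanceofz} holds simultaneously for all $j\in \mathbb{N}$ (this is a countable intersection of a.s. events). On this set, $\varphi_j^*(Z_{j+1})=\lambda_j Z_j$ with $\lambda_j=\tr\varphi_j^*(Z_{j+1})>0$, positivity following from $Z_{j+1}\in \mathbb{S}_D^\circ$ together with a.s. non-destructiveness of $\varphi_j^*$ under Assumption \ref{1}. Composing telescopically,
$$(\varphi^*_k\circ\cdots\circ\varphi^*_n)(Z_{n+1}) \ = \ \Bigl(\prod_{j=k}^n \lambda_j\Bigr)\, Z_k$$
is a positive multiple of $Z_k$, so the projective identity $(\varphi^*_k\circ\cdots\circ\varphi^*_n)\cdot Z_{n+1}=Z_k$ holds for every $n\ge k$.

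Next, under Assumption \ref{1}, for all sufficiently large $n$ the map $\varphi_n\circ\cdots\circ\varphi_k = \Phi^{(n-k+1)}_{\theta^{k-1}}$ is strictly positive, and hence so is its adjoint $\varphi^*_k\circ\cdots\circ\varphi^*_n$; in particular the adjoint is non-destructive and its projective action is defined on all of $\mathbb{S}_D$. Applying Lemma \ref{propertiesOfC}(1) to the pair $Y, Z_{n+1}\in \mathbb{S}_D$ yields
$$\dis{(\varphi^*_k\circ\cdots\circ\varphi^*_n)\cdot Y}{Z_k} \ = \ \dis{(\varphi^*_k\circ\cdots\circ\varphi^*_n)\cdot Y}{(\varphi^*_k\circ\cdots\circ\varphi^*_n)\cdot Z_{n+1}} \ \le \ c(\varphi^*_k\circ\cdots\circ\varphi^*_n)\,\dis{Y}{Z_{n+1}},$$
and the universal bound $\dis{Y}{Z_{n+1}}\le 1$ from Lemma \ref{equvilanceOfMatrics}(1) delivers the displayed inequality in the statement.

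For the ``in particular'' part, applying Corollary \ref{cConvergeTo0} to the $\theta$-shifted process $(\varphi_{k+j})_{j\ge 0}$, which inherits Assumption \ref{1} by measure preservation, gives $c(\varphi_n\circ\cdots\circ\varphi_k)\to 0$ a.s.; by Lemma \ref{propertiesOfC}(5) (valid since eventual strict positivity forces both non-destructiveness and non-transience) this transfers to the adjoint, so $c(\varphi^*_k\circ\cdots\circ\varphi^*_n)\to 0$ a.s. Combined with the inequality just proved and the comparison $\norm{A-B}\le 2\,\dis{A}{B}$ of Lemma \ref{equvilanceOfMatrics}(1), this yields $\lim_n(\varphi^*_k\circ\cdots\circ\varphi^*_n)\cdot Y = Z_k$ in trace norm almost surely. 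The only step that requires care—rather than a genuine obstacle—is ensuring that all the a.s. statements invoked (the cocycle identity for the $Z_j$, non-destructiveness of each $\varphi_j^*$, eventual strict positivity of the cocycle, and vanishing of $c$) hold on a common full-measure set; this is handled by a routine countable intersection over $k,n\in \mathbb{N}$.
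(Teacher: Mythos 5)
The paper does not prove this lemma; it is stated as a citation of \cite[Lemma 3.12]{movassagh2022ergodic}, so there is no internal proof to compare against. Assessed on its own terms, your reconstruction is correct and is the expected argument: telescope the a.s.\ cocycle identity $\varphi_j^*\cdot Z_{j+1}=Z_j$ of Lemma~\ref{existanceofz} to obtain $(\varphi_k^*\circ\cdots\circ\varphi_n^*)\cdot Z_{n+1}=Z_k$, then combine the contraction estimate of Lemma~\ref{propertiesOfC}(1) with the universal bound $\dis{Y}{Z_{n+1}}\le 1$ from Lemma~\ref{equvilanceOfMatrics}(1). Two supporting facts that you derive in passing are already recorded in the paper and can simply be cited: the a.s.\ non-destructiveness of $\varphi_j^*$ under Assumption~\ref{1} (needed both for $\lambda_j=\tr\varphi_j^*(Z_{j+1})>0$ and for the projective actions to be defined on all of $\mathbb{S}_D$) is precisely Remark~\ref{MVIForAdjoint}; and the a.s.\ decay $c(\varphi_n\circ\cdots\circ\varphi_k)\to 0$ follows, as you note, from Corollary~\ref{cConvergeTo0} applied to the $\theta^{k-1}$-shifted cocycle together with Lemma~\ref{propertiesOfC}(5), once the composition is strictly positive (hence non-destructive and non-transient) for large $n$. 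Your closing remark about assembling the countably many a.s.\ events on a common full-measure set is the right hygiene, and with that the argument is complete.
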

Below it will be useful to consider the contraction obtained from only a fraction of the process. This is described in the following 
	
\begin{lemma}\label{shiftedSubaditiveErgodic}
    Let $(\varphi_n)_{n\geq1}$ and $\Phi^{n}$ be as in eq.\ \eqref{Phi}. Let $\alpha \in (0,1)$ and let $n_\alpha=\lfloor (1-\alpha)n\rfloor$, the integer part of $(1-\alpha)n$. If Assumption \ref{1} holds, then 
        \begin{equation}
            \lim_{n\to\infty} \frac{1}{n} \ln c(\varphi_{n} \circ \cdots \circ \varphi_{n_\alpha + 1})  \ = \    \alpha \ln \kappa \  \text{ almost surely},
        \end{equation}
    where $\kappa$ is the deterministic constant in Lemma \ref{kappaIsConstant}. 
\end{lemma}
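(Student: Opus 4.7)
The plan is to prove matching lower and upper bounds for $\tfrac{1}{n}\ln c(\Psi_n)$, where I write $\Psi_n := \varphi_n \circ \cdots \circ \varphi_{n_\alpha+1}$ and $m_n := n - n_\alpha$, so that $m_n/n \to \alpha$. For the \emph{lower bound}, I would use the factorization $\Phi^{(n)} = \Psi_n \circ \Phi^{(n_\alpha)}$. Under Assumption \ref{1}, both $\Phi^{(n)}$ and $\Phi^{(n_\alpha)}$ are non-destructive for all large $n$, so Lemma \ref{propertiesOfC}(4) gives $\ln c(\Psi_n) \ge \ln c(\Phi^{(n)}) - \ln c(\Phi^{(n_\alpha)})$. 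Dividing by $n$ and applying Lemma \ref{kappaIsConstant} at both scales $n$ and $n_\alpha \to \infty$, with $n_\alpha/n \to 1-\alpha$, yields $\liminf_n \tfrac{1}{n}\ln c(\Psi_n) \ge \ln\kappa - (1-\alpha)\ln\kappa = \alpha\ln\kappa$ almost surely.

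For the \emph{upper bound} I would fix $K\in\mathbb{N}$, set $g_K(\omega) := \ln c(\Phi^{(K)}_\omega)$, and for each phase $l \in \{0,1,\ldots,K-1\}$ consider the length-$K$ block decomposition
\[
\Psi_n \;=\; \Psi^{\mathrm{fin}}_{n,l} \circ B^{(J_{n,l})}_l \circ \cdots \circ B^{(1)}_l \circ \Psi^{\mathrm{ini}}_{n,l},
\]
with $B^{(j)}_l := \Phi^{(K)}\circ\theta^{n_\alpha+l+(j-1)K}$, $J_{n,l} := \lfloor(m_n-l)/K\rfloor$, and short ($<K$) initial/final pieces. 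Iterating Lemma \ref{propertiesOfC}(4) from the outside in and dropping the short pieces via $\ln c \le 0$ gives $\ln c(\Psi_n) \le \sum_{j=1}^{J_{n,l}} g_K\circ\theta^{n_\alpha+l+(j-1)K}$. Summing this bound over the $K$ phases $l$ produces exactly the Birkhoff sum $\sum_{k=n_\alpha}^{n-K} g_K\circ\theta^k$, with each index $k$ counted once, so pigeonhole yields an $l_n^*$ with
\[
\ln c(\Psi_n) \;\le\; \frac{1}{K}\sum_{k=n_\alpha}^{n-K} g_K\circ\theta^k.
\]
Birkhoff's ergodic theorem for $\theta$, together with the telescoping $\sum_{k=n_\alpha}^{n-K} = \sum_{k=0}^{n-K} - \sum_{k=0}^{n_\alpha-1}$, gives $\tfrac{1}{m_n}\sum_{k=n_\alpha}^{n-K} g_K\circ\theta^k \to \avg[g_K]$ almost surely; combined with $m_n/n\to\alpha$ this yields $\limsup_n \tfrac{1}{n}\ln c(\Psi_n) \le \tfrac{\alpha}{K}\avg[g_K]$ almost surely. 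Sending $K\to\infty$ and invoking $\avg[g_K]/K \to \ln\kappa$ from Lemma \ref{kappaIsConstant} concludes.

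The main obstacle is the iterated submultiplicativity step in the upper bound: Lemma \ref{propertiesOfC}(4) requires the peeled inner factor to be non-destructive, but individual blocks $B^{(j)}_l$ may be destructive with positive probability, since Assumption \ref{1} only guarantees $\tau<\infty$ a.s.\ and not $\tau\le K$. I would resolve this by peeling factors off the left of $\Psi_n$ one at a time: at each stage the remaining right sub-composition is a partial composition of $\Psi_n$, which inherits non-destructiveness from $\Psi_n$ itself, because if $\Psi_n(X)\ne 0$ for every $X\in\mathbb{S}_D$ then dropping outer factors cannot introduce a kernel element in $\mathbb{S}_D$. The outer factor peeled at each step may itself be destructive, which is handled by working with the natural extension $c(\phi) := \sup\{\dis{\phi\cdot X}{\phi\cdot Y} : X,Y\in\mathbb{S}_D\setminus\ker\phi\}\in[0,1]$; the contraction property $\dis{\phi\cdot X}{\phi\cdot Y}\le c(\phi)\dis{X}{Y}$ continues to hold on this natural domain, which is enough to preserve the submultiplicative inequality we need.
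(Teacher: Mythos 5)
Your proposal is correct and follows essentially the same route as the paper: lower bound from the factorization $\Phi^{(n)} = \Psi_n \circ \Phi^{(n_\alpha)}$ with Lemma \ref{propertiesOfC}(4) and Lemma \ref{kappaIsConstant}, upper bound from a length-$K$ block decomposition with short end pieces dropped using $\ln c \le 0$, and an average over the phase $l$ (the paper averages over the phase directly; your pigeonhole is the same inequality). The one place you diverge is in the handling of possibly-destructive blocks, and there the extra machinery is unnecessary: under Assumption \ref{1}, with probability one \emph{every} sub-composition $\varphi_b \circ \cdots \circ \varphi_a$ is non-destructive. Indeed $\Phi^{(n)}$ is strictly positive (hence non-destructive) for $n \ge \tau$, and if $m < n$ then $\ker \Phi^{(m)} \cap \mathbb{S}_D \subset \ker \Phi^{(n)} \cap \mathbb{S}_D = \emptyset$ since $\Phi^{(m)}$ is an inner factor of $\Phi^{(n)}$; thus every $\Phi^{(m)}$ is non-destructive a.s.\ By measure-preservation of $\theta$, so is every shifted block $\Phi^{(K)}\circ\theta^k$, and a countable intersection gives a full-measure event on which Lemma \ref{propertiesOfC}(4) applies as stated — no modified $c(\cdot)$ is needed. (If you did want to use your extended definition, note that the contraction inequality $\dis{\phi\cdot X}{\phi\cdot Y}\le c(\phi)\dis{X}{Y}$ on the natural domain is a nontrivial claim that does not follow from the sup definition alone and would require reproving part of Lemma \ref{propertiesOfC}.) One more small point to be explicit about: $g_K = \ln c(\Phi^{(K)})$ need not be in $L^1$; Birkhoff applies because $(g_K)^+ = 0 \in L^1$, which is what the paper invokes.
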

\begin{proof}
    First note that, by Part 4 of Lemma \ref{propertiesOfC}, we have 
    \begin{eqnarray}
    \ln\cnum{\varphi_{n}\circ\ldots\circ\varphi_{n_\alpha+1}}\ \geq \ 
        \ln\cnum{\varphi_{n}\circ\ldots\circ\varphi_{1}} -  \ln\cnum{\varphi_{n_\alpha}\circ\ldots\circ\varphi_{1}} \ .
    \end{eqnarray}
    Thus, by Lemma \ref{kappaIsConstant}, 
    \begin{equation}\label{lowerbnd}
        \liminf_{n\to\infty}\frac{1}{n}\ln\cnum{\varphi_{n}\circ\ldots\circ\varphi_{n_\alpha+1}} \ \geq \ \alpha\ln\kappa \quad \text{almost surely.}
    \end{equation}
    
    To prove the complementary upper bound, i.e., that
    \begin{equation}\label{upperbnd}
        \limsup_{n\to\infty}\frac{1}{n}\ln\cnum{\varphi_{n}\circ\ldots\circ\varphi_{n_\alpha+1}} \ \leq \ \alpha\ln\kappa  \ ,
    \end{equation}
    we will show that for each $m\in \mathbb{N}$
    \begin{equation}
        \limsup_{n\to\infty} \frac{1}{n}\ln\cnum{\varphi_{n}\circ\ldots\circ\varphi_{n_\alpha+1}}\ \leq \ \alpha\frac{1}{m}\avg[\ln\cnum{\Phi^{(m)}}] \quad \text{almost surely.}
    \end{equation}
    Eq.\ \eqref{upperbnd} will then follow by Lemma \ref{propertiesOfC}.
    
    Let $m\in\mathbb{N}$ be fixed and consider $n\in\mathbb{N}$ large enough that $n-n_\alpha > 2m$.  Let $p(n) = \lfloor\frac{n_\alpha+m}{m}\rfloor$ and let  $q=q(n) \in \mathbb{N}$ and $r=r(n)\in \ \{0,1,\ldots,m-1\}$ be defined by $n = qm+r$. Then,
    \begin{equation}\label{thereAreEnoughTerms}
        n_\alpha +1 \ \le \ p(n)m+1 \ \le \ n_\alpha + m \ < \ n-m + 1 \le  \ q(n)m \ .
    \end{equation}
    Since $\ln\cnum{\varphi} \le 0$ for any $\varphi\in\mapspace$, we have, using lemma \ref{propertiesOfC}, that
    \begin{equation}
         \ln\cnum{\varphi_{n} \circ \ldots \circ \varphi_{n_\alpha+1}} \ \leq \ \ln\cnum{\varphi_{q(n)m+j}\circ\ldots \circ\varphi_{p(n)m+j+1}} \ 
    \end{equation}
    for any $0\le j\le m-1$, where eq.\ \eqref{thereAreEnoughTerms} guarantees that $p(n)m+j+1\ge 1$ and the composition on the right hand side has non-zero number of factors. Using, \ref{propertiesOfC} again we find that
    $$\ln\cnum{\varphi_{n} \circ \ldots \circ \varphi_{n_\alpha+1}}  \ \le \ \sum_{k=p(n)}^{q(n)-1} \ln\cnum{\varphi_{km+j+m}\circ\ldots\circ\varphi_{km+j+1}} \
    = \ \sum_{k=p(n)}^{q(n)-1} \ln\cnum{\varphi_{m}\circ\ldots\circ\varphi_{1}}\circ\theta^{km+j} \ . $$
    Since this holds for any $j\in\{0, 1, \ldots, m-1\}$, we have 
    \begin{align*}
        \ln\cnum{\varphi_{n} \circ \ldots \circ \varphi_{n_\alpha+1}} \
        &\leq \ \frac{1}{m}\sum_{j=0}^{m-1} \sum_{k=p(n)}^{q(n)-1} \ln\cnum{\varphi_m\circ\ldots\circ\varphi_1}\circ\theta^{km+j}\\
        &=  \ \frac{1}{m} \sum_{i=p(n)m}^{q(n)m-1}\ln\cnum{\varphi_m\circ\ldots\circ\varphi_1}\circ\theta^{i}\\
        &= \  \sum_{i=0}^{q(n)m-1} \frac{1}{m}\ln\cnum{\Phi^{(m)}}\circ\theta^{i} \ - \ \sum_{i=0}^{p(n)m-1} \frac{1}{m} \ln\cnum{\Phi^{(m)}}\circ\theta^{i} \ .
    \end{align*}
    Since $\big(\frac{1}{m}\ln\cnum{\varphi_m\circ\ldots\circ\varphi_1}\big)^+\in L^1(\Omega)$ (where $(\cdot)^+$ denotes the positive part), eq.\ \eqref{upperbnd} follows from the Birchoff ergodic theorem. 
\end{proof}

\subsection{Invertible ergodic dynamics}
In this section, we assume that $\theta$ is an invertible ergodic map. It is often possible to replace the original dynamical system by a natural extension on which $\theta$ is invertible; for instance this is possible if $\theta$ is \emph{essentially surjective}, i.e. if $\Omega \setminus\theta(\Omega)$ is a sub-null set \textemdash see \cite{cornfeld2012ergodic}. We will denote this extension also by $(\Omega, \mathcal{F}, \prob, \theta)$ and note that the previously stated results still hold. 
  
Since $\theta$ is invertible and measure preserving, the inverse map  $\theta^{-1}$ is also a measure preserving ergodic transformation. We extend the definition of $(\Phi^{(n)})$ to include negative indices as follows
\begin{equation}\label{newPhi}
    \Phi^{(n)} (\omega)= 
    \begin{cases}
        \varphi_n(\omega) \circ \ldots \circ \varphi_1(\omega) & \text{ for } n \geq 1 \ ,\\
        \varphi_0 & \text{ for } n = 0 \ ,\\
        \varphi_{-1}(\omega) \circ \ldots \circ \varphi_{n}(\omega) & \text{ for } n \leq -1 \ ,
    \end{cases}
\end{equation}
where $\varphi_{n} := \varphi_{\theta^{n}} $ for all $n$. When $\theta$ is invertible, Assumption \ref{1} guarantees that with probability one $(\Phi^{(-n)})_{n\geq 1}$ is almost surely eventually strictly positive \textemdash \ see \cite[Lemma 3.13]{movassagh2022ergodic}. 
 
With this extended dynamical system, we introduce some new notation. Let $n \in \mathbb{N}$ and define
\begin{equation}\label{eq:Psin}
    \psi_n = \varphi_{-n}^* \quad \text{ and } \quad \Psi^{(n)} = \psi_n \circ \ldots \circ \psi_1 \ .
\end{equation}
Note that $\Psi^{(n)*} = \Phi^{(-n)}$.  We see that $(\Psi^{(n)})_{n\in\mathbb{N}}$ is almost surely eventually strictly positive. This allows us to define a new stopping time $\tau'$ as:
\begin{equation}
    \tau' \ =  \ \inf \{n \geq 1: \Phi^{(n+k)} \text { and } \Psi^{(n+k)} \text{ are strictly positive } \forall k \geq 0\} \ , 
\end{equation}
satisfying $\prob[\tau' <\infty]=1$ if $\theta$ is invertible and Assumption \ref{1} holds.

We have the following result analogous to Lemma \ref{kappaIsConstant} for the sequence $(\Psi^{(n)})_{n\geq1}$: 	
\begin{lemma}\label{contractionCoefficientPsi} 
    If $\theta$ is invertible and $(\Phi_{(n)})_{n\geq1}$ satisfies Assumption \ref{1}, then 
    \begin{equation}\label{eq:kappaPsin}
        \ln \kappa  \ \overset{a.s}= \ \lim_{n\to\infty} \frac{1}{n} \ln \cnum{\Psi^{(n)}},
    \end{equation}
    where $\Psi^{(n)} $ is as in eq.\ \eqref{eq:Psin} and $\kappa$ is the deterministic constant appearing in lemma \ref{kappaIsConstant}.  In particular, $\lim_n c(\Psi^{(n)})=0$ almost surely.
\end{lemma}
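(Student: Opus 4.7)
The plan is to mirror the argument of Lemma \ref{kappaIsConstant}, with $\theta$ replaced by the ergodic transformation $\theta^{-1}$ (which inherits ergodicity from $\theta$ by invertibility), and then match the resulting limit with $\ln\kappa$ via a distributional identification of $\Phi^{(-n)}$ with $\Phi^{(n)}$.

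First I would establish the shift identity $\psi_{m+n}\circ\cdots\circ\psi_{m+1} = \Psi^{(n)}\circ\theta^{-m}$ by a direct computation based on $\psi_k(\omega) = \varphi_0^*(\theta^{-k}\omega)$. Combined with the submultiplicativity of the contraction coefficient (Lemma \ref{propertiesOfC}(4)), this yields
\begin{equation*}
    \cnum{\Psi^{(m+n)}} \ \le \ \bigl(\cnum{\Psi^{(n)}}\circ\theta^{-m}\bigr)\cdot\cnum{\Psi^{(m)}},
\end{equation*}
so the sequence $a_n := \ln\cnum{\Psi^{(n)}}\le 0$ satisfies $a_{m+n} \le a_m + a_n\circ\theta^{-m}$, i.e., subadditivity for $\theta^{-1}$. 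Since $a_n\le 0$, the integrability $\avg[a_1^+]=0$ is automatic, so Kingman's subadditive ergodic theorem yields a deterministic $\gamma\in[-\infty,0]$ with $\tfrac{1}{n}\ln\cnum{\Psi^{(n)}}\to\gamma$ almost surely and $\gamma = \inf_n \tfrac{1}{n}\,\avg[\ln\cnum{\Psi^{(n)}}]$.

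To identify $\gamma$ with $\ln\kappa$, I would use that by \cite[Lemma 3.13]{movassagh2022ergodic} the map $\Phi^{(-n)} = \Psi^{(n)*}$ is almost surely strictly positive for all sufficiently large $n$, and hence non-destructive and non-transient; Lemma \ref{propertiesOfC}(5) then gives $\cnum{\Psi^{(n)}} = \cnum{\Phi^{(-n)}}$ eventually. An unwinding of the definitions produces the pointwise identity $\Phi^{(-n)}(\omega) = \Phi^{(n)}(\theta^{-n-1}\omega)$, so by measure-preservation of $\theta$ the random variables $\Phi^{(-n)}$ and $\Phi^{(n)}$ share the same distribution; consequently $\avg[\ln\cnum{\Psi^{(n)}}] = \avg[\ln\cnum{\Phi^{(n)}}]$ for all sufficiently large $n$. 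Both sequences $n\mapsto \avg[\ln\cnum{\Psi^{(n)}}]$ and $n\mapsto\avg[\ln\cnum{\Phi^{(n)}}]$ are subadditive, so by Fekete's lemma each satisfies $\lim = \inf$; since they agree on a tail their infima coincide, and Lemma \ref{kappaIsConstant} identifies this common value as $\ln\kappa$, giving $\gamma = \ln\kappa$.

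The concluding assertion $\cnum{\Psi^{(n)}}\to 0$ is immediate: since $\kappa\in[0,1)$ we have $\ln\kappa < 0$ (possibly $-\infty$), so $\tfrac{1}{n}\ln\cnum{\Psi^{(n)}}\to \ln\kappa$ forces $\ln\cnum{\Psi^{(n)}}\to -\infty$ and hence $\cnum{\Psi^{(n)}}\to 0$ almost surely. I do not anticipate any substantive obstacle; the proof is essentially a time-reversed replay of the argument for Lemma \ref{kappaIsConstant}, with the only careful bookkeeping being the shift identity above and the pointwise identity $\Phi^{(-n)}(\omega) = \Phi^{(n)}(\theta^{-n-1}\omega)$ underlying the distributional matching.
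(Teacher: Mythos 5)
Your proposal follows the same route as the paper: apply Kingman's subadditive ergodic theorem (via Lemma \ref{kappaIsConstant}) to the $\theta^{-1}$-cocycle $\Psi^{(n)}$, then identify the resulting constant with $\ln\kappa$ via $\cnum{\phi^*}=\cnum{\phi}$ and measure preservation. The paper packages this as a brief remark, and your write-up is essentially a fuller version of that remark.

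One technical point deserves a fix. You pass from ``$\cnum{\Psi^{(n)}}=\cnum{\Phi^{(-n)}}$ eventually (a.s.)'' to ``$\avg[\ln\cnum{\Psi^{(n)}}]=\avg[\ln\cnum{\Phi^{(n)}}]$ for all sufficiently large $n$,'' but this deduction does not go through as stated: the random threshold after which the two contraction coefficients agree may be unbounded in $\omega$, so for no \emph{fixed} $n$ does almost-sure eventual equality imply equality of expectations at that $n$. The correct justification is stronger and simpler: under Assumption \ref{1} with $\theta$ invertible, each $\varphi_k$ is a.s.\ non-destructive and non-transient (this is the second condition in the equivalence noted after Assumption \ref{1}, citing \cite[Lemma 2.1]{movassagh2022ergodic}), and these properties pass to arbitrary finite compositions and to adjoints. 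Hence $\Psi^{(n)}$ is a.s.\ non-destructive and non-transient for \emph{every} $n$, so Lemma \ref{propertiesOfC}(5) gives $\cnum{\Psi^{(n)}}=\cnum{\Phi^{(-n)}}$ a.s.\ for every $n$, and then the pointwise identity $\Phi^{(-n)}(\omega)=\Phi^{(n)}(\theta^{-n-1}\omega)$ together with measure preservation yields the equality of expectations for every $n$ outright. With that substitution your Fekete argument (or, more directly, the characterization $\ln\kappa = \inf_n \tfrac1n\avg[\ln\cnum{\Phi^{(n)}}]$ from Lemma \ref{kappaIsConstant}) closes the identification.
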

\begin{remark} 
    The existence of the deterministic limit on the right hand side of eq.\ \eqref{eq:kappaPsin} follows directly from Lemma \ref{kappaIsConstant} applied with the sequence $\Psi^{(n)}$ in place of $\Phi^{(n)}$.  That the limit equals $\kappa$ follows from the identity
    $$ \avg{ \ln \cnum{\Psi^{(n)}} } = \avg{ \ln \cnum{\Phi^{(n)*}_{\theta^{-n-1}}}} = \avg{ \ln \cnum{\Phi^{(n)}}} \ ,$$
    where we have used the facts that $\theta$ is measure preserving and that $\cnum{\phi^*}=\cnum{\phi}$ for any $\phi$.
\end{remark}

If $\theta$ is invertible and Assumption \ref{1} holds, then the left and right Perron-FRobenius eigenmatrices $R_n$ and $L_n$ for $\Phi^{(n)}$ exist also for large negative $n$.  As a result we have the following lemma for the convergence of the right eigenvectors: 
\begin{lemma}[{\cite[Lemma 3.14]{movassagh2022ergodic}}]\label{existanceofz'}
    Let $(\varphi_n)_{n\geq1}$ and $\Phi^{n}$ be as in eq.\ \eqref{Phi} and let $R_n$ be the right Perron-Frobenius eigenmatrix for $\Phi^{(n)}$, see eq.\ \eqref{LAndR}. If $\theta$ is invertible an Assumption \ref{1} holds, then there is an $\mathbb{S}_D^\circ$ valued random variable $Z_1'$ such that
    \begin{equation}
        \lim_{n\to-\infty} R_n \overset{a.s} = Z_1' \ 
    \end{equation}
    and, with $Z'_k := Z'_1 \circ \theta^{-k+1}$, we have:
    \begin{enumerate}
        \item for every $k \in \mathbb{N}$, $\psi^*_k\cdot Z_{k+1}' = Z_k'$ a.s., and
        \item for each $Y\in \mathbb{S}_D$ and $k\in \mathbb{N}$, 
    $$ \dis{(\psi^*_k\circ\ldots\circ\psi^*_n)\cdot Y}{Z_k'} \ \le \ c(\psi^*_k\circ\ldots\circ\psi^*_n) $$
    for all sufficiently large $n$.  In particular, we have $\lim_{n} (\psi^*_k \circ \ldots \circ \psi^*_n)\cdot Y = Z'_k $ a.s..
    \end{enumerate}			
\end{lemma}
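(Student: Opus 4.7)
The plan is to reduce this statement to Lemmas \ref{existanceofz} and \ref{lem:quantz}, which already give the analogous result for \emph{left} eigenmatrices of $\Phi^{(n)}$, by applying them to the reversed sequence $\Psi^{(n)}$. The bridge is the identity $\Phi^{(-n)}=\Psi^{(n)*}$ together with the elementary fact that taking adjoints interchanges left and right Perron--Frobenius eigenmatrices: $R_{-n}$, the right PF eigenmatrix of $\Phi^{(-n)}$, is exactly the left PF eigenmatrix of $\Psi^{(n)}$.

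First I would observe that the sequence $\Psi^{(n)}$ has precisely the form of \eqref{Phi}, but generated by the positive-map-valued random variable $\psi_0=\varphi_0^*$ and the ergodic transformation $\theta^{-1}$. The map $\theta^{-1}$ is ergodic and measure preserving because $\theta$ is invertible and ergodic. Moreover, by the remarks preceding Lemma \ref{contractionCoefficientPsi} (which cite \cite[Lemma 3.13]{movassagh2022ergodic}), Assumption \ref{1} is inherited by $\Psi^{(n)}$ under the present hypotheses.

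Next I would apply Lemma \ref{existanceofz} to $\Psi^{(n)}$. This yields an $\mathbb{S}_D^\circ$-valued random variable, which I name $Z_1'$, such that the left PF eigenmatrices of $\Psi^{(n)}$ converge almost surely to $Z_1'$. By the observation above these left eigenmatrices are precisely $R_{-n}$, so $\lim_{n\to-\infty}R_n=Z_1'$ a.s. The same lemma further gives $\psi_k^{*}\cdot Z_{k+1}'=Z_k'$ a.s.\ for every $k\in\mathbb{N}$, where $Z_k'$ is the orbit of $Z_1'$ under the ambient ergodic map, namely $Z_k'=Z_1'\circ(\theta^{-1})^{k-1}=Z_1'\circ\theta^{-k+1}$; this establishes item (1). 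Item (2) is then obtained by applying Lemma \ref{lem:quantz} verbatim to the sequence $\Psi^{(n)}$ (with $\varphi_k\rightsquigarrow\psi_k$ and $Z_k\rightsquigarrow Z_k'$), and the pointwise convergence claim in item (2) follows from this bound together with $c(\psi_k^{*}\circ\cdots\circ\psi_n^{*})\to 0$ a.s., which is Corollary \ref{cConvergeTo0} for the $\Psi$ process (guaranteed by Lemma \ref{contractionCoefficientPsi}).

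The only real obstacle is bookkeeping: one must be careful to track the interchange of left versus right eigenmatrices under the adjoint $\Phi^{(-n)}\leftrightarrow\Psi^{(n)}$, and to remember that when applying the ``forward time'' lemmas to the reversed sequence, the relevant ergodic map is $\theta^{-1}$, so that the cocycle relation $Z_{k+1}'=Z_1'\circ\theta^{-k}$ (rather than $\theta^{k}$) emerges with the correct sign. Once this dictionary is set, the proof is essentially a direct transcription of the earlier lemmas.
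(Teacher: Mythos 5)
Your argument is correct, and it is the natural way to obtain this result from the tools already in the paper: since $\Phi^{(-n)}=\Psi^{(n)*}$, the right Perron--Frobenius eigenmatrix $R_{-n}$ of $\Phi^{(-n)}$ is exactly the left eigenmatrix of $\Psi^{(n)}$, and $\Psi^{(n)}$ is a cocycle of the form \eqref{Phi} driven by $\varphi_0^*$ over the ergodic map $\theta^{-1}$, which inherits Assumption~\ref{1}. Applying Lemmas~\ref{existanceofz} and~\ref{lem:quantz} to $\Psi^{(n)}$ then gives everything claimed; the paper itself delegates the proof to the cited reference, and your reduction tracks the left/right bookkeeping correctly (including the sign in $Z_k'=Z_1'\circ\theta^{-k+1}$ coming from using $\theta^{-1}$ as the driving map).
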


If instead we take $n\rightarrow \infty$, we do not have almost sure convergence of $R_n$.  However, we do have convergence in distribution:
\begin{cor}
    We have that 
    \begin{equation}
        R_n \xrightarrow[n\to\infty]{d} Z'_1 \quad \text{and}\quad   L_n \xrightarrow[n\to-\infty]{d} Z_1 \ ,
    \end{equation}
    where $\xrightarrow{d}$ denotes convergence in distribution.
\end{cor}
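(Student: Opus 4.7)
The plan is to use the invertibility and measure-preserving property of $\theta$ to convert both distributional limits into the almost-sure convergences already supplied by Lemmas \ref{existanceofz} and \ref{existanceofz'}. Since $\varphi_k = \varphi_0\circ\theta^k$ and $\theta^{n+1}$ is a measure-preserving bijection, the pointwise identity $(\varphi_1(\omega),\ldots,\varphi_n(\omega))=(\varphi_{-n}(\theta^{n+1}\omega),\ldots,\varphi_{-1}(\theta^{n+1}\omega))$ yields
$$(\varphi_1,\ldots,\varphi_n) \ \overset{d}{=} \ (\varphi_{-n},\ldots,\varphi_{-1})$$
as $\mapspace^n$-valued random variables. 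Applying the (continuous) reverse-order composition map and then the adjoint produces
$$\Phi^{(n)} \ \overset{d}{=} \ \Phi^{(-n)} \quad\text{and}\quad \Phi^{(n)*} \ \overset{d}{=} \ \Phi^{(-n)*}$$
for each $n\ge 1$.

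Next I would verify that the Perron--Frobenius assignment $\phi\mapsto R(\phi)$ is Borel measurable on the set of strictly positive $\phi\in\mapspace$. For such $\phi$, Lemma \ref{propertiesOfC} gives $\cnum{\phi}<1$, so $R(\phi)$ is the unique fixed point of the contraction $X\mapsto\phi\cdot X$ on the compact metric space $(\mathbb{S}_D,\mathrm{d})$; hence $R(\phi)=\lim_{k\to\infty}\phi^k\cdot X_0$ for any fixed $X_0\in\mathbb{S}_D^\circ$, which realizes $R$ as a pointwise limit of continuous functions and is therefore measurable. The analogous statement for the left eigenmatrix follows by applying this to $\phi^*$. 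Pushing the distributional identities above through these measurable maps (on the event $\{n\ge\tau'\}$, whose probability tends to $1$) produces
$$R_n \ \overset{d}{=} \ R_{-n} \quad\text{and}\quad L_n \ \overset{d}{=} \ L_{-n}.$$

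The corollary then follows immediately. Lemma \ref{existanceofz'} gives $R_{-n}\to Z_1'$ almost surely as $n\to\infty$, hence $R_{-n}\xrightarrow{d} Z_1'$, which combined with the distributional equality produces $R_n\xrightarrow[n\to\infty]{d} Z_1'$; symmetrically, Lemma \ref{existanceofz} gives $L_m \to Z_1$ almost surely as $m\to\infty$, which together with $L_{-m}\overset{d}{=} L_m$ yields $L_n\xrightarrow[n\to-\infty]{d} Z_1$. The only substantive point requiring genuine care is the Borel measurability of the Perron--Frobenius assignment sketched above; beyond that, the argument is just the routine observation that if $X_n\overset{d}{=} Y_n$ for every $n$ and $Y_n\xrightarrow{d} Z$, then $X_n\xrightarrow{d} Z$.
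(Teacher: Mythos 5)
Your argument is correct and rests on the same core idea as the paper: the measure-preserving shift $\theta^{n+1}$ relates the forward and backward products, so $R_n\overset{d}{=}R_{-n}$ and the distributional limit follows from the a.s.\ convergence in Lemma~\ref{existanceofz'}. However, you take an unnecessary detour: the paper simply observes the pointwise identity $R_{n;\omega}=R_{-n;\theta^{n+1}\omega}$ (since $\Phi^{(n)}_\omega=\Phi^{(-n)}_{\theta^{n+1}\omega}$ and they share the same Perron--Frobenius eigenmatrix), which gives $R_n\overset{d}{=}R_{-n}$ immediately because $R_n$ and $R_{-n}$ are already known to be random variables; your route through distributional equality of the map tuples and then a freshly argued measurability of the assignment $\phi\mapsto R(\phi)$ re-derives something already available and introduces technicalities (e.g.\ completeness of $(\mathbb{S}_D,\mathrm{d})$ for the contraction fixed-point argument) that the direct pointwise identity sidesteps.
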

\begin{proof} 
    Note that $R_n=R_{-n;\theta^{n+1}}$, so that $R_n \overset{d}=R_{-n}$.  Since $\lim_{n\to \infty}R_{-n}=Z_1'$ a.s., the first limit holds. The proof for the second limit is similar. 
\end{proof}

\section{Proof of the Law of Large Numbers}\label{sec:lln} 

We now describe the proof of Theorem \ref{lln}. Recall from the discussion following the statement of the theorem above, that it suffices to prove eq.\ \eqref{eq:firstpart}, which states that
$$\lim_{n\to \infty} \sup_{X,Y \in \mathbb{S}_D} \Big | \frac{1}{n}\ln \inner{Y}{\Phi^{(n)}(X)} - l \Big | \ = \ 0 \quad \text{a.s.} .$$
To this end, note that by Assumption \ref{2} we have $\avg [\ln \norm{\varphi^*_k(Z_{k+1})}] < \infty$ for each $k \in \mathbb{N}$. Thus by Birkhoff's ergodic theorem we have 
$$
    \lim_{n\to\infty} \frac{1}{n} \sum_{k=1}^n \ln \norm{\varphi^*_k(Z_{k+1})}\ \overset{a.s}= \ \avg \ln \norm{\varphi^*_0(Z_1)} \ := \ l.
$$
Thus eq.\ \eqref{eq:firstpart}, and therefore Theorem \ref{lln}, follows from the following
\begin{lemma}\label{En&Dn}
    Suppose that Assumption \ref{1} holds and let
    \begin{equation}\label{seqDn}
        D_n \ = \ \sup_{X,Y \in \mathbb{S}_D} \big\{\big| \ln \inner{Y}{\Phi^{(n)}(X)} - \ln \norm{\Phi^{(n)*}(Y)}\big|\big\} \ ,
    \end{equation}
    and
    \begin{equation}\label{seqEn}
        E_n \ := \ \sup_{Y \in \mathbb{S}_D} \Big\{\Big| \ln \norm{\Phi^{(n)*}(Y)} - \sum_{k=1}^n \ln \norm{\varphi^*_{k,\omega}(Z_{k+1})} \Big|\Big\} \ .
    \end{equation} 
    for $n\geq 1$.  Then, with probability one, 
    \begin{enumerate}
        \item $D_n$ is eventually bounded, i.e., $\displaystyle{\limsup_{n\to\infty}} D_n < \infty$, and
        \item $\displaystyle{\lim_{n\to \infty}} \tfrac{1}{n} E_n \ = \ 0$.
    \end{enumerate}    
\end{lemma}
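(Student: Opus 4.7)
I would prove Part 1 and Part 2 in sequence.

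\emph{Part 1.} The key identity is $\inner{Y}{\Phi^{(n)}(X)} = \norm{\Phi^{(n)*}(Y)}\cdot\inner{\Phi^{(n)*}\cdot Y}{X}$, so the task reduces to showing that $\ln\inner{\Phi^{(n)*}\cdot Y}{X}$ is bounded uniformly in $X,Y\in\mathbb{S}_D$ for large $n$. Iterating the fixed-point relation from Lemma \ref{existanceofz} yields $Z_1 = \Phi^{(n)*}\cdot Z_{n+1}$ for every $n$; parts (1) and (5) of Lemma \ref{propertiesOfC} combined with Corollary \ref{cConvergeTo0} then give
\[
\dis{\Phi^{(n)*}\cdot Y}{Z_1}\;\leq\; c(\Phi^{(n)*})\;=\; c(\Phi^{(n)})\;\to\; 0\quad\text{a.s. as }n\to\infty,\text{ uniformly in }Y.
\]
By Lemma \ref{equvilanceOfMatrics}(1) this is trace-norm convergence, and since $Z_1\in\mathbb{S}_D^\circ$ a.s., one has $\lambda_{\min}(\Phi^{(n)*}\cdot Y)\geq\tfrac12\lambda_{\min}(Z_1)>0$ uniformly in $Y$ for large $n$. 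Standard trace inequalities ($\tr(AX)\geq\lambda_{\min}(A)\tr X$ and $\tr(AX)\leq\lambda_{\max}(A)\leq 1$ for $A,X\in\mathbb{S}_D$) then force $\tfrac12\lambda_{\min}(Z_1)\leq \inner{\Phi^{(n)*}\cdot Y}{X}\leq 1$ uniformly, giving $\limsup_n D_n<\infty$ a.s.

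\emph{Part 2.} By inductively pulling out normalization factors at each stage of the composition defining $\Phi^{(n)*}(Y)$, one obtains the multiplicative identity
\[
\norm{\Phi^{(n)*}(Y)}\;=\;\prod_{k=1}^n \norm{\varphi_k^*(B_k^{(n,Y)})},\qquad B_k^{(n,Y)}:=(\varphi_{k+1}^*\circ\cdots\circ\varphi_n^*)\cdot Y,
\]
(with $B_n^{(n,Y)}:=Y$). This expresses $\ln\norm{\Phi^{(n)*}(Y)}$ as a sum whose $k$-th term should be close to $\ln\norm{\varphi_k^*(Z_{k+1})}$: by Lemma \ref{meanValueIneqaulity} (valid by Remark \ref{MVIForAdjoint}) together with the trivial bound $v(\varphi_k^*)\leq\norm{\varphi_k^*(X)}\leq\norm{\varphi_k^*}$ for $X\in\mathbb{S}_D$,
\[
\bigl|\ln\norm{\varphi_k^*(B_k^{(n,Y)})}-\ln\norm{\varphi_k^*(Z_{k+1})}\bigr|\;\leq\;\min\!\Bigl(2\tfrac{\norm{\varphi_k^*}}{v(\varphi_k^*)}\dis{B_k^{(n,Y)}}{Z_{k+1}},\;\ln\tfrac{\norm{\varphi_k^*}}{v(\varphi_k^*)}\Bigr).
\]
As in Part 1, iterating $Z_k=\varphi_k^*\cdot Z_{k+1}$ and applying Lemma \ref{propertiesOfC}(1) gives the uniform estimate $\dis{B_k^{(n,Y)}}{Z_{k+1}}\leq c(\varphi_{k+1}^*\circ\cdots\circ\varphi_n^*)$.

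To control $E_n/n$, fix $\alpha\in(0,1)$, set $n_\alpha:=\lfloor(1-\alpha)n\rfloor$, and split the bound at $k=n_\alpha$. For the tail $n_\alpha<k\leq n$ the second (trivial) bound together with Birkhoff's ergodic theorem gives
\[
\frac{1}{n}\sum_{k=n_\alpha+1}^n \ln\tfrac{\norm{\varphi_k^*}}{v(\varphi_k^*)}\;\to\;\alpha\bigl(\avg\ln\norm{\varphi_0^*}-\avg\ln v(\varphi_0^*)\bigr)\quad\text{a.s.,}
\]
which vanishes as $\alpha\downarrow 0$. For the head $k\leq n_\alpha$, monotonicity of the contraction coefficient (Lemma \ref{propertiesOfC}(4)) dominates $c(\varphi_{k+1}^*\circ\cdots\circ\varphi_n^*)\leq c(\varphi_{n_\alpha+1}^*\circ\cdots\circ\varphi_n^*)=:c_n^\alpha$, and Lemma \ref{shiftedSubaditiveErgodic} together with $c(\phi^*)=c(\phi)$ yields $\tfrac{1}{n}\ln c_n^\alpha\to\alpha\ln\kappa<0$ a.s.

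The main obstacle is then controlling this head sum, since the prefactors $\norm{\varphi_k^*}/v(\varphi_k^*)$ themselves are not integrable; only their logarithms are, by Assumption \ref{2}. The saving observation is that $\tfrac{1}{n}\ln\norm{\varphi_n^*}\to 0$ and $\tfrac{1}{n}\ln v(\varphi_n^*)\to 0$ a.s.\ (as differences of consecutive Birkhoff averages), from which $\tfrac{1}{n}\max_{k\leq n}\ln(\norm{\varphi_k^*}/v(\varphi_k^*))\to 0$ a.s. The head sum is therefore bounded by $2c_n^\alpha\cdot n\cdot\max_{k\leq n}(\norm{\varphi_k^*}/v(\varphi_k^*))=n\cdot e^{n(\alpha\ln\kappa+o(1))}$, which divided by $n$ tends to $0$ since $\alpha\ln\kappa<0$; sending $\alpha\downarrow 0$ then gives $\tfrac{1}{n}E_n\to 0$ a.s.
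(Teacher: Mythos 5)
Your proposal is correct and reaches the same conclusions, but the two parts differ in how much they resemble the paper. Part~2 follows the paper's argument almost verbatim: the same telescoping decomposition $\ln\norm{\Phi^{(n)*}(Y)}=\sum_{k=1}^{n}\ln\norm{\varphi_k^*((\varphi_{k+1}^*\circ\cdots\circ\varphi_n^*)\cdot Y)}$, the same application of Lemma~\ref{meanValueIneqaulity} combined with the contraction coefficient and Lemma~\ref{lem:quantz}, the same split at $n_\alpha$, Birkhoff for the tail, and Lemma~\ref{shiftedSubaditiveErgodic} for the head; the only minor variation is that you control $\norm{\varphi_k^*}/v(\varphi_k^*)=e^{o(k)}$ via differences of Birkhoff sums, while the paper uses a Borel--Cantelli estimate --- both are adequate. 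Part~1, by contrast, takes a genuinely different route. The paper evaluates at the stopping time $\tau$: strict positivity of $\Phi^{(\tau)}$ yields $a:=\min_{P}\min\sigma(\Phi^{(\tau)}(P))>0$ over rank-one projections $P$, from which $\inner{Y}{\Phi^{(n)}(X)}\geq \frac{a}{D\norm{\Phi^{(\tau)*}}}\norm{\Phi^{(n)*}(Y)}$ for $n>\tau$, giving an explicit $\tau$-measurable constant bounding $D_n$. You instead exploit the contraction directly: since $Z_1=\Phi^{(n)*}\cdot Z_{n+1}$, Lemma~\ref{propertiesOfC}(1,5) and Corollary~\ref{cConvergeTo0} give $\dis{\Phi^{(n)*}\cdot Y}{Z_1}\leq c(\Phi^{(n)*})=c(\Phi^{(n)})\to 0$ uniformly in $Y$ once $\Phi^{(n)}$ is strictly positive, and positivity of $\lambda_{\min}(Z_1)$ then pins $\inner{\Phi^{(n)*}\cdot Y}{X}$ into $[\tfrac12\lambda_{\min}(Z_1),1]$ for large $n$. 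Both arguments use only Assumption~\ref{1}; yours is more conceptual in that it exhibits $Z_1$ as the attractor of the projective action, whereas the paper's is a more elementary direct estimate at the first time of strict positivity, bypassing the convergence machinery entirely. One small point worth making explicit: your uniform-in-$Y$ bound $\dis{\Phi^{(n)*}\cdot Y}{Z_1}\leq c(\Phi^{(n)*})$ requires $\Phi^{(n)*}$ to be non-destructive, which under Assumption~\ref{1} holds for $n\geq\tau$ --- this is implicit in your ``for large $n$'' but should be stated.
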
 
\begin{remark} 
    Note that from $\lim_n \frac{1}{n}E_n$ we conclude directly that $\lim_{n} \frac{1}{n} \ln \norm{\Phi^{(n)*}(Y)} = l$ for every $Y\in \mathbb{S}_D$.  In particular $l=\lambda$, the top Lyapunov exponent of $\Phi^{(n)}$, as claimed in the discussion following Theorem \ref{lln} above.
\end{remark}

\begin{proof}
    First note that for any $X,Y\in \mathbb{S}_D$,
    \begin{equation}\label{innerProductUpperBound}\inner{Y}{\Phi^{(n)}(X)} \ = \ \inner{\Phi^{(n)*}(Y)}{X} \ \le \ \inner{\Phi^{(n)*}(Y)}{\mathbb{I}} \ = \ \norm{\Phi^{(n)*}(Y)} \ ,\end{equation}
    Here we have used that $X\le \mathbb{I}$ and $\tr M = \norm{M}$ for any positive semi-definite matrix.

    For the rest of the proof, we restrict to a configuration $\omega$ such that $\tau=\tau_\omega < \infty$.  Such configurations form a full measure set by Assumption \ref{1}.
    
    Because $\Phi^{(\tau)}$ is strictly positive, we have $\min \sigma(\Phi^{(\tau)}(P)) > 0 $ for any projection $P$, where $\sigma(\Phi^{(\tau)}(P))$ denotes the spectrum of $\Phi^{(\tau)}(P)$.  Thus the map $P\mapsto \min \sigma(\Phi^{(\tau)}(P)) $ is a continuous function from the set of rank-$1$ projections into $(0,\infty)$.  Since the set of rank-$1$ projections is compact, we have
    $$ a \ := \ \min \left \{ \min (\sigma(\Phi^{(\tau)}(P))) \ : \ P \text{ is a rank-$1$ projection} \right \} \ > \ 0 .$$

    Given $X,Y\in \mathbb{S}_D$ and $n>\tau$, let $W= \varphi^*_{\tau+1}\circ\ldots\circ\varphi^*_n (Y)$. Because $X$ has at least one eigenvalue greater than or equal to $\tfrac{1}{D}$, we have $X\ge \tfrac{1}{D}P$ for some rank-$1$ projection $P$, and thus
    \begin{equation*}
        \inner{Y}{\Phi^{(n)}(X)} \ = \ \inner{W}{\Phi^{(\tau)}(X)} \ \geq \ \frac{1}{D} \inner{W}{\Phi^{(\tau)}(P)} \ \geq \ \frac{a}{D} \inner{W}{\mathbb{I}}  \ =  \ \frac{a}{D} \norm{W} \ .
    \end{equation*}
    Since $\norm{\Phi^{(n)*}(Y)} \ = \norm{\Phi^{(\tau)*}(W)} \ \le \ \norm{\Phi^{(\tau)*}} \ \norm{W}$,
    we have
    \begin{equation}\label{innerProductLowerBound}
        \inner{Y}{\Phi^{(n)}(X)} \ \ge \frac{a}{D \norm{\Phi^{(\tau)*}}} \norm{\Phi^{(\tau)*}(Y)} \ .
    \end{equation}
    Putting  eqs.\ \eqref{innerProductUpperBound} and \eqref{innerProductLowerBound} together, we see that 
    $$ \ln a -\ln D -\ln \norm{\Phi^{(\tau)*}} \ \le \ \ln\inner{Y}{\Phi^{(n)}(X)} - \ln \norm{\Phi^{(\tau)*}(Y)} \ \le \ 0 $$
    for $X,Y\in \mathbb{S}_D$ and $n>\tau$. It follows that $\limsup_n D_n \le \ln D + \ln \norm{\Phi^{(\tau)*}} - \ln a < \infty$ whenever $\tau < \infty$.

    Turning now to the proof that $\lim_n \frac{1}{n}E_n = 0 $, consider $n >\tau$.  Note that 
    $$\norm{\Phi^{(n)*}(Y)} \ = \ \norm{\phi_1^*( \phi_{2}^*\circ \cdots \circ \phi_n^*(Y))} \ = \ \norm{\phi_1^*\left((\phi_{2}^*\circ \cdots \circ \phi_n^*)\cdot Y \right)} \, \norm{\phi_{2}^*\circ \cdots \circ \phi_n^*(Y)} \ ,$$
    where in the final expression we have introduced the projective action by multiplying and dividing by $ \norm{\phi_{2}^*\circ \cdots \circ \phi_n^*(Y)}  = \tr \phi_{2}^*\circ \cdots \circ \phi_n^*(Y)$.  Taking logarithms and iterating, we find that
    \begin{equation*}
        \ln \norm{\Phi^{(n)*}(Y)} \ = \ \sum_{k=1}^{n} \ln \norm{\varphi_k^*((\varphi_{k+1}^*\circ\ldots\circ\varphi_n^*)\cdot Y)} \ ,
    \end{equation*}
    where the empty composition $\varphi_{n+1}^*\circ\ldots\circ \varphi_n^*$ is understood as the identity map.
    Thus
    \begin{multline}
    \label{eq:firststepEn}
        E_n(Y) \ := \ \big|\ln\norm{\Phi^{(n)*}(Y)} - \sum_{k=1}^n \ln \norm{\varphi^*_k(Z_{k+1})}\big| \ \le \ \sum_{k=1}^n E_n^k(Y) \ , \\
        \text{ with } \quad E_n^k(Y) \ := \  \big | \ln \norm{\varphi_k^*((\varphi_{k+1}^*\circ\ldots\circ\varphi_n^*)\cdot Y)} - \ln \norm{\varphi^*_k(Z_{k+1})}\big| \ .
    \end{multline}
    Using Lemma \ref{meanValueIneqaulity}, Remark \ref{MVIForAdjoint}, and Lemma \ref{lem:quantz} we may bound $E_n^k(Y)$ as follows
    \begin{equation}\label{eq:EnkY}
        E_n^k(Y) \  \le \ 2 \frac{\norm{\varphi^*_k}}{v(\varphi^*_k)} \cnum{\varphi_{k+1}^*\circ\ldots\circ\varphi_n^*} \ .
    \end{equation}
    
    Now let $\alpha\in (0,1)$ and let $n_\alpha$ be the integer part of $(1-\alpha)n$.  We will bound the terms on the right hand side of \eqref{eq:firststepEn} differently according to if $k< n_\alpha$ or $k \ge n_\alpha$.  For $k < n_\alpha$, we have 
    \[ E_n^k(Y) \ \le \ 2 \frac{\norm{\varphi^*_k}}{v(\varphi^*_k)} \cnum{\varphi_{n_\alpha}^*\circ\ldots\circ\varphi_n^*} \ ,
    \]
    where we have used eq.\ \eqref{eq:EnkY} and applied Lemma \ref{propertiesOfC} to bound $\cnum{\varphi_{k+1}^*\circ\ldots\circ\varphi_n^*}\le \cnum{\varphi_{n_\alpha}^*\circ\ldots\circ\varphi_n^*}$.
    For $k\ge n_\alpha$, on the other hand, we have 
    \[ E_n^k(Y) \ \le \ \big |\ln \norm{\varphi_k^*((\varphi_{k+1}^*\circ\ldots\circ\varphi_n^*)\cdot Y)}\big | + \big | \ln \norm{\varphi^*_k(Z_{k+1})}  \big |
    \ \le \ 2 \Big ( \big | \ln v(\varphi_k^*)\big | + \big | \ln \norm{\varphi_k^*} \big | \Big ) \ .
    \]
    Thus 
    \begin{equation}\label{eq:Enbound}
        E_n \ = \ \sup_{Y\in \mathbb{S}_D} E_n(Y) \ \le \ S^<_n  + S^>_n
    \end{equation}
    with 
    \begin{align*}
        S^<_n \ =& \  2 \sum_{k=1}^{n_\alpha-1} \frac{\norm{\varphi^*_k}}{v(\varphi^*_k)} \cnum{\varphi_{n_\alpha}^*\circ\ldots\circ\varphi_n^*}  \ , 
        \intertext{and} 
        S^>_n \ =& \   2 \sum_{k=n_\alpha}^{n} \Big ( \big | \ln v(\varphi_k^*)\big | + \big | \ln \norm{\varphi_k^*} \big | \Big )\ .
    \end{align*}
    We will prove that $\lim_n S^<_n = 0$ and $\lim_n \frac{1}{n}S^>_n=O(\alpha)$. 
    
    Note that by Assumption \ref{2} we have $\avg[|\ln(\frac{ \norm{\varphi_0^*}}{ v(\varphi^*_0)})|] <\infty  $.  Thus, for any $\epsilon > 0$,
    \begin{align*}
        \infty \ > \ \frac{1}{\epsilon} \avg[|\ln(\tfrac{ \norm{\varphi_0^*}}{ v(\varphi^*_0)})|] 
            \ \ge& \  \sum_{k=1}^\infty  \prob \big( \ln(\tfrac{ \norm{\varphi_0^*}}{ v(\varphi^*_0)}) > k\epsilon\big) \\
        =& \ \sum_{k=1}^\infty \prob \big( \ln(\tfrac{ \norm{\varphi_0^*}}{ v(\varphi^*_0)}) >k\epsilon\big) \ 
            =  \ \sum_{k=1}^\infty \prob\big( \tfrac{\norm{\varphi_k^*}}{v(\varphi_k^*)} > e^{k\epsilon}\big) \ .
    \end{align*}
    Hence, by the Borel-Cantelli Lemma, we find that $\limsup_{k} e^{k\epsilon} \tfrac{\norm{\varphi_k^*}}{v(\varphi_k^*)}  \le 1 $ with probability one.  Taking $\epsilon <  \alpha |\ln \kappa|$, we conclude from Lemma \ref{shiftedSubaditiveErgodic} that
    $$\limsup_{n\to \infty} S_n^< \ \le \ \limsup_{n\to \infty} n_\alpha e^{\epsilon n_\alpha} \cnum{(\varphi_{n_\alpha}^*\circ\ldots\circ\varphi_n^*)} \ = \ 0 \ .$$
    In particular, we also have $\lim_n \frac{1}{n} S_n^< = 0$.
    
    Now consider $S_n^>$.  Since $\ln v(\phi_0^*) $ and $\ln \norm{\phi_0^*}$ are $L^1$ random variables by Assumption \ref{1}, we conclude from the Birkhoff ergodic theorem \cite{birkhoff1931proof} that
    $$\lim_n \frac{1}{n} S_n^> \ = \ 2\alpha  \left [\avg |\ln\norm{\varphi_0^*}| + \avg |\ln v(\varphi_0^*)|\right ] \ . $$
    We conclude that $\limsup_n \frac{1}{n}E_n \ = \ O(\alpha)$.  Since $\alpha \in (0,1)$ was arbitrary, we have $\lim_n \frac{1}{n}E_n=0$.
\end{proof}

\section{Proof the Central Limit Theorem}\label{sec:proofCLT}
In this section we prove Theorem \ref{CLT}.  Let $(X_n)_{n\geq1}$ and $(Y_n)_{n\geq1}$ be sequences in $\mathbb{S}_n$.  Then
\begin{multline*}
     \frac{1}{\sqrt{n}}\big( \ln \inner{Y_n}{\Phi^{(n)}(X_n)} - nl\big) \
     = \ \frac{1}{\sqrt{n}} \big (\ln \inner{Y_n}{\Phi^{(n)}(X_n)}- \ln \norm{\Phi^{(n)*}(Y_n)} \big ) \\
     +  \frac{1}{\sqrt{n}}\big(\ln \norm{\Phi^{(n)*}(Y_n)}- \sum_{k=1}^n\ln \norm{\varphi^*_k(\omega)(Z_{k+1}(\omega))} \big )
        + \dfrac{1}{\sqrt{n}}\sum_{k=1}^n\xi_k \ ,
\end{multline*}
where $\xi_k = \ln \norm{\varphi^*_k(Z_{k+1})} - l$. Thus
\begin{equation*}
  \left |   \frac{1}{\sqrt{n}}\big( \ln \inner{Y_n}{\Phi^{(n)}(X_n)} - nl\big) - \dfrac{1}{\sqrt{n}}\sum_{k=1}^n\xi_k \right | \ \le \ \frac{1}{\sqrt{n}} (D_n + E_n) 
\end{equation*}
with $D_n$ and $E_n$ as in eqs.\ \eqref{seqDn} and \eqref{seqEn}, respectively.
By Lemma \ref{En&Dn}, $D_n$ is almost surely eventually bounded.   Thus to prove that $\big( \frac{1}{\sqrt{n}} \ln \inner{Y_n}{\Phi^{(n)}(X_n)}\big )_{n\ge 1}$ converges in distribution to a centered normal variable, it suffices to prove the following two results:
\begin{enumerate}
    \item $\tfrac{1}{\sqrt{n}} E_n$ converges to $0$ in probability, and
    \item $Q_n := \displaystyle{\frac{1}{\sqrt{n}}\sum_{k=1}^n} \xi_k$ converges in distribution to a centered normal variable with variance given by eq.\ \eqref{sigma} above.
\end{enumerate}
These results are proved in Lemma \ref{sqrtEn} and Lemma \ref{MartingaleApproxLemma} below, respectively.  
		
\begin{lemma}\label{sqrtEn}
    Suppose that $\theta$ is invertible and that Assumption \ref{1} holds. Let $(E_n)_{n=1}^\infty$ be the variables defined in eq.\ \eqref{seqEn}. Then $(E_n)_{n=1}^\infty$ is tight. In particular, $(\frac{1}{\sqrt{n}} E_n)_{n=1}^\infty$ converges to $0$ in probability.
\end{lemma}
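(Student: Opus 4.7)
The plan is to pathwise dominate $E_n$ by a random sum which, after a measure-preserving time reversal, is controlled by an almost-surely convergent series. This yields tightness of $E_n$, and the convergence $\tfrac{1}{\sqrt n}E_n \to 0$ in probability follows automatically since $\prob(E_n/\sqrt n>\delta) \le \prob(E_n>K)$ once $\delta\sqrt n>K$.

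First I will extract from the argument in the proof of Lemma~\ref{En&Dn} the pathwise bound
\begin{equation*}
    E_n \ \le \ 2\sum_{k=1}^n \gamma_k \, \cnum{\varphi_{k+1}^*\circ\cdots\circ\varphi_n^*} \ =: \ 2 S_n, \qquad \gamma_k \ := \ \frac{\norm{\varphi_k^*}}{v(\varphi_k^*)} \ ,
\end{equation*}
by applying the mean-value inequality Lemma~\ref{meanValueIneqaulity} termwise to the telescoping identity for $\ln\norm{\Phi^{(n)*}(Y)}$, and then bounding the resulting projective distance uniformly in $Y$ via Lemma~\ref{propertiesOfC}(1) together with the identity $Z_{k+1} = (\varphi_{k+1}^*\circ\cdots\circ\varphi_n^*)\cdot Z_{n+1}$, which follows by iterating $\varphi_k^*\cdot Z_{k+1}=Z_k$ from Lemma~\ref{existanceofz}.

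Next I will apply the measure-preserving change of variable $\omega\mapsto\theta^{-n}\omega$, which is available because $\theta$ is invertible. The cocycle identities $\gamma_k\circ\theta^{-n}=\gamma_{k-n}$ and $\varphi_k^*\circ\theta^{-n}=\varphi_{k-n}^*$, the substitution $j=n-k$, and the further identification $\varphi_{1-j}^*\circ\cdots\circ\varphi_0^* = \Psi^{(j)}_{\theta}$ (verified by expanding $\psi_i(\theta\omega)=\varphi_{-i}^*(\theta\omega)=\varphi_{1-i}^*(\omega)$) give
\begin{equation*}
    S_n \ \overset{d}{=} \ \tilde S_n \ := \ \sum_{j=0}^{n-1} \gamma_{-j} \, \cnum{\Psi^{(j)}_{\theta}} \ ,
\end{equation*}
a sequence which is non-decreasing in $n$. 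Tightness of $E_n$ therefore reduces to showing that $\tilde S_\infty := \lim_n \tilde S_n < \infty$ almost surely, since $\prob(E_n>2K) \le \prob(\tilde S_n>K) \le \prob(\tilde S_\infty>K) \to 0$ as $K\to\infty$ uniformly in $n$.

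The a.s.\ finiteness of $\tilde S_\infty$ will come from combining two geometric decay estimates. Since Theorem~\ref{CLT} assumes Assumption~\ref{2'} with $p\ge 2$, in particular $\ln\gamma_0 \in L^1$, so by stationarity and Borel--Cantelli (as in the proof of Lemma~\ref{En&Dn}), $\gamma_{-j}\le e^{j\epsilon}$ eventually a.s.\ for any fixed $\epsilon>0$. Meanwhile Lemma~\ref{contractionCoefficientPsi} gives $\cnum{\Psi^{(j)}_{\theta}}\le (\kappa')^j$ eventually a.s.\ for any $\kappa'\in(\kappa,1)$. Choosing $\epsilon$ small enough that $\kappa'e^\epsilon<1$ makes the tail of $\tilde S_\infty$ geometrically summable, completing the proof. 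The main obstacle is the time-reversal bookkeeping in the second step: correctly tracking the cocycle to identify the forward composition on $\theta^{-n}\omega$ with the backward composition $\Psi^{(j)}_{\theta}$, so that the geometric contraction of Lemma~\ref{contractionCoefficientPsi} (rather than the weaker forward-direction estimate) is what drives the summability.
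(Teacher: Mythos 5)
Your proposal is correct and follows essentially the same route as the paper: extract the pathwise bound $E_n \le 2\sum_{k=1}^n \frac{\norm{\varphi_k^*}}{v(\varphi_k^*)}\, c(\varphi_{k+1}^*\circ\cdots\circ\varphi_n^*)$ from Lemma \ref{En&Dn}, time-reverse by the measure-preserving map $\theta^{-n}$ to obtain a distributionally equal non-decreasing sequence, and show that its almost-sure limit is finite by combining the Borel--Cantelli control on $\norm{\varphi_{-k}^*}/v(\varphi_{-k}^*)$ with the geometric decay of $c(\Psi^{(k)})$ from Lemma \ref{contractionCoefficientPsi}. The one small improvement you made over the paper's statement is to flag explicitly that the Borel--Cantelli step requires the integrability of $\ln\norm{\varphi_0^*}$ and $\ln v(\varphi_0^*)$ (from Assumption \ref{2} or \ref{2'}), which the paper uses but does not list among the hypotheses of Lemma \ref{sqrtEn}.
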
 
\begin{proof}
    Following the proof of eq.\ \eqref{eq:Enbound} above, but applying in the proof of Lemma \ref{En&Dn}, we have
    \[
        E_n \ \le \ S_n \ := \ 2\sum_{k=1}^n \frac{\norm{\varphi_{k}^*}}{v(\varphi^*_{k})}\cnum{\varphi^*_{k+1} \circ \ldots \circ \varphi^*_n} \ .
    \]
    We prove that $E_n$ are tight by showing that $S_n \overset{d}{=} S_n'$ where the random variables $S_n'$ satisfy $\sup_n S_n' <\infty$ almost surely.
    
    Consider the variables $S_n'=S_{n;\theta^{-n}}$.  Since $\cnum{\phi^*}=\cnum{\phi}$, we have
    $$
        S_n' \ = \ 2\sum_{k=0}^{n-1} \frac{\norm{\varphi_{-k}^*}}{v(\varphi^*_{-k})}\cnum{\varphi_0\circ \ldots \circ \varphi_{1-k}} \ .
    $$
    As above the empty composition appearing at $k=0$ is understood as the identity map. By the Borel-Cantelli similar to that used to bound $S_n^<$ in the proof of Lemma \ref{En&Dn} we see that 
    $$ \limsup_{k\to \infty} \frac{1}{k} \ln \left ( \tfrac{\norm{\varphi_{-k}^*}}{v(\varphi^*_{-k})}\right )  \ = \ 0 \ \quad \text{a.s.}. $$
    On the other hand by Lemma \ref{contractionCoefficientPsi} we have
    $$ \limsup_{k \to \infty} \frac{1}{k} \ln \cnum{\varphi_{0} \circ \ldots \circ \varphi_{1-k}} \ = \ \ln \kappa \ < \ 0 \quad \text{a.s.}.$$
    It follows that 
    $$\lim_{n\to \infty} S_n' \ = \ 2\sum_{k=0}^{n} \frac{\norm{\varphi_{-k}^*}}{v(\varphi^*_{-k})}\cnum{\varphi_0\circ \ldots \circ \varphi_{1-k}} \ =: \ S_\infty'$$
    is finite almost surely. Clearly $S_\infty'=\sup_n S_n'$.

    Since $S_n \overset{d}{=}S_n'$ we have
    $$\prob \left [ E_n > \epsilon \right ] \ \le \ \prob \left [ S_n > \epsilon \right ] \ = \ \prob \left [ S_n' > \epsilon \right ] \ \le \ \prob \left [ S_\infty' > \epsilon \right ] \ ,$$
    so $(E_n)_{n=1}^\infty$ is tight as claimed.  It follows that
    $$\prob \left [ \tfrac{1}{\sqrt{n}} E_n > \epsilon \right ] \ \le \ \prob \left [ S_\infty' > \sqrt{n} \epsilon \right ] \ \to \ 0 \ ,$$
    so $ \tfrac{1}{\sqrt{n}} E_n$ converges to zero in probability.
\end{proof}
		
To prove the convergence of $Q_n = \frac{1}{\sqrt{n}}\sum_{k=1}^n \xi_k$  to a centered normal law in distribution, we use the martingale approximation method of Gordin \cite{gordin1969central}.  The following proof is adapted from the proof of \cite[Lemma 9.2]{hennion1997limit} and is similar to the proof of \cite[Theorem 1.1]{liverani1996central}. The key idea is to find a \emph{reverse martingale difference} with respect to the filtration $(\mathcal{F}^n)_{n\geq1}$ and use the Central Limit Theorem for (reverse) martingale differences \cite{billingsley2008probability, brown1971martingale, conze2007limit} which was proved independently by Billingsly \cite{billingsley1961lindeberg} and  \cite{ibragimov1963central} for the ergodic case:

\begin{namedtheorem}[Martingale Difference Central Limit Theorem]
    Let $(X_n)_{n\geq1}$ be a stationary ergodic direct or reversed martingale difference with respect to a filtration $\{\mathcal{A}_n\}_{n\geq1}$. If $X_1 \in L^2$, then $\big(\frac{1}{\sqrt{n}}\sum_{k=1}^n X_k \big)_{n\ge 1}$ converges in distribution to a centered normal random variable with variance $\sigma^2 = \avg(X_1^2)$.
\end{namedtheorem}
		
\begin{lemma}\label{MartingaleApproxLemma}
    Suppose that $\theta$ is invertible, that Assumption \ref{1} holds, and Assumption \ref{2'} holds for some $p\ge 2$.
    Let $\xi_k = \ln \norm{\varphi^*_k(Z_{k+1})} - l$ for $k\in \mathbb{Z}$.  If 
    \begin{equation}\label{summability}
        \sum_{n=1}^\infty \norm{\cond{\xi_0}{n}}_q < \infty	\ ,
    \end{equation}
    with $\frac{1}{p}+\frac{1}{q}=1$, then the sequence $(Q_n)_{n=1}^\infty$ given by
    \begin{equation}\label{eq:Qn}
          Q_n \ = \  \frac{1}{\sqrt{n}}\sum_{k=1}^n  \xi_k 
    \end{equation}
    converges in distribution to a centered normal law with variance $\sigma^2 < \infty$. Furthermore $\sigma = 0$ if and only if there exists stationary sequence $(g_n)_{n\geq 1}$ such that 	
    \begin{equation}
        g_n \in L^q(\mathcal{F}^n) \quad \text{ and } \quad \xi_n = g_{n+1} - g_n
    \end{equation}
\end{lemma}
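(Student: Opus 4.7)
The plan is to prove Lemma \ref{MartingaleApproxLemma} by adapting Gordin's martingale approximation method \cite{gordin1969central} to the reverse filtration $(\mathcal{F}^n)_{n\in\mathbb{Z}}$. Observe first that each $\xi_n$ is $\mathcal{F}^n$-measurable: by Lemma \ref{existanceofz}, $Z_1$ is measurable with respect to $\sigma(\varphi_k : k\geq 1)\subset\mathcal{F}^1$, so $Z_{n+1}=Z_1\circ\theta^n\in\mathcal{F}^{n+1}$, and therefore $\varphi_n^*(Z_{n+1})\in\mathcal{F}^n$. By stationarity, $\|\cond{\xi_{n-k-1}}{n}\|_q=\|\cond{\xi_0}{k+1}\|_q$, so hypothesis \eqref{summability} guarantees that the series
\[
g_n \ := \ \sum_{k=0}^\infty \cond{\xi_{n-k-1}}{n}
\]
converges in $L^q$, defining a stationary sequence with $g_n\in L^q(\mathcal{F}^n)$.

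A direct manipulation using the tower property $\cond{\cond{\cdot}{n}}{n+1}=\cond{\cdot}{n+1}$ (since $\mathcal{F}^{n+1}\subset\mathcal{F}^n$) gives the coboundary decomposition
\[
\xi_n \ = \ \tilde d_n + g_{n+1} - g_n, \qquad \text{where} \qquad \tilde d_n \ := \ \sum_{k\geq 0}\bigl(\cond{\xi_{n-k}}{n}-\cond{\xi_{n-k}}{n+1}\bigr),
\]
so that each term of $\tilde d_n$ has conditional mean zero given $\mathcal{F}^{n+1}$. Hence $(\tilde d_n)_{n\in\mathbb{Z}}$ is a stationary ergodic reverse martingale difference sequence with respect to $(\mathcal{F}^n)$. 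Telescoping yields
\[
\sum_{k=1}^n \xi_k \ = \ \sum_{k=1}^n \tilde d_k + (g_{n+1}-g_1),
\]
and stationarity of $(g_n)$ in $L^q\subset L^1$ implies $\tfrac{1}{\sqrt n}(g_{n+1}-g_1)\to 0$ in probability by Markov's inequality. Combined with Lemma \ref{sqrtEn}, this reduces convergence of the sequence in \eqref{SequenceInCLT} to convergence in distribution of $M_n:=\tfrac{1}{\sqrt n}\sum_{k=1}^n\tilde d_k$.

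The main technical obstacle is to establish $\tilde d_0\in L^2$ so that the Billingsley--Ibragimov martingale difference CLT applies to $(\tilde d_n)$ and yields $M_n\Rightarrow\mathcal{N}(0,\sigma^2)$ with $\sigma^2=\mathbb{E}[\tilde d_0^2]$ matching eq.\ \eqref{sigma}. When $p>2$, the $L^q$-convergence of the series defining $\tilde d_0$ is strictly weaker than $L^2$-convergence, so $L^2$-membership requires extra work. The strategy is to approximate $\tilde d_0$ by truncations $\tilde d_0^{(N)}:=\sum_{k=0}^N(\cond{\xi_{-k}}{0}-\cond{\xi_{-k}}{1})$, each in $L^p\subset L^2$, and to exploit the identity
\[
\mathbb{E}\bigl[(\tilde d_0^{(N)})^2\bigr] \ = \ \sum_{j=0}^N\mathbb{E}\bigl[\tilde d_0^{(N)}\,\cond{\xi_{-j}}{0}\bigr],
\]
obtained from $\cond{\tilde d_0^{(N)}}{1}=0$ and $\tilde d_0^{(N)}\in\mathcal{F}^0$. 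Pairing via Hölder between $L^p$ (for $\cond{\xi_{-j}}{0}$, whose $L^p$ norm is bounded by $\|\xi_0\|_p$) and $L^q$ (for $\tilde d_0^{(N)}$, whose $L^q$ norm is controlled uniformly in $N$ by $\sum_k\|\cond{\xi_0}{k}\|_q$), together with stationary bookkeeping of the conditional moments, yields a uniform-in-$N$ bound on $\|\tilde d_0^{(N)}\|_2$. This shows $\tilde d_0\in L^2$ and $\sigma^2<\infty$.

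Finally, the characterization of $\sigma=0$ is a direct corollary of the decomposition. If $\sigma=0$ then $\mathbb{E}[\tilde d_0^2]=0$ so $\tilde d_n=0$ a.s.\ and the decomposition reduces to $\xi_n=g_{n+1}-g_n$ with $g_n\in L^q(\mathcal{F}^n)$. Conversely, if $\xi_n=h_{n+1}-h_n$ for some stationary $(h_n)$ with $h_n\in L^q(\mathcal{F}^n)$, then $\sum_{k=1}^n\xi_k=h_{n+1}-h_1$ is stationary in $n$ and bounded in $L^1$, so $\tfrac{1}{\sqrt n}\sum_{k=1}^n\xi_k\to 0$ in probability by Markov's inequality, forcing $\sigma=0$.
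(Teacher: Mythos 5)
Your decomposition, the observation that $(\tilde d_n)$ is a stationary reverse martingale difference, the negligibility of $\tfrac{1}{\sqrt n}(g_{n+1}-g_1)$, and the invocation of the Billingsley--Ibragimov CLT all match the paper's proof (up to notation: your $\tilde d_n$ is the paper's $\zeta_n$). You also supply the converse direction of the $\sigma=0$ characterization, which the paper leaves implicit. However, there is a genuine gap in the one step you correctly flag as the main difficulty, namely showing $\tilde d_0 \in L^2$ when $p>2$.

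The identity
\[
\mathbb{E}\bigl[(\tilde d_0^{(N)})^2\bigr] = \sum_{j=0}^N \mathbb{E}\bigl[\tilde d_0^{(N)}\,\cond{\xi_{-j}}{0}\bigr]
\]
is correct, but the H\"older pairing you describe does not close: each term is bounded by $\|\tilde d_0^{(N)}\|_q\,\|\cond{\xi_{-j}}{0}\|_p \le C\,\|\xi_0\|_p$, and the hypothesis \eqref{summability} controls the $L^q$ (not $L^p$) norms of $\cond{\xi_0}{j}$, so summing over $j$ only gives a bound of order $N$. Reversing the exponents in H\"older would require a uniform bound on $\|\tilde d_0^{(N)}\|_p$, which is at least as hard as the $L^2$ bound you are trying to establish. ``Stationary bookkeeping'' as stated does not resolve this.

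What actually makes the argument close is a regularization that exploits the special telescoping algebra of the coboundary, not term-by-term H\"older on the sum you wrote. The paper introduces $g_0^\lambda = \sum_{k\ge 1}\lambda^{k-1}\cond{\xi_{-k}}{0}\in L^p$ for $\lambda<1$, uses stationarity and $\cond{\zeta^\lambda_0}{1}=0$ to write
$\|g_1^\lambda-\lambda g_0^\lambda\|_2^2 \le 2\,\mathbb{E}\bigl[g_1^\lambda\,(g_1^\lambda-\lambda\cond{g_0^\lambda}{1})\bigr]$, and then the decisive algebraic identity $g_1^\lambda-\lambda\cond{g_0^\lambda}{1}=\cond{\xi_0}{1}$ collapses the whole estimate to a single H\"older pairing $\le 2\|\xi_0\|_p\,M$, uniform in $\lambda$; Fatou as $\lambda\uparrow 1$ finishes. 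A truncation version of this \emph{can} be made to work, but one must bound $\|g_1^{(N)}-g_0^{(N)}\|_2$ (where $g_0^{(N)}=\sum_{k=1}^N\cond{\xi_{-k}}{0}$) via the analogous identity $g_1^{(N)}-\cond{g_0^{(N)}}{1}=\cond{\xi_0}{1}-\cond{\xi_{-N}}{1}$, and then recover $\tilde d_0^{(N)}=\xi_0-(g_1^{(N)}-g_0^{(N)})-\cond{\xi_{-N}}{1}$. The point is that the $L^2$ norm collapses to a \emph{single} pairing of an $L^q$-bounded object with an $L^p$-bounded object, rather than $N$ such pairings. You should replace the paragraph about $\sum_j\mathbb{E}[\tilde d_0^{(N)}\cond{\xi_{-j}}{0}]$ with one of these two arguments.
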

\begin{proof}
    Let $M:=  \sum_{k=1}^\infty\norm{\cond{\xi_0}{k}}_q <\infty$ by eq.\ \eqref{summability}. We define
    \begin{equation}\label{eq:g}
        g_0 \ := \ \sum_{k=1}^\infty \cond{\xi_{-k}}{0} \ ,
    \end{equation}
    and note that 
    $$\norm{g_0}_q \ \le \ \sum_{k=1}^\infty \norm{\cond{\xi_{-k}}{0}} \ = \  M \ , $$
    since $\theta$ is measure preserving.  Since $\norm{\cdot}_1 \le \norm{\cdot}_q$, the series defining $g_0$ converges in $L^1$ and hence absolutely, almost everywhere.  
    
    We define
    \begin{equation}\label{zetaAndG}
        \zeta_0 \ = \ \sum_{k=0}^\infty \big( \cond{\xi_{-k}}{0} - \cond{\xi_{-k}}{1}\big)\ ,
    \end{equation}
    and note that $ \zeta_0 \ = \ \xi_0 + g_0 - g_0\circ \theta$.  For $n\in \mathbb{Z}$, we now define $\zeta_n = \zeta_0 \circ \theta^n$ and $g_n = g_0 \circ \theta^n$, so that
    \begin{equation}\label{martingaledifferenceapprox}
        \xi_n \ = \ \zeta_n + g_{n+1} - g_n \ .
    \end{equation}
    Since 
    \begin{equation}
        \label{eq:xibound} |\xi_n| \ \le \ \ln \norm{\phi_n^*} + \ln v(\phi_n^*) + |l| \ , 
    \end{equation}
    we have $\xi_n\in L_p\subset L_q$ by Assumption \ref{2'},  so $\zeta_n=\xi_n - g_{n+1}+g_n\in L_q \subset L^1$. 
    Taking conditional expectation with respect to $\mathcal{F}^{n+1}$ in eq.\ \eqref{zetaAndG}, we see that
    \begin{equation}
        \cond{\zeta_n}{n+1} = 0 \ ,
    \end{equation}
    i.e., $(\zeta_n)_{n\geq1}$ is a reverse martingale difference (\emph{reverse} because $(\mathcal{F}^n)_{n\ge 1}$ is a reverse filtration). 
    Now eq.\ \eqref{martingaledifferenceapprox} shows  that 
    \begin{equation}
        \frac{1}{\sqrt{n}} \sum_{k=1}^n \xi_k \ = \ \frac{1}{\sqrt{n}} \sum_{k=1}^n \zeta_n + \frac{1}{\sqrt{n}} (g_{n+1} - g_1) \ .
    \end{equation}
    Since $g_{n+1}=g_{1}\circ \theta^n$, we see that $g_{n+1}-g_1$ is tight and thus $\frac{1}{\sqrt{n}} (g_{n+1} - g_1)$ converges to $0$ in probability.  Therefore, by the Martingale Difference Central Limit Theorem, we will have the required convergence in distribution if we establish that $\zeta_0 \in L^2$.  

    Since $\zeta_0=\xi_0  - (g_1-g_0)$ and $\xi_0\in L_p \subset L_2$ by eq.\ \eqref{eq:xibound}, it suffices to show that $g_1-g_0 \in L^2$.
    We have $g_n \in L^q(\mathcal{F}^n)$, but this does not suffice as $q <2$.  To show that $g_1-g_0\in L^2$ we need to exploit cancellation between the two terms.  To this end, let $\lambda \in (0,1)$ and define
    \begin{equation}\label{eq:glambda0}
        g^\lambda_0 \ = \ \sum_{k=1}^\infty \lambda^{k-1}\cond{\xi_{-k}}{0} \ , 
    \end{equation}
    and define $ g^\lambda_n = g^\lambda_0 \circ \theta^n$ for $n\in \mathbb{Z}$.
    Since $\norm{\cond{\xi_{-k}}{0}}_p \le \norm{\xi_{-k}}_p = \norm{\xi_0}_p$, the convergence factor $\lambda^{k-1}$ in eq.\ \eqref{eq:glambda0} guarantees that $g^\lambda_0\in L^p\subset L^2$.  Furthermore, we have
    \begin{equation}\label{eq:glambda0bound} \norm{g^\lambda_0}_q \ \le \ \sum_{k=1}^\infty \lambda^{k-1} \norm{\cond{\xi_{-k}}{0}}_q \ \le \ M \ , \end{equation}
    since $\lambda \le 1$.

    We will now show that $\norm{g_1^\lambda - \lambda g_0^\lambda}_2^2$ is bounded uniformly in $\lambda$.  We start with the estimate
    \begin{multline*}
        \norm{g_1^\lambda - \lambda g_0^\lambda}_2^2 \ = \ (1+\lambda^2) \norm{g_1^\lambda}^2_2 - 2\lambda\avg[g^\lambda_0 g^\lambda_1]  \\
	\leq \ 2\big[\norm{g_1^\lambda}^2_2-\lambda\avg[g^\lambda_0 g^\lambda_1]\big] \
        = \ 2 \avg \left [ g_1^\lambda\left ( g_1^\lambda- \lambda \cond{g_0^\lambda}{1} \right ) \right ] \ ,
    \end{multline*}
    where we have noted that $\norm{g^\lambda_1}_2 = \norm{g^\lambda_0}_2$ (since $(g^\lambda_n)_{n=1}^\infty$ is stationary) and that $g^\lambda_1$ is $\mathcal{F}^1$ measurable. Note that
    \begin{equation*}
        g^\lambda_1 - \lambda\cond{g_0^\lambda}{1} \ = \ \sum_{k=1}^\infty\lambda^{k-1} \cond{\xi_{-k+1}}{1} - \lambda \sum_{k=1}^\infty\lambda^{k-1}\cond{\xi_{-k}}{1} \ = \ \cond{\xi_0}{1} \ .
    \end{equation*}
    Thus
    \begin{equation*}
        \norm{g_1^\lambda - \lambda g_0^\lambda}_2^2 \ \leq \  2  \int_{\Omega} \cond{\xi_0}{1} g_1^\lambda d\prob \ \leq \ 2 \norm{\cond{\xi_0}{1}}_p \norm{ g_1^\lambda }_q \ \leq \ 2\norm{\xi_0}_p M  \ ,
    \end{equation*}
    where we have used H\"older's inequality and eq.\ \eqref{eq:glambda0bound}.

    Since $g_1-g_0 = \lim_{\lambda \uparrow 1} g_1^\lambda -\lambda g_0^\lambda $, we have
    \[  	
        \avg[(g_1 - g_0)^2] \ = \ \avg\big[ \lim_{\lambda\uparrow 1} (g_1^\lambda - \lambda g_0^\lambda)^2\big]
            \ \leq \  \liminf_{\lambda\uparrow 1} \avg[(g_1^\lambda - \lambda g_0^\lambda)^2]  \ \le \ 2 \norm{\xi_0}_p M \ ,
    \]
    by	Fatou's Lemma.  Therefore $g_1-g_0 \in L^2$.  Thus $\zeta_n\in L^2$ for each $n$ and the martingale difference central limit theorem implies that
    $(\frac{1}{\sqrt{n}}\sum_{k=1}^n \zeta_k )_{n\ge 1}$ (and thus
    $(\frac{1}{\sqrt{n}}\sum_{k=1}^n \xi_k )_{n\ge 1}$) converges in distribution to a centered normal random variable with variance $\sigma^2 = \avg[\zeta_0^2]$. 
    
    If $\sigma = 0$ then we have that $\zeta_n=0$ a.s.\ for each $n\in\mathbb{Z}$. In this case, we have $\xi_n = g_{n+1} - g_n$ for the stationary processes $(g_n)_{n\in\mathbb{Z}}$ defined above. This concludes the proof of lemma \ref{MartingaleApproxLemma}
\end{proof}

This completes the proof of Theorem \ref{CLT}.  In the next section we discuss the mixing conditions sufficient to prove the hypothesis eq.\ \eqref{qNormSummable}.

\section{Mixing Conditions}\label{sec:mixing}

In this section we prove Theorem \ref{customthm3}, which provides sufficient conditions for the main hypothesis eq.\ \eqref{qNormSummable} of Theorem \ref{CLT}.
The arguments in this section are based on similar results in \cite{durr1986remarks} and \cite{hennion1997limit}. We rely on the following estimate on averages of sub-multiplicative random variables that combines \cite[Lemma 6.2 \& Lemma 6.3]{hennion1997limit} \textemdash \ see also \cite[Lemma 3 \& Lemma 4]{cohn1993weak}.

\begin{lemma}[{\cite{hennion1997limit}}]\label{boundsForSubMultiplicativeAverages}
    Consider a probability space $(\Omega,\mathcal{F},\prob)$ with an ergodic measure preserving map $\theta:\Omega \to \Omega$,  a filtration $(\mathcal{F}_n)_{n\ge 0}$, and a reverse filtration $(\mathcal{F}^n)_{n\ge 0}$, such that $\theta^{-1}(\mathcal{F}_{n+1})=\mathcal{F}_n$ and $\theta^{-1}(\mathcal{F}^{n+1})=\mathcal{F}^n$ for each $n \ge 0$. Let $\alpha_n$ and $\rho_n$ be mixing coefficients defined as defined in eqs.\ \eqref{alphan} and     \eqref{rhon}, respectively.
    Let $(M_n)_{n\geq 1}$ be a  sequence of $[0,1]$-valued random variables with the following \emph{sub-multiplicative property}
    \begin{equation}\label{eq:submult}
        M_{m+n} \ \leq \ M_m \ M_n\circ\theta^n. 
    \end{equation}
    If for each $0\leq m < n$ it holds that $M_{n-m} \circ \theta^m$ is both $\mathcal{F}_n$ and $\mathcal{F}^m$ measurable, then we have:
    \begin{enumerate}
        \item If $\alpha_n \leq cn^{-\lambda}$ with $c,\lambda >0$, then $\avg[M_n]$ almost vanishes to order $n^{-\lambda}$,
            \begin{equation}\label{eq:firstMnbound}
                \avg[M_n] \ = \ O \bigg(\frac{a_n}{n} \bigg)^\lambda \ 
            \end{equation} 
            for any sequence $(a_n)_{n\geq 1}$ of real numbers such that	
            \begin{equation}\label{eq:an}
                \lim_{n\to\infty} \frac{\ln n}{a_n} \ = \ \lim_{n\to\infty} \frac{a_n}{n} \ = \ 0 \ .
            \end{equation}
        \item If $lim_{n\to\infty} \rho_n =0$, then $\avg[M_n]$ vanishes faster than any polynomial, i.e.,	
            \begin{equation}\label{eq:secondMnbound}
                \avg[M_n] \ = \ O \left ( \frac{1}{n^k} \right ) \ .
            \end{equation}
            for each $k \in \mathbb{N}$,
    \end{enumerate}
\end{lemma}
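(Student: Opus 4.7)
The strategy is to exploit the sub-multiplicative bound \eqref{eq:submult} to write $M_n$ as a product over a sequence of blocks of length $b$ separated by gaps of length $g$, and then use the mixing coefficient at scale $g$ to approximately factor the expectation. Fix positive integers $b$, $g$, $k$ with $n = kb + (k-1)g$ and iterate \eqref{eq:submult} together with the trivial bound $M_g \leq 1$ to obtain
\begin{equation*}
    M_n \ \leq \ \prod_{j=0}^{k-1} M_b \circ \theta^{j(b+g)} \ .
\end{equation*}
By the measurability hypothesis, $M_b \circ \theta^{j(b+g)}$ is both $\mathcal{F}_{j(b+g)+b}$- and $\mathcal{F}^{j(b+g)}$-measurable, so any two consecutive factors live in $\sigma$-algebras separated in time by exactly $g$ steps, which is precisely the regime in which the mixing coefficients control the dependence.

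The key telescoping step is to peel off the factors one at a time using the mixing inequality: at stage $k$, the product $Y$ of the first $k-1$ factors is $\mathcal{F}_{(k-2)(b+g)+b}$-measurable and valued in $[0,1]$, while the last factor $X = M_b \circ \theta^{(k-1)(b+g)}$ is $\mathcal{F}^{(k-1)(b+g)}$-measurable and valued in $[0,1]$. In the $\alpha$-mixing case one uses the standard bound $|\avg[XY] - \avg[X]\avg[Y]| \leq 4 \alpha_g$, and in the $\rho$-mixing case $|\avg[XY] - \avg[X]\avg[Y]| \leq \rho_g \sigma(X)\sigma(Y)$. Combining with stationarity, $\avg[M_b \circ \theta^{j(b+g)}] = u_b := \avg[M_b]$, and iterating yields
\begin{equation*}
    \avg[M_n] \ \leq \ u_b^k + C k \beta_g, \qquad \beta_g = \alpha_g \text{ or } \rho_g \ .
\end{equation*}
A preliminary soft argument, combining Cauchy--Schwarz (which gives $u_{m+n}^2 \leq u_m u_n$ since $M_n^2 \leq M_n$), Fekete's lemma, and the mixing decay, establishes that $u_n \to 0$; in particular one may fix $b_0$ with $u_{b_0} \leq 1/2$.

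For Part 1, with $\alpha_g \leq c g^{-\lambda}$, set $g = \lfloor n/a_n \rfloor$ and $k = \lfloor a_n/b_0 \rfloor$ so that $n \sim k(b_0 + g)$; then
\begin{equation*}
    \avg[M_n] \ \leq \ 2^{-a_n/b_0} + C'\, (a_n/b_0)\, (n/a_n)^{-\lambda} \ .
\end{equation*}
The hypothesis $a_n / \log n \to \infty$ makes the first term $o(n^{-\lambda})$, while the second is $O((a_n/n)^\lambda)$, yielding \eqref{eq:firstMnbound}. For Part 2, the $\rho$-mixing case admits a sharper iteration: using $\sigma(X)^2 \leq \avg[X]$ for $X \in [0,1]$ gives the refined two-block estimate $u_{2n+g} \leq u_n^2 + \rho_g u_n$, and a doubling bootstrap in which each step roughly squares the rate boosts any polynomial decay into arbitrary polynomial decay, producing \eqref{eq:secondMnbound} for each $k$.

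The main obstacle is the preliminary claim $u_n \to 0$. Fekete's lemma alone yields only the existence of $\lim \tfrac{1}{n} \log u_n \in [-\infty, 0]$, and the mixing must be used to rule out the degenerate possibility that $u_n$ stays bounded below (formally, $M_n \equiv 1$ a.s.\ is consistent with \eqref{eq:submult} alone). The two-block bound $u_{2b+g}^2 \leq u_b u_{b+g} + 4\alpha_g$ together with $\alpha_g \to 0$ provides the needed cancellation: letting $g \to \infty$ forces $u_{2b}^2 \leq u_b^2$ in the limit, and a refined version using the strict decay of the mixing coefficients forces $u_{b_0} < 1$ for some $b_0$, after which the iteration described above runs.
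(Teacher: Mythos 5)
The paper does not actually prove this lemma --- it is imported wholesale from Hennion (Lemmas 6.2 and 6.3 of \cite{hennion1997limit}; see also \cite{cohn1993weak}) --- and your blocking-plus-covariance-inequality architecture is indeed the one used there. The genuine gap is in your treatment of what you correctly flag as the main obstacle: establishing $u_n := \avg[M_n]\to 0$, or even $u_{b_0}<1$ for a single $b_0$. This \emph{cannot} be derived from the hypotheses as stated. The deterministic sequence $M_n\equiv 1$ satisfies the sub-multiplicativity \eqref{eq:submult} with equality and every measurability requirement (constants are measurable with respect to any $\sigma$-algebra), and it is compatible with arbitrarily fast mixing, since $\alpha_n$ and $\rho_n$ constrain the filtrations, not $M$. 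Your two-block estimate (in its correct form $u_{2b+g}\le u_b^2+4\alpha_g$) does yield, upon letting $g\to\infty$ with $b$ fixed, that $L:=\lim_n u_n$ satisfies $L\le L^2$, hence $L\in\{0,1\}$ since $u_n$ is non-increasing; but no refinement using the decay of the mixing coefficients can exclude $L=1$. The statement as transcribed in the paper is in fact missing Hennion's standing hypothesis that $M_n\to 0$ almost surely (equivalently, by dominated convergence, $u_n\to 0$); in the application this is supplied by Corollary \ref{cConvergeTo0} and Lemma \ref{contractionCoefficientPsi}. The honest fix is to import that hypothesis and then choose $b_0$ with $u_{b_0}\le 1/2$, not to claim it follows from mixing.

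Two quantitative steps also fail as written, though both are repairable. In Part 1, summing the geometric series in the peeling recursion gives $\avg[M_n]\le u_{b_0}^k+4\alpha_g\sum_{i<k}u_{b_0}^i\le 2^{-k}+8\alpha_g$, with no factor of $k$; your lossier bound $2^{-k}+Ck\alpha_g$ with $k\sim a_n/b_0$ leaves a stray factor of order $a_n$ in front of $(a_n/n)^\lambda$, and this cannot be absorbed for all admissible sequences (try $a_n=(\ln n)^2$ with $\lambda$ small). In Part 2, the doubling step does not ``square the rate'' --- that would require a quantitative rate on $\rho_g$, which is not assumed. The correct mechanism is that once $\rho_{g_0}\le\epsilon$ and $u_n\le\epsilon$, each doubling multiplies $u$ by at most $2\epsilon$, so $j\approx\log_2 n$ doublings give decay $n^{\log_2(2\epsilon)}$ along a dyadic subsequence, which beats any fixed power by taking $\epsilon$ small; monotonicity of $u_n$ then extends the bound to all $n$.
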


Lemma \ref{boundsForSubMultiplicativeAverages}  directly implies bounds on $\avg[c(\Phi^{(n)})]$, stated in the following
\begin{lemma}\label{mixingcondtionscor}
    Suppose that Assumptions \ref{1} and \ref{2} hold, and let $\alpha_n$ and $\rho_n$ be mixing coefficients defined as defined in eqs.\ \eqref{alphan} and     \eqref{rhon}, respectively. For $r \in (0,1)$ define
    \begin{equation*}
        \tau_r \ = \ \inf\{n\geq 1: \cnum{\Psi^{(n)}}\le r \ \& \ \cnum{\Phi^{(n)}} \leq r\} \ .
    \end{equation*}
    Then we have that $\tau_r < \infty$ almost surely. Moreover
    \begin{enumerate}
        \item If $\sum_{k=1}^\infty \alpha_k^{1/\lambda} < \infty$, for some $\lambda > 0$, then 
            \begin{equation}\label{upperboundTauronAlpha}
                    \max\left \{\prob[\tau_r > n]\ , \ \avg[\cnum{\Phi^{(n)}}] \right \} \ = \ O\bigg( \frac{a_n}{n}\bigg)^\lambda \ 
            \end{equation}
            for any sequence $(a_n)_{n\ge 1}$ satisfying eq.\ \eqref{eq:an}.
        \item If $\lim_{n\to\infty} \rho_n =0$, then 
            \begin{equation}\label{upperboundTauonRho}
                \max\left \{\prob[\tau_r > n]\ , \ \avg[\cnum{\Phi^{(n)}}] \right \}  \ = \ O\left(\frac{1}{n^k}\right ) \ 
            \end{equation}	
            for any $k\ge 1$.
    \end{enumerate}
\end{lemma}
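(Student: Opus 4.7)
\textbf{Proof plan for Lemma \ref{mixingcondtionscor}.} The strategy is to reduce both estimates to bounds on $\avg[c(\Phi^{(n)})]$ obtained by applying Lemma \ref{boundsForSubMultiplicativeAverages} with the choice $M_n := c(\Phi^{(n)})$, adopting the convention $c(\phi)=1$ on the a.s.\ negligible event where $\Phi^{(n)}$ is not non-destructive, so that $M_n\in[0,1]$ everywhere. The sub-multiplicativity $M_{m+n}\le M_m\,(M_n\circ\theta^m)$ is provided by Lemma \ref{propertiesOfC}(4) applied to the decomposition $\Phi^{(m+n)}=(\Phi^{(n)}\circ\theta^m)\circ\Phi^{(m)}$, whose first factor is exactly $\Phi^{(n)}\circ\theta^m$; and the measurability requirement holds because $M_{n-m}\circ\theta^m=c(\varphi_n\circ\cdots\circ\varphi_{m+1})$ depends only on $\varphi_{m+1},\ldots,\varphi_n$, which are jointly measurable with respect to both $\mathcal{F}_n$ and $\mathcal{F}^m$ (using that $(\mathcal{F}^k)$ is a reverse filtration, so $\mathcal{F}^m\supset\mathcal{F}^{m+1}$).

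With the hypotheses in place, part (2) of the claim follows immediately from Lemma \ref{boundsForSubMultiplicativeAverages}(2). For part (1), one intermediate observation is needed: the lemma demands the pointwise polynomial bound $\alpha_n\le c n^{-\lambda}$, whereas we are given only the summability $\sum_k\alpha_k^{1/\lambda}<\infty$. However, $\alpha_n$ is non-increasing in $n$ because for each fixed $k$ the set $\mathcal{F}^{n+k}$ shrinks as $n$ grows, so the supremum defining $\alpha_n$ is taken over a nested decreasing family of events. For any non-increasing summable sequence $(a_n)$ one has $na_n\to 0$ (via the elementary bound $\lfloor n/2\rfloor a_n\le\sum_{k=\lfloor n/2\rfloor+1}^{n}a_k$), and applying this to $a_n=\alpha_n^{1/\lambda}$ gives $\alpha_n=o(n^{-\lambda})$, in particular $\alpha_n\le c' n^{-\lambda}$ for some $c'>0$. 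Lemma \ref{boundsForSubMultiplicativeAverages}(1) then delivers the asserted bound on $\avg[c(\Phi^{(n)})]$.

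Finally, I would translate these expectation bounds into tail estimates for $\tau_r$. Sub-multiplicativity together with $c(\varphi_{n+1})\le 1$ implies $c(\Phi^{(n+1)})\le c(\Phi^{(n)})$, so the sequence $c(\Phi^{(n)})$ is non-increasing in $n$, and similarly for $c(\Psi^{(n)})$. Combined with Corollary \ref{cConvergeTo0} and Lemma \ref{contractionCoefficientPsi}, this monotonicity yields $\tau_r<\infty$ a.s.\ together with the identification
\[ \{\tau_r>n\} \ = \ \{c(\Phi^{(n)})>r\}\cup\{c(\Psi^{(n)})>r\}. \]
Markov's inequality then gives $\prob[\tau_r>n]\le r^{-1}\big(\avg[c(\Phi^{(n)})]+\avg[c(\Psi^{(n)})]\big)$. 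Because $c(\phi^*)=c(\phi)$ by Lemma \ref{propertiesOfC}(5) and $\theta$ is measure preserving, the variables $c(\Psi^{(n)})$ and $c(\Phi^{(n)})$ are equidistributed (as in the Remark after Lemma \ref{contractionCoefficientPsi}), so the bound for $\Phi^{(n)}$ transfers verbatim to $\Psi^{(n)}$. I anticipate the main (though modest) obstacle to be the bookkeeping argument upgrading summability of $\alpha_n^{1/\lambda}$ to a pointwise polynomial decay; once that is in hand the remainder is a straightforward Markov estimate combined with the monotonicity-based identification of $\{\tau_r>n\}$.
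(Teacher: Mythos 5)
Your proposal is correct and follows essentially the same route as the paper: apply Lemma \ref{boundsForSubMultiplicativeAverages} to $M_n = c(\Phi^{(n)})$ (checking sub-multiplicativity via Lemma \ref{propertiesOfC}(4) and measurability from the fact that $c(\varphi_n\circ\cdots\circ\varphi_{m+1})$ is both $\mathcal{F}_n$ and $\mathcal{F}^m$ measurable), upgrade $\sum_k\alpha_k^{1/\lambda}<\infty$ to $\alpha_n=O(n^{-\lambda})$ via monotonicity of $\alpha_n$, transfer the bound to $c(\Psi^{(n)})$ by equidistribution, and finish with the inclusion $\{\tau_r>n\}\subset\{c(\Phi^{(n)})>r\}\cup\{c(\Psi^{(n)})>r\}$ plus Markov's inequality. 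The only cosmetic difference is that you assert equality of that event with the union (which holds, but requires the monotonicity observation), whereas the paper only needs and uses the one inclusion, which follows immediately from the definition of $\tau_r$ without invoking monotonicity of $c(\Phi^{(n)})$.
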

\begin{proof}
    From Corollary \ref{cConvergeTo0} and Lemma \ref{contractionCoefficientPsi} we see that $\prob[\tau_r < \infty] = 1$. We also have, by Assumption \ref{2}, that $\phi_n$ is non-destructive and non-transient for all $n\geq 1$, with probability one. Therefore, we have that almost surely for all $n\in\mathbb{Z}$, $\cnum{\Psi^{\tau_r+n}} , \cnum{\Phi^{\tau_r+n}} < r$. 

    Suppose that  $\sum_{k=1}^\infty \alpha_k^{1/\lambda} < \infty$.  We start with the observation that $\alpha_n$ is non-increasing in $n$; this can be seen directly from the definition \eqref{alphan} of $\alpha_n$ using the fact that $(\mathcal{F}^n)_{n\geq1}$ is decreasing in $n$. Since $\alpha_n^{1/\lambda}$ is a non-increasing sequence of positive numbers with $\sum_{n}\alpha_n^{1/\lambda} <\infty$, we have $\lim_{n\to\infty} n\alpha_n^\lambda = 0$. Therefore we have   $\alpha_n \le cn^{-\lambda}$. Now notice that the sequence $M_n= c(\Phi^{(n)})$, for $n\ge 1$, satisfies the sub-multiplicative condition in Lemma \ref{boundsForSubMultiplicativeAverages}. Therefore we obtain
    \begin{equation}
        \avg[\cnum{\Phi^{(n)}}] \ \leq\ c_1 \ \bigg( \frac{a_n }{n}\bigg)^\lambda \ 
    \end{equation}
    for any sequence $(a_n)_{n\ge1}$ satisfying eq.\ \eqref{eq:an}.
    A similar analysis can be applied to $(c(\Psi^{n})_{n\in\mathbb{N}}$, resulting in
    \begin{equation}
        \avg[\cnum{\Phi^{(n)}}] \ \leq \ c_2 \bigg( \frac{a_n}{n}\bigg)^\lambda \ .
    \end{equation}
    Since 
    \begin{equation}\label{tauRbound}
        \prob[\tau_r > n] \ \le \ \prob[c(\Psi^{(n)}) > r] + \prob[c(\Phi^{(n)}) > r] \ \leq \ \frac{1}{r} \avg[c(\Phi^{(n)})] +\frac{1}{r} \avg[c(\Psi^{(n)})] \ ,
    \end{equation}
    we see that eq.\ \eqref{upperboundTauronAlpha} holds.
	
    If $\lim_{n\to\infty} \rho_n = 0$, then the second part of Lemma \ref{boundsForSubMultiplicativeAverages} applies and eq.\ \eqref{tauRbound} still holds. These two combined give us eq.\ \eqref{upperboundTauonRho}. 
\end{proof}

We are now ready to state the main technical estimate of this section:
\begin{lemma}\label{mixingconditionslemma}
     Suppose that Assumptions \ref{1} and \ref{2} hold, and let $\alpha_n$ and $\rho_n$ be mixing coefficients defined in eqs.\ \eqref{alphan} and     \eqref{rhon}, respectively.  Let $r \in (0,1)$ and let $\tau_r$ be as defined in Lemma \ref{mixingcondtionscor}. Let $n_\alpha$ denote the integer part of $(1-\alpha)n$, for $\alpha \in (0,1)$.
    \begin{enumerate}
        \item If Assumption \ref{2'} holds with $p>2$ then there is $K<\infty$ such that
            \begin{equation}\label{eq:Lqbound}
                \norm{\cond{\xi_0}{n}}_{q} \ \leq \ K \left [ \alpha_{n-n_\alpha}^{(p-2)/p} + \avg[\cnum{\Phi^{(n_\alpha)*}}] + (\prob[\tau_r > n_\alpha])^{1/q} \right ]\ ,
            \end{equation}
        with $q$ the conjugate exponent to $p$.
        \item If Assumption \ref{2'} holds with $p=2$ then there is $K<\infty$ such that
            \begin{equation}\label{eq:L2bound}
                \norm{\cond{\xi_0}{n}}_2 \ \leq  \ K \left [ \rho_{n-n_\alpha} + \avg[\cnum{\Phi^{(n_\alpha)*}}] + (\prob({\tau_r > n_\alpha }))^{1/2} \right ] \ .
            \end{equation}
    \end{enumerate}
\end{lemma}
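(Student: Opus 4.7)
The plan is to approximate $\xi_0$ by an $\mathcal{F}_{n_\alpha}$-measurable surrogate $\tilde{\xi}_0$ and decompose $\cond{\xi_0}{n}$ into an approximation error plus a mixing piece that absorbs the gap $n-n_\alpha$ between $\mathcal{F}_{n_\alpha}$ and $\mathcal{F}^n$. Concretely, fix a reference state $Y_0 \in \mathbb{S}_D^\circ$ (e.g.\ $Y_0 = \mathbb{I}/D$) and set
\[
    \widetilde{Z} \ := \ (\varphi_1^* \circ \cdots \circ \varphi_{n_\alpha}^*) \cdot Y_0 \ , \qquad
    \tilde{\xi}_0 \ := \ \ln \norm{\varphi_0^*(\widetilde{Z})} - l \ ,
\]
both $\mathcal{F}_{n_\alpha}$-measurable. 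Lemma \ref{lem:quantz} together with Lemma \ref{propertiesOfC}(5) yields, on the good event $\{\tau_r \le n_\alpha\}$, the crucial estimate $\dis{Z_1}{\widetilde{Z}} \le \cnum{\varphi_1^* \circ \cdots \circ \varphi_{n_\alpha}^*} = \cnum{\Phi^{(n_\alpha)*}}$.

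Using $\avg[\xi_0] = 0$, I would decompose
\[
    \cond{\xi_0}{n} \ = \ \cond{\xi_0 - \tilde{\xi}_0}{n} \ + \ \bigl(\cond{\tilde{\xi}_0}{n} - \avg[\tilde{\xi}_0]\bigr) \ + \ \avg[\tilde{\xi}_0 - \xi_0]
\]
and control the three pieces separately. Conditional Jensen absorbs the first and third pieces into $2\norm{\xi_0 - \tilde{\xi}_0}_q$. For the mixing piece, $\norm{\tilde{\xi}_0}_p$ is bounded uniformly in $n_\alpha$ by the same estimate used in \eqref{eq:xibound}, so the Davydov covariance inequality applied to $\tilde{\xi}_0 \in L^p(\mathcal{F}_{n_\alpha})$ against test functions in $L^p(\mathcal{F}^n)$ with exponents $1/p + 1/p + 1/r = 1$ yields $\norm{\cond{\tilde{\xi}_0}{n} - \avg[\tilde{\xi}_0]}_q \lesssim \alpha_{n-n_\alpha}^{(p-2)/p} \norm{\tilde{\xi}_0}_p$; the $p=2$ case is handled by the analogous $\rho$-mixing bound $\norm{\cond{\tilde{\xi}_0}{n} - \avg[\tilde{\xi}_0]}_2 \le \rho_{n-n_\alpha} \norm{\tilde{\xi}_0}_2$. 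To bound $\norm{\xi_0 - \tilde{\xi}_0}_q$ I split on $\tau_r$: on $\{\tau_r \le n_\alpha\}$ the Lipschitz estimate below gives $|\xi_0 - \tilde{\xi}_0| \le C(r)\, \cnum{\Phi^{(n_\alpha)*}}$, which (together with $\cnum{\cdot} \le 1$) contributes the $\avg[\cnum{\Phi^{(n_\alpha)*}}]$ term; on $\{\tau_r > n_\alpha\}$, the crude bound $|\xi_0 - \tilde{\xi}_0| \le |\xi_0| + |\tilde{\xi}_0| \le 2\bigl(|l| + \ln\norm{\varphi_0^*} + |\ln v(\varphi_0^*)|\bigr)$ lies in $L^p$ by Assumption \ref{2'}, so H\"older against $\mathbb{1}_{\tau_r > n_\alpha}$ produces the $\prob(\tau_r > n_\alpha)^{1/q}$ factor.

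The principal obstacle is Lipschitz-type control of $\ln\norm{\varphi_0^*(\cdot)}$ on the good event with a constant independent of $\varphi_0^*$. A direct appeal to Lemma \ref{meanValueIneqaulity} produces the random prefactor $\norm{\varphi_0^*}/v(\varphi_0^*)$ which is \emph{not} in $L^q$ under Assumption \ref{2'} (only its logarithm lies in $L^p$), so that estimate has to be bypassed. The right tool is a Birkhoff--Hopf-style identity: since $m(A,B)\, B \le A$ and $\varphi_0^*$ is positive, $\norm{\varphi_0^*(A)} \ge m(A,B)\norm{\varphi_0^*(B)}$, and symmetrically, whence
\[
    \bigl|\ln\norm{\varphi_0^*(A)} - \ln\norm{\varphi_0^*(B)}\bigr| \ \le \ -\ln\bigl[m(A,B)\, m(B,A)\bigr] \ = \ \ln\frac{1+\dis{A}{B}}{1-\dis{A}{B}} \ ,
\]
a bound depending only on $\dis{A}{B}$. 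On $\{\tau_r \le n_\alpha\}$ one has $\dis{Z_1}{\widetilde{Z}} \le \cnum{\Phi^{(n_\alpha)*}} \le r < 1$, so the right-hand side is at most $C(r)\, \cnum{\Phi^{(n_\alpha)*}}$. This projective-metric Lipschitz bound is what lets $\cnum{\Phi^{(n_\alpha)*}}$ alone quantify the approximation error, unpaired with any random factor that could blow up under the $L^q$ norm, and thus is the step that makes the stated estimate go through.
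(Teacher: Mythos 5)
Your decomposition and the Birkhoff--Hopf Lipschitz estimate are exactly what the paper uses: your $\xi_0-\tilde\xi_0$ is the paper's $A_n$, your $\tilde\xi_0-\avg[\tilde\xi_0]$ is $B_n$, and your projective-metric inequality $|\ln\norm{\varphi_0^*(A)}-\ln\norm{\varphi_0^*(B)}|\le \ln\tfrac{1+\dis{A}{B}}{1-\dis{A}{B}}$ is precisely Proposition~\ref{upperboundForLnDifferenceAtStoppingTimeTauR}, including the observation that the random Lipschitz prefactor $\norm{\varphi_0^*}/v(\varphi_0^*)$ from Lemma~\ref{meanValueIneqaulity} must be avoided. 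Your treatment of the mixing piece via Davydov is also correct. The gap is in the conditional-Jensen step.

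Writing $A_n:=\xi_0-\tilde\xi_0$, you bound $\norm{\cond{A_n}{n}}_q\le\norm{A_n}_q$ and claim this ``contributes the $\avg[\cnum{\Phi^{(n_\alpha)*}}]$ term'' and ``produces the $\prob(\tau_r>n_\alpha)^{1/q}$ factor.'' But the $L^q$ norm does not give these exponents. With $G:=|\xi_0|+|\tilde\xi_0|\in L^p$ on the bad event, H\"older gives
$\norm{G\,\mathbb 1_{\tau_r>n_\alpha}}_q\le \norm{G}_p\,\prob(\tau_r>n_\alpha)^{(p-q)/(pq)}=\norm{G}_p\,\prob(\tau_r>n_\alpha)^{(p-2)/p}$,
and on the good event $\norm{\cnum{\Phi^{(n_\alpha)*}}\mathbb 1_{\tau_r\le n_\alpha}}_q\le(\avg[\cnum{\Phi^{(n_\alpha)*}}])^{1/q}$. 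Both exponents are strictly worse than the stated $1/q$ and $1$, and the first of these is fatal: with $\lambda=p/(p-2)$ as in Lemma~\ref{mixingcondtionscor}, one has $\prob(\tau_r>n_\alpha)^{(p-2)/p}=O(a_n/n)$, and $\sum a_n/n=\infty$ for \emph{every} admissible $(a_n)$ (since eq.~\eqref{eq:an} forces $a_n\gg\ln n$). So your bound does not feed into the proof of Theorem~\ref{customthm3}. The exponents $1/q$ on $\prob$ and $1$ on $\avg[\cnum{\cdot}]$ come from \emph{first-moment} control of $A_n$, i.e.\ $\avg|A_n|\le\norm{G}_p\prob(\tau_r>n_\alpha)^{1/q}+C_r\avg[\cnum{\Phi^{(n_\alpha)*}}]$, and one cannot get them from $\norm{A_n}_q$.

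The fix is to stay inside the duality $\norm{\cond{\xi_0}{n}}_q=\sup_{\norm{f}_p\le 1,\, f\in L^p(\mathcal F^n)}|\avg[\xi_0 f]|$ and apply the Davydov covariance inequality to the $A_n$-piece as well, not just to $B_n$:
$|\avg[A_n f]|\le|\avg[A_n'|f|]-\avg[A_n']\avg[|f|]|+\avg[A_n']\avg[|f|]\le 8\alpha_{n-n_\alpha}^{(p-2)/p}\norm{A_n'}_p\norm{f}_p+\avg[A_n']\norm{f}_p$.
The covariance term needs only that $\norm{A_n'}_p$ is bounded uniformly in $n$ (which it is), with the decay coming from $\alpha$; the product term decays through $\avg[A_n']$, which carries the strong exponents $1/q$ and $1$. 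Conditional Jensen forfeits this separation and forces you to pay in $L^q$ where the paper pays in $L^1$.
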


Before proving Lemma \ref{mixingconditionslemma}, let us show how it implies Theorem \ref{customthm3}.
First note that if $(b_n)_{n\ge 1}$ is a sequence of non-negative numbers then
\begin{align}
\label{eq:partialsum} &\sum_{n=1}^\infty b_{n_\alpha} \ \le \ \frac{1}{1-\alpha} \sum_{n=1}^\infty b_n \ , \\
\label{eq:partialsum2}
&\sum_{n=1}^\infty b_{n-n_\alpha} \ \le \ \frac{1}{\alpha} \sum_{n=1}^\infty b_n \ .
\end{align}
To see that eq.\ \eqref{eq:partialsum} holds, note that given $m\in \mathbb{N}$, the number of integers $n$ such that $n_\alpha=m$ is bounded by $\frac{1}{1-\alpha}$.  The proof of eq.\ \eqref{eq:partialsum2} is similar.
Now suppose that Assumption \ref{2'} holds with $p>2$ and $\sum_{n}\alpha_n^{(p-2)/p} < \infty$.  Then by Lemma \ref{mixingcondtionscor}, Lemma \ref{mixingconditionslemma}, and eqs.\ (\ref{eq:partialsum}, \ref{eq:partialsum2}), we have
$$ \sum_{n=1}^\infty \norm{\cond{\xi_0}{n}}_{q} \ \le \  K'  \sum_{n\ge 1} \left [ \alpha_n^{\frac{p-2}{p}} + \left ( \frac{a_n}{n} \right )^{\frac{p}{p-2}}  + \left ( \frac{a_n}{n} \right )^{\frac{p}{p-2} \frac{1}{q}}\right ]$$
for a suitable $K'<\infty$ and a slowly increasing sequence $(a_n)_{n\ge 1}$ satisfying eq.\ \eqref{eq:an}.  Since $\frac{p}{p-2}>1$ and $\frac{p}{p-2}\frac{1}{q}=\frac{p-1}{p-2}>1$ the right hand side is finite. Similarly, if $\sum_{n} \rho_n < \infty$, then we have
$$ \sum_{n=1}^\infty \norm{\cond{\xi_0}{n}}_{q} \ \le \  K'  \sum_{n\ge 1} \left [ \rho_n + \frac{1}{n^k} \right ] $$
for any $k$, which is clearly finite.  This completes the proof of Theorem \ref{customthm3}.

We now turn to the proof of Lemma \ref{mixingconditionslemma}:

\begin{proof}[Proof of Lemma \ref{mixingconditionslemma}]
    By Lemma \ref{existanceofz}, we have $\Phi^{(n_{\alpha})} \cdot Z_{n_{\alpha}+1} = Z_1$. Therefore
    $$ \xi_0 \ = \ A_n + B_n - \avg[A_n] \ ,$$
    where
    \[A_n = \ln \norm{\varphi_0^*(\Phi^{(n_{\alpha})} \cdot Z_{n_{\alpha}+1})} - \ln \norm{\varphi_0^*(\Phi^{(n_{\alpha})} \cdot \tfrac{1}{D}\mathbb{I}}\]
    and
    $$B_n = \ln \norm{\varphi_0^*(\Phi^{(n_{\alpha})} \cdot \tfrac{1}{D} \mathbb{I})} - \avg\left [\ln \norm{\varphi_0^*(\Phi^{(n_{\alpha})} \cdot \tfrac{1}{D}\mathbb{I})} \right ] \ .$$
					
     Now consider the event $\{\tau_r \leq n_\alpha\}$. On this event, $\cnum{\Phi^{(n_\alpha)*}} \leq r$. To bound $A_n$ on this event we will use the following proposition which we prove below after completing the present proof.
     \begin{prop}\label{upperboundForLnDifferenceAtStoppingTimeTauR}
    Let $\psi, \phi \in \mapspace$. Suppose that $\psi$ is a positive map and $\phi$ is a strictly positive map with  $\cnum{\phi} \leq r < 1$. If $\psi$ is non-transient, then for any $A,B \in \mathbb{S}_D$ we have 
    \begin{equation*}
        \big| \ln \norm{\psi\big(\phi\cdot A\big)} - \ln \norm{\psi\big(\phi\cdot B\big)}  \big| \ \leq \ c(\phi)	 \frac{2}{r}\ln\frac{1}{1-r} 			
    \end{equation*}
    \end{prop}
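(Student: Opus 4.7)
The plan is to use the metric $\mathrm{d}$ to translate the contraction bound $c(\phi) \le r$ into two-sided comparisons $m(X,Y) Y \le X$ and $m(Y,X) X \le Y$ for $X := \phi \cdot A$ and $Y := \phi \cdot B$, then apply $\psi$ and take traces to obtain a controlled ratio $\norm{\psi(X)}/\norm{\psi(Y)}$, and finally sharpen the resulting estimate by two elementary one-variable inequalities.

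First, since $\phi$ is strictly positive, $X, Y \in \mathbb{S}_D^\circ$, and by part 1 of Lemma \ref{propertiesOfC}, $\dis{X}{Y} \le c(\phi)\dis{A}{B} \le c(\phi) \le r$. Writing $s := \dis{X}{Y}$, the defining formula \eqref{distance} rearranges to $m(X,Y) m(Y,X) = (1-s)/(1+s)$. Since $X, Y$ have trace one, any $\lambda$ with $\lambda Y \le X$ forces $\lambda \le 1$, so both $m(X,Y)$ and $m(Y,X)$ lie in $(0,1]$, which gives
\[\min\{m(X,Y), m(Y,X)\} \ \ge \ m(X,Y)m(Y,X) \ = \ \frac{1-s}{1+s}.\]

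Next, from $m(X,Y) Y \le X$ and the positivity of $\psi$ we have $m(X,Y)\psi(Y) \le \psi(X)$, and taking traces (which equals the trace norm on positive matrices) yields $m(X,Y) \norm{\psi(Y)} \le \norm{\psi(X)}$, with the symmetric inequality swapping the roles of $X$ and $Y$. The non-transience of $\psi$ enters here to ensure $\norm{\psi(X)}, \norm{\psi(Y)} > 0$ so that logarithms are defined: for any $W \in \mathbb{S}_D^\circ$, the identity $\norm{\psi(W)} = \inner{\psi^*(\mathbb{I})}{W}$ together with $\psi^*(\mathbb{I}) \ne 0$ (which holds since $\psi^*$ is non-destructive and $\mathbb{I}/D \in \mathbb{S}_D^\circ$) and $W \ge \lambda_{\min}(W)\mathbb{I} > 0$ forces $\norm{\psi(W)} \ge \lambda_{\min}(W) \tr(\psi^*(\mathbb{I})) > 0$. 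Taking logarithms then yields
\[\big|\ln\norm{\psi(X)} - \ln\norm{\psi(Y)}\big| \ \le \ -\ln\min\{m(X,Y), m(Y,X)\} \ \le \ \ln\frac{1+s}{1-s}.\]

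The final step sharpens the right-hand side to the claimed constants. Since $(1+s)(1-s) \le 1$, we have $\ln((1+s)/(1-s)) \le 2\ln(1/(1-s))$. The function $f(s) = -\ln(1-s)$ is convex on $[0,1)$ and vanishes at $0$, so for $0 \le s \le r$ convexity yields $f(s) \le (s/r) f(r)$, i.e., $\ln(1/(1-s)) \le (s/r)\ln(1/(1-r))$. Combining with $s \le c(\phi) \le r$ delivers the claimed bound $c(\phi)\frac{2}{r}\ln\frac{1}{1-r}$. The main subtlety throughout is verifying that $\norm{\psi(X)}$ and $\norm{\psi(Y)}$ are strictly positive so that the logarithms are meaningful — this is precisely the step where the non-transience hypothesis on $\psi$ is used; once that is in hand, the argument is a short chain of comparisons.
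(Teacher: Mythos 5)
Your proof is correct and follows essentially the same route as the paper's: bound $\dis{\phi\cdot A}{\phi\cdot B}$ by $c(\phi)\le r$, sandwich the ratio $\norm{\psi(\phi\cdot A)}/\norm{\psi(\phi\cdot B)}$ between $m(\phi\cdot A,\phi\cdot B)$ and $1/m(\phi\cdot B,\phi\cdot A)$, pass to logarithms and rewrite in terms of the metric via eq.~\eqref{distance}, then sharpen using $(1+s)/(1-s)\le (1-s)^{-2}$ and convexity of $-\ln(1-s)$. The only cosmetic difference is that you obtain the sandwich by applying $\psi$ to the operator inequality $m(X,Y)Y\le X$ and taking traces, whereas the paper invokes the minimax characterization $m(A,B)=\min_X \tr[XA]/\tr[XB]$ from \cite[Lemma 3.3]{movassagh2022ergodic}; both give the same estimate, and you correctly identify non-transience as what guarantees $\norm{\psi(\phi\cdot A)},\norm{\psi(\phi\cdot B)}>0$.
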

    \noindent Using Proposition \ref{upperboundForLnDifferenceAtStoppingTimeTauR}, we see that
    $$|A_n| \ \leq \ \frac{2}{r} \ln \frac{1}{1-r} \cnum{\Phi^{(n_\alpha)*}} 1_{\tau_r \le n_\alpha} + 2\left (|\ln\norm{\varphi_0^*}| + |\ln v(\varphi^*_0)| \right ) 1_{\tau_r > n_\alpha} \ =: \ A_n' .$$
    Therefore, using H\"older's inequality and Assumption \ref{2'}, we have
    \begin{equation}\label{eq:An'}
        \avg|A_n| \ \leq \ \avg A_n' \ \leq \ C \left ( \avg[\cnum{\Phi^{(n_\alpha)*}}] + (\prob({\tau_r > n_\alpha }))^{(p-1)/p} \right  )
    \end{equation}
    with $C <\infty$. Furthermore we also have that 
    \begin{equation}\label{eq:AnBnLp}
        \sup_n \norm{A_n}_p \ \le \ \sup_n \norm{A_n'}_p \ < \ \infty \text{ and }  \sup_n\norm{B_n}_p \ < \ \infty \ . 
    \end{equation}
					
    Now, for $\frac{1}{p}+ \frac{1}{q}=1$ we have that 
    \begin{equation}\label{eq:variationLq}
        \norm{\cond{\xi_0}{n}}_q  =  \sup_{\{f\in L^p(\mathcal{F}^n):\norm{f}_p \le 1\}} \big| \avg[f\xi_0]\big| \ . 
    \end{equation}
    Hence to bound $\norm{\cond{\xi_0}{n}}_q  < \infty$ it suffices to find a uniform upper bound for $\avg[\xi f]$ as $f$ ranges over the unit ball in $L^p(\mathcal{F}^n) .$ Since $\xi_0 = A_n + B_n - \avg[A_n]$ and $\avg[B_n]=0$, we have
    \begin{align}
        |\avg[\xi_0f]|\	&\leq \ \big |\avg  [A_nf] \big | + \big|\avg[B_nf]\big| + \big|\avg[A_n]\avg[f]\big| \nonumber\\
								&\leq \ \avg[A_n'\,|f|] + \big|\avg[B_nf]-\avg[B_n]\avg[f] \big| + \avg[A_n']\avg[|f|] \nonumber\\
								&\leq \  \big|\avg[A_n'\, |f|] - \avg[A_n']\avg[|f|] \big| + \big|\avg[B_nf] - \avg[B_n]\avg[f]|\big| + 2\avg[A_n']\avg[|f|] \label{eq:xi0f}
    \end{align}

    To estimating the right hand side we use the following covariance inequalities involving the mixing coefficients $\alpha_n$ and $\rho_n$.
    \begin{lemma}[{\cite[{\S1.2 Theorem 3}]{doukhan2012mixing} \textemdash see also \cite[\S6.2]{hennion1997limit}}]\label{lem:rhoandalpha}
        For each $n\in \mathbb{N}$, Let $\alpha_n$ and $\rho_n$ be as defined in eqs.\ \eqref{alphan} and  \eqref{rhon}, respectively.  For each $n,k\in \mathbb{N}$, we have 
        \begin{equation*}
            \big| \avg[XY] - \avg[X]\avg[Y] \big| \ \leq \ \rho_n \norm{X}_2\norm{Y}_2
        \end{equation*}
        whenever $X \in L^2(\mathcal{F}_k)$ and $Y \in L^2(\mathcal{F}^{n+k})$, and
        \begin{equation*}
            \big| \avg[XY] - \avg[X]\avg[Y] \big| \ \leq \ 8\alpha_n^{1/r}\norm{X}_p\norm{Y}_q
        \end{equation*}
        whenever $X \in L^p(\mathcal{F}_k)$ and $Y \in L^q(\mathcal{F}^{k+n})$ with $p,q,r \in [1,\infty]$ such that $\frac{1}{p} + \frac{1}{q}+\frac{1}{r} = 1$.
    \end{lemma}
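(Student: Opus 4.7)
The plan is to treat the two inequalities separately, since they have different character: the $\rho_n$ bound is essentially immediate from the definition of maximal correlation, while the $\alpha_n$ bound is the classical Ibragimov covariance inequality and requires a truncation argument.

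For the $\rho_n$ inequality, I would set $\widetilde X = X - \avg[X]$ and $\widetilde Y = Y - \avg[Y]$. These still lie in $L^2(\mathcal{F}_k)$ and $L^2(\mathcal{F}^{n+k})$ respectively, and satisfy $\avg[\widetilde X \widetilde Y] = \avg[XY] - \avg[X]\avg[Y]$. Provided neither $\widetilde X$ nor $\widetilde Y$ is zero (the degenerate cases being trivial), the definition eq.\ \eqref{rhon} applied directly to the pair $(\widetilde X, \widetilde Y)$ gives $|\avg[\widetilde X \widetilde Y]| \leq \rho_n\, \sigma(X) \sigma(Y) \leq \rho_n \norm{X}_2 \norm{Y}_2$, using $\sigma(X) \leq \norm{X}_2$.

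For the $\alpha_n$ inequality I would proceed in three stages. First, for indicators $X = 1_A$ and $Y = 1_B$ with $A \in \mathcal{F}_k$, $B \in \mathcal{F}^{n+k}$, the desired estimate reduces to $|\prob(A \cap B) - \prob(A)\prob(B)| \leq \alpha_n$, which is the definition eq.\ \eqref{alphan}. Second, for non-negative bounded $X, Y$, the layer-cake representation $X = \int_0^{\norm{X}_\infty} 1_{\{X > s\}}\,ds$ (and the analogous one for $Y$) reduces the covariance to a double integral of $|\prob(X > s,\, Y > t) - \prob(X > s)\prob(Y > t)|$, which is bounded pointwise by $\alpha_n$ by the first stage; this yields constant $1$ in the bounded non-negative case. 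Splitting $X = X^+ - X^-$ and likewise for $Y$ produces a four-term decomposition and hence constant $4$ in the signed $L^\infty$ case. Third, for general $X \in L^p$ and $Y \in L^q$, one truncates at levels $M$ and $N$, writing $X = X\mathbf{1}_{\{|X|\leq M\}} + X\mathbf{1}_{\{|X|>M\}}$ and similarly for $Y$. The bounded--bounded term is handled by the second stage, while the three tail terms are controlled by H\"older's inequality combined with the Chebyshev bound $\prob(|X| > M) \leq M^{-p}\norm{X}_p^p$. Choosing $M$ and $N$ as appropriate powers of the $L^p,L^q$ norms and of $\alpha_n$ then produces the exponent $\alpha_n^{1/r}$ with $1/r = 1 - 1/p - 1/q$ and an overall constant at most $8$.

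The main technical obstacle lies in the third stage, where careful book-keeping is needed to extract the sharp exponent $1/r$ and the constant $8$ rather than a weaker bound. The underlying mechanism is transparent — the second stage shows how measure-theoretic mixing translates into covariance estimates, and the third is the standard device of approximating $L^p$ functions by truncations and balancing the two resulting error sources — but the exact optimisation is fiddly. Since this is the classical Ibragimov covariance inequality I would simply reference the detailed computation in \cite[\S1.2]{doukhan2012mixing} rather than reproduce it.
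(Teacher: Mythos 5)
The paper offers no proof of this lemma --- it is imported directly from \cite[{\S1.2 Theorem 3}]{doukhan2012mixing} (see also \cite[\S6.2]{hennion1997limit}) and used as a black box inside the proof of Lemma \ref{mixingconditionslemma}. Your sketch is a correct outline of the classical argument lying behind that citation. The $\rho_n$ bound is indeed immediate: center $X$ and $Y$, apply the definition (which is symmetric, so the swap of the roles of $X$ and $Y$ between eq.\ \eqref{rhon} and the lemma statement is harmless), and use $\sigma(X)\le\norm{X}_2$. The $\alpha_n$ bound is the Davydov--Ibragimov covariance inequality, and your three-stage route --- indicators, then bounded variables via the layer-cake (equivalently Hoeffding's covariance identity, giving constant $4$ in the signed bounded case), then truncation at levels $M$ and $N$ combined with H\"older and Chebyshev to reach general $L^p\times L^q$ --- is precisely the mechanism in Doukhan. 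The caveat you flag is genuine: extracting the exponent $1/r$ and the constant $8$ requires a careful balancing of the truncation levels, and the constant depends on how the four terms of the truncated decomposition are grouped and estimated. Since you ultimately defer that book-keeping to Doukhan, your proposal ends up making the same move as the paper (cite rather than reproduce), while supplying a useful roadmap that the paper omits.
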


    We note that $A_n,B_n\in L^p(\mathcal{F}_{n_\alpha})$.  If $p >2$, then eq.\ \eqref{eq:xi0f} and Lemma \ref{lem:rhoandalpha} (with $q=p$ and $r=\frac{p}{p-2}$) together imply that
    $$|\avg[\xi_0f]| \ \leq \ 8\alpha_{n-n_\alpha}^{\frac{p}{p-2}}\left (\norm{A'_n}_{p} + \norm{B_n}_{p} \right ) \norm{f}_{p} + 2\avg[A_n']\norm{f}_{p} \ ,$$
    where we have used the estimate $\avg[|f|]\le \norm{f}_p$ in the last term. Eq.\ \eqref{eq:Lqbound} follows this inequality together with eqs.\ (\ref{eq:An'}, \ref{eq:AnBnLp}, \ref{eq:variationLq}). If $p=2$, then eq.\ \eqref{eq:xi0f} and Lemma \ref{lem:rhoandalpha} together imply that
    $$|\avg[\xi_0f]| \ \leq \ \rho_{n-n_\alpha} \left (\norm{A'_n}_{2} + \norm{B_n}_{2} \right ) \norm{f}_{2} + 2\avg[A_n']\norm{f}_{2} \ .$$
    Eq.\ \eqref{eq:L2bound} follows from this inequality, again using eqs.\ (\ref{eq:An'}, \ref{eq:AnBnLp}, \ref{eq:variationLq}). This completes the proof of Lemma \ref{mixingconditionslemma}.
\end{proof}

It remains to prove Proposition \ref{upperboundForLnDifferenceAtStoppingTimeTauR}:
   \begin{proof}[Proof of Proposition \ref{upperboundForLnDifferenceAtStoppingTimeTauR}]
    From \cite[Lemma 3.3]{movassagh2022ergodic}, the quantity $m(A,B)$ appearing in the definition \eqref{distance} of the metric $\dis{A}{B}$ can be expressed as
    $$ m(A,B) \ = \ \min \left \{  \frac{\tr [XA]}{\tr[XB]} \ : \ X\in \mathbb{S}_D \text{ and } \tr[XA] \neq 0 \right \} \ . $$
    Since
    $$ \frac{\norm{\psi(\phi\cdot A)}}{\norm{\psi(\phi\cdot B)}} \ = \ \frac{\tr \psi^*(\mathbb{I}) \phi \cdot A}{\tr \psi^*(\mathbb{I}) \phi \cdot B} \ = \ 
    \frac{\tr \psi^*(\frac{1}{D}\mathbb{I}) \phi \cdot A}{\tr \psi^*(\frac{1}{D}\mathbb{I}) \phi \cdot B} \ ,$$
    we see that 
    $$ m(\phi\cdot A, \phi\cdot B) \ \le \ \frac{\norm{\psi(\phi\cdot A)}}{\norm{\psi(\phi\cdot B)}} \ \leq \ \frac{1}{m(\phi\cdot B,\phi\cdot A)} \ .$$
    Since $\phi\cdot A$, $\phi\cdot B$ are positive definite (because $\phi$ is strictly positive), the various terms appearing in this inequality are all finite and non-zero. Taking logarithms yields 
    \begin{multline*}
        \big| \ln \norm{\psi\big(\phi\cdot A\big)} - \ln \norm{\psi\big(\phi\cdot B\big)}  \big| \
            \leq \  -\ln m(\phi\cdot A, \phi\cdot B) -\ln m(\phi\cdot B, \phi\cdot A)\\
        \leq \  \ln \frac{1+\dis{\phi\cdot A}{\phi\cdot B}}{1-\dis{\phi\cdot A}{\phi\cdot B}} \ \le \
             \ln \frac{1+\cnum{\phi}}{1-\cnum{\phi}} \ ,
    \end{multline*}
    where we have used the definition eq.\ \eqref{distance} of $\dis{\cdot}{\cdot}$ and Lemma \ref{equvilanceOfMatrics} to obtain $\dis{\phi\cdot A}{\phi\cdot B} \leq c(\phi)$.
    Now for $x \in [0,1)$ we have 
    $$ \frac{1+x}{1-x} \ \le \  \frac{1}{(1-x)^2}$$
    As $x = \cnum{\phi} \in (0,1)$ (since $\phi$ is strictly positive) we have that
    \begin{equation*}
        \big| \ln \norm{\psi\big(\phi\cdot A\big)} - \ln \norm{\psi\big(\phi\cdot B\big)}  \big| \ \leq \ 2\ln \frac{1}{1-\cnum{\phi}} \ . 
    \end{equation*}
    
    Now consider the convex function $f(x) = \ln 1/(1-x)$ for $x \in [0,1)$. Since $f$ is convex and $f(0)=0$, we have $f(tr) \le tf(r)$ for any $t,r\in[0,1)$. Hence, $f(\lambda) \leq f(r) \lambda/r$ for any $\lambda \in [0,r]$. Thus
    \begin{equation*}
        \big| \ln \norm{\psi\big(\phi\cdot A\big)} - \ln \norm{\psi\big(\phi\cdot B\big)}  \big| \ \leq \ c(\phi)	 \frac{2}{r}\ln\frac{1}{1-r} 			
    \end{equation*}  
    if $c(\phi)\le r$.
\end{proof}

\addcontentsline{toc}{section}{Bibliography}
\bibliographystyle{amsplain}
\bibliography{ref}

\providecommand{\bysame}{\leavevmode\hbox to3em{\hrulefill}\thinspace}
\providecommand{\MR}{\relax\ifhmode\unskip\space\fi MR }
\providecommand{\MRhref}[2]{%
  \href{http://www.ams.org/mathscinet-getitem?mr=#1}{#2}
}
\providecommand{\href}[2]{#2}
\begin{thebibliography}{10}

\bibitem{billingsley1961lindeberg}
Patrick Billingsley, \emph{The lindeberg-levy theorem for martingales},
  Proceedings of the American Mathematical Society \textbf{12} (1961), no.~5,
  788--792.

\bibitem{billingsley2008probability}
\bysame, \emph{Probability and measure}, John Wiley \& Sons, 2008.

\bibitem{birkhoff1931proof}
George~D Birkhoff, \emph{Proof of the ergodic theorem}, Proceedings of the
  National Academy of Sciences \textbf{17} (1931), no.~12, 656--660.

\bibitem{brown1971martingale}
Bruce~M Brown, \emph{Martingale central limit theorems}, The Annals of
  Mathematical Statistics (1971), 59--66.

\bibitem{cohn1993weak}
Harry Cohn, Olle Nerman, and Magda Peligrad, \emph{Weak ergodicity and products
  of random matrices}, Journal of Theoretical Probability \textbf{6} (1993),
  no.~2, 389--405.

\bibitem{conze2007limit}
Jean-Pierre Conze and Albert Raugi, \emph{Limit theorems for sequential
  expanding dynamical systems}, Ergodic Theory and Related Fields: 2004-2006
  Chapel Hill Workshops on Probability and Ergodic Theory, University of North
  Carolina Chapel Hill, North Carolina, vol. 430, American Mathematical Soc.,
  2007, p.~89.

\bibitem{cornfeld2012ergodic}
Isaac~P Cornfeld, Sergej~V Fomin, and Yakov~Grigorevich Sinai, \emph{Ergodic
  theory}, vol. 245, Springer Science \& Business Media, 2012.

\bibitem{doukhan2012mixing}
Paul Doukhan, \emph{Mixing: properties and examples}, vol.~85, Springer Science
  \& Business Media, 2012.

\bibitem{durr1986remarks}
Detlef D{\"u}rr and Sheldon Goldstein, \emph{Remarks on the central limit
  theorem for weakly dependent random variables}, Stochastic
  processes—Mathematics and physics, Springer, 1986, pp.~104--118.

\bibitem{evans1978spectral}
David~E Evans and Raphael H{\o}egh-Krohn, \emph{Spectral properties of positive
  maps on c*-algebras}, Journal of the London Mathematical Society \textbf{2}
  (1978), no.~2, 345--355.

\bibitem{furstenberg1960}
H.~Furstenberg and H.~Kesten, \emph{Products of {{Random Matrices}}}, The
  Annals of Mathematical Statistics \textbf{31} (1960), no.~2, 457--469.

\bibitem{goldsheid1989}
I~Ya Gol'dsheid and G~A Margulis, \emph{Lyapunov indices of a product of random
  matrices}, Russian Mathematical Surveys \textbf{44} (1989), no.~5, 11--71.

\bibitem{gordin1969central}
Mikhail~Iosifovich Gordin, \emph{The central limit theorem for stationary
  processes}, Doklady Akademii Nauk, vol. 188, Russian Academy of Sciences,
  1969, pp.~739--741.

\bibitem{hennion1997limit}
H~Hennion, \emph{Limit theorems for products of positive random matrices}, The
  Annals of Probability (1997), 1545--1587.

\bibitem{ibragimov1963central}
Il'dar~Abdullovich Ibragimov, \emph{A central limit theorem for one class of
  dependent random variables}, Teoriya Veroyatnostei i ee Primeneniya
  \textbf{8} (1963), no.~1, 89--94.

\bibitem{kingman1976subadditive}
John~FC Kingman, \emph{Subadditive processes}, Ecole d'Et{\'e} de
  Probabilit{\'e}s de Saint-Flour V-1975, Springer, 1976, pp.~167--223.

\bibitem{kingman1973subadditive}
John Frank~Charles Kingman et~al., \emph{Subadditive ergodic theory}, Annals of
  Probability \textbf{1} (1973), no.~6, 883--899.

\bibitem{kreinLinearOperatorsLeaving1950}
M.~G. Krein and M.~A. Rutman, \emph{Linear operators leaving invariant a cone
  in a {{Banach}} space}, American Mathematical Society Translations
  \textbf{1950} (1950), no.~26, 128. \MR{0038008}

\bibitem{liggett1985improved}
Thomas~M Liggett, \emph{An improved subadditive ergodic theorem}, The Annals of
  Probability \textbf{13} (1985), no.~4, 1279--1285.

\bibitem{liverani1996central}
Carlangelo Liverani, \emph{Central limit theorem for deterministic systems},
  Pitman Research Notes in Mathematics Series (1996), 56--75.

\bibitem{movassagh2021theory}
Ramis Movassagh and Jeffrey Schenker, \emph{Theory of ergodic quantum
  processes}, Physical Review X \textbf{11} (2021), no.~4, 041001.

\bibitem{movassagh2022ergodic}
\bysame, \emph{An ergodic theorem for quantum processes with applications to
  matrix product states}, Communications in Mathematical Physics (2022), 1--22.

\bibitem{oseledec1968multiplicative}
V~Oseledec, \emph{A multiplicative ergodic theorem, characteristic lyapnov
  exponents of dynamical systems (transactions of the moscow mathematical
  society, 19)}, American Mathematical Society, Providence, RI (1968).

\bibitem{walters2000introduction}
Peter Walters, \emph{An introduction to ergodic theory}, vol.~79, Springer
  Science \& Business Media, 2000.

\end{thebibliography}


\end{document}